\newcommand{\citet}[1]{\citeauthor{#1}~(\citeyear{#1})}
    \newcolumntype{L}{>{\raggedright\arraybackslash}X}
\numberwithin{equation}{section}
\font\myfont=cmr12 at 18pt
\newtheorem{theorem}{Theorem}[section] 
\newtheorem{lemma}{Lemma}[section]
\newtheorem{corollary}{Corollary}[section]
\renewcommand\thefootnote{} 
\begin{document}

\title{\myfont{No-Regret Learning for Stackelberg Equilibrium Computation in Newsvendor Pricing Games}}

\date{} 

\author{
    Larkin Liu\footnote{Technical University of Munich; \texttt{larkin.liu@tum.de}}  \\
    \and
    Yuming Rong\footnote{Google; \texttt{ryrong@google.com}} \\
}
\maketitle
\vspace{-1.4cm}

\begin{abstract}

We introduce the application of online learning in a Stackelberg game pertaining to a system with two learning agents in a dyadic exchange network, consisting of a supplier and retailer, specifically where the parameters of the demand function are unknown. In this game, the supplier is the first-moving leader, and must determine the optimal wholesale price of the product. Subsequently, the retailer who is the follower, must determine both the optimal procurement amount and selling price of the product. In the perfect information setting, this is known as the classical \textit{price-setting Newsvendor} problem, and we prove the existence of a unique Stackelberg equilibrium when extending this to a two-player pricing game. In the framework of online learning, the parameters of the reward function for both the follower and leader must be learned, under the assumption that the follower will best respond with optimism under uncertainty. A novel algorithm based on contextual linear bandits with a measurable uncertainty set is used to provide a confidence bound on the parameters of the stochastic demand. Consequently, optimal finite time regret bounds on the Stackelberg regret, along with convergence guarantees to an approximate Stackelberg equilibrium, are provided. 

\renewcommand\thefootnote{} 
\footnotetext{A extended abstract was accepted at the \textit{8th International Conference on Algorithmic Decision Theory} (ADT 2024).}
\renewcommand\thefootnote{\arabic{footnote}} 






\end{abstract}

\noindent\textbf{\textit{Keywords:}} Stackelberg Games, Online Learning, Dynamic Pricing


\section{Introduction}

We study the problem of dynamic pricing in a two-player general sum Stackelberg game. An economic interaction between learning agents in a dyadic exchange network is modelled as a form of economic competition between two firms in a connected supply chain. Economic competition can arise between self-interested firms in the intermediate stages of a supply chain, especially when they are non-centralized or not fully coordinated, operating as individual entities \cite{perakis:2007price} \cite{cachon:1999competitive-supply-chain}. 

From a single agent perspective, online learning for dynamic pricing has been studied extensively, and known algorithms exist guaranteeing sub-linear regret \cite{goyal:2021mnl} \cite{bu:2022context-dyn-pric} \cite{babaioff:2015dyp-lim-supply}. \citet{kleinberg:2003value} first presented the problem of demand learning, where the firm must optimally learn the parameters of demand in an online learning setting. From a single-agent perspective, this has been previously studied in \citet{bertsimas:2006dynamicpricing} and \citet{pk:2001dynamicpricing}, to name a few. When more than one agent is involved, the dynamic pricing environment becomes more complex. \citet{braverman:2018selling-noregret} investigates the problem setting where a seller is selling to a regret-minimizing buyer. Under the assumption that the buyer is running a contextual bandit algorithm, a reward-maximizing algorithm for the seller upper-bounded by a margin above the monopoly revenue is provided. \citet{goyal:2021mnl} applied the use of linear contextual bandits to model multinomial logit demand in the multi-assortment dynamic pricing problem. \citet{bu:2022context-dyn-pric} investigates online learning for dynamic pricing with a partially linear demand function. Multi-agent dynamic pricing has also been previously studied in \citet{dasgupta:2000-dyp} and \citet{kutschinski:2003-ma-dyp}, to name a few. Nevertheless, these aforementioned works do not consider the context of agents in a supply chain setting, where the inventory policy has an impact on the expected profits.

Dynamic pricing can also be considered in scenarios with limited supply \cite{babaioff:2015dyp-lim-supply}, thereby limiting the retailer from profiting from the theoretical revenue maximizing price for the product. In situations where the retailer must order in advance the product to be sold, and bear the risk of limited supply, or supply surplus, this is known as a \textit{Newsvendor} firm \cite{arrow:1951newsboy} \cite{mills:1959uncertainty} \cite{hadley:1963-analysis-inven}. Specifically, learning and optimization within the Newsvendor framework have also been studied in \citet{ban:2019big-data}. When the Newsvendor must also dynamically price the item, this is referred to as the \textit{price-setting Newsvendor} \cite{jammernegg:2013pricesettingnewsvendor} \cite{petruzzi:1999newsv} \cite{mills:1959uncertainty}, and methods to compute such optimal solutions under perfect information were first provided in \citet{mills:1959uncertainty}. We further extend the well-known problem of online learning in a single agent dynamic pricing environment to a two agent competitive learning environment. One example of such competition is the rivalry between independent self-interested suppliers and retailers within a supply chain. Research by \citet{perakis:2007price} and \citet{ma:2022-poascm} has demonstrated that this type of competition can lead to a degradation in the general social welfare. 

In our case we model the competition in a dyadic exchange network as a repeated Stackelberg game, which finds numerous applications in market economics \cite{defraja:1990-game-theory-stack} \cite{anderson:1992-stackelberg-v-cournot} and supply chain analysis \cite{noh:2019-stack-supp-chain}. In previous work within the domain of supply chain management, the elements of parameter learning and regret minimization remain under-investigated. Specifically, in scenarios where suppliers and retailers engage in myopic supply chain competition aimed at profit maximization through the exchange of demand function information, as in \citet{zhang:2013-infoshare}, equilibrium is often established straightforwardly under a perfect information setting. Most saliently, the ramifications of worst-case scenarios and suboptimal deviations in the face of uncertainty remain wholly unquantified. Therefore, addressing this challenge with mathematical rigor constitutes one of the focal points of our work.

In the framing of online learning for a supply chain Stackelberg game, \citet{cesa:2022-supp-chain-games} considered Newsvendor competition between supplier and retailer, under perfect information and limited information settings. Utilizing bandit learning techniques, such as \textit{explore-then-commit} (ETC) and \textit{follow-the-leader} (FTL), they are able to produce bounds on regret under mild assumptions \cite{garivier:2016-etc}. 

Our supply chain game consists of a leader (supplier) and a follower (retailer). The leader dynamically prices the item to sell to the follower, and the follower must thereafter dynamically price the item for the market. 

The objective of this repeated game is to determine the optimal policy to maximize the expected profit in the face of stochastic demand for the leader, and determine the best response of the follower (under the assumptions in Section \ref{sec:assumptions}), leading to an approximate Stackelberg equilibrium. The rationale behind this is that, often times, decisions between firms competing with one another do not occur simultaneously, but rather turn-based. We define this \textit{Newsvendor pricing game} more concretely in Section \ref{sec:newsvendor-games}.

In most cases, given the known parameters of the model, the best response of the follower can be computed for any of the leader's first action. Assuming rationality of both agents, it follows from the best response of the follower that the optimal policy of the leader can be computed. However, the sample-efficient learning of the parameters of the joint reward function presents a genuine problem for Stackelberg gamess, as the leader is learning both the joint reward function of the system and computing the best response of the follower. To increase the sample efficiency of parameter learning in Stackelberg games, bandit algorithms have been proposed to address this problem \cite{bai:2021-stackel} \cite{fiez:2019-stack-learn} \cite{fiez:2020-stack-implicit-learn} \cite{balcan:2015-stack-learn-sec}. Our paper proposes online learning algorithms for Stackelberg games in a dyadic exchange network utilizing linear contextual bandits \cite{abbasi:2011ofu} \cite{chu:2011-linucb}, borrowing bounds on misspecification from these bandit methods to achieve theoretical guarantees on the behaviour of the algorithm in terms of regret and optimality, later discussed in Section \ref{sec:learning-algorithms}.

\subsection{Key Contributions}
    
    Our study integrates established economic theory with online learning theory within a practical game theoretic framework. Specifically, we employ a linear contextual bandit algorithm within a Stackelberg game to address a well-defined economic scenario. The primary technical challenge lies in optimizing for both contextual and Stackelberg regret, particularly when facing stochastic environmental parameters, as well as uncertainty in agent strategies. Incorporating economic theory into online learning theory is inherently challenging due to the unique characteristics of reward functions and best response functions, as expounded in Section \ref{sec:econ-newsv-theory}. In some cases, closed-form solutions are unattainable, requiring innovative approaches for optimization and bounding functions, as expounded in Section \ref{sec:leader_stackelberg_regret}, in order to derive theoretical worst-case bounds on regret.

    To summarize, we provide technical contributions in the following areas:
    
    \begin{enumerate}
        \item We prove the existence of a unique Stackelberg equilibrium under perfect information for a general sum game within a \textit{Newsvendor pricing game} (described in Section \ref{sec:newsvendor-games}).        
        \item An online learning algorithm for the Newsvendor pricing game, leveraging stochastic linear contextual bandits, is outlined, which is integrated with established economic theory.
        \item The convergence properties of an online learning algorithm to approximate Stackelberg equilibrium, as well theoretical guarantees for bounds on finite-time regret, are provided.
        \item We demonstrate our theoretical results via economic simulations, to show that our learning algorithm outperforms baseline algorithms in terms of finite-time cumulative regret.
    \end{enumerate}

\section{Model Formulation} \label{sec:model-formulation}

\textbf{Definitions:} We define two players in a repeated Stackelberg game. Player A is the leader, she acts first with action $\mathbf{a}$. Player B is the follower, he acts second with action $\mathbf{b}$. The follower acts in response to the leaders action, and both players earn a joint payoff as a function of the reward function. In the abstract sense, $\mathbf{a}$ and $\mathbf{b}$ denote the actions of the leader and follower in action spaces $\mathcal{A}$ and $\mathcal{B}$ respectively. To be precise, given an action space with dimension $\mathbbm{R}^d$, $\mathbf{a}$ and $\mathbf{b}$ are vectors of different dimensions (we later illustrate that $\mathcal{A} \subseteq \mathbbm{R}^1$ and $\mathcal{B} \subseteq \mathbbm{R}^2$ in Section \ref{sec:rules_of_npg}). This is due to the fact that the leader's and follower's actions do not necessarily follow the same form.


\subsection{Repeated Stackelberg Games}


In a repeated Stackelberg game, the leader takes actions $\mathbf{a} \in \mathcal{A}$, and the follower takes actions $\mathbf{b} \in \mathcal{B}$ in reaction to the leader. The joint action is represented as $(\mathbf{a}^t, \mathbf{b}^t)$ for each time step $t$ from $1$ to $T$. The joint utility (reward) functions for the leader and follower are denoted as $\mathcal{G}_A(\mathbf{a}^t, \mathbf{b}^t)$ and $\mathcal{G}_B(\mathbf{a}^t, \mathbf{b}^t)$, respectively. From an economic standpoint, the application of pure strategies aligns with the prevalent strategic behaviours observed in many firms, emphasizing practicality and realism \cite{porter:1980techniques_competition} \cite{tirole:1988theory_industrial_org}.

\textbf{Strategy Definition:} Let $\pi_A$ and  $\pi_B$ denote the pure strategies of both agents represented as standard maps,
  
\begin{align}
    \pi_A(t) : \mathcal{F}_t \to \mathcal{A}, \quad \pi_B(\mathbf{a}) : \mathcal{A} \to \mathcal{B}, \quad t \in \mathbb{Z} \cap \{ 1 \leq t \leq T \} \label{eq:strategy_def_agents}
\end{align}

The strategy of the leader at time  $t$ is denoted by $\pi_A(t)$, which maps from the filtration $\mathcal{F}_t$ to the set of possible actions $\mathcal{A}$. Where the filtration $\mathcal{F}_t$ consists of all possible $\sigma$-algebras of observations up until time $t$, i.e., $\mathcal{F}_t \equiv \{(\mathbf{a}^1, \mathbf{b}^1), \ldots, (\mathbf{a}^t, \mathbf{b}^t)\}$. To clarify, $t$ belongs to the set of integers $\mathbb{Z}$ such that $1 \leq t \leq T$, where $T$ is a fixed finite-time endpoint. The action taken by the leader at time $t$, denoted as $\mathbf{a}^t \in \mathcal{A}$, is determined by the leader's strategy $\pi_A(t)$, which depends on the observed actions up to time $t$. Therefore, $\pi_A$ is a morphism from the set of filtrations to the set of actions in discrete time, $\pi_A : \{\mathcal{F}_1, \ldots, \mathcal{F}_T\} \mapsto \{\mathbf{a}^1, \ldots, \mathbf{a}^T\}$. On the other hand, the strategy of the follower is formulated as a response to the leader's action; therefore, it is a morphism, $\pi_B : \{ \mathbf{a}^1, \dots, \mathbf{a}^T \} \mapsto \{ \pi_B(\mathbf{a}^1), \dots, \pi_B(\mathbf{a}^T) \}$.


 \textbf{Best Response Function:} To be specific, $\mathcal{G}_B(\mathbf{a}^t, \mathbf{b}^t)$ is expressed as a parametric function with exogenous sub-Gaussian noise (the exact parametric form for $\mathcal{G}_B(\mathbf{a}^t, \mathbf{b}^t)$ is provided in Eq. \eqref{eq:profit-cases}). Conversely, for the leader, $\mathcal{G}_A(\mathbf{a}^t, \mathbf{b}^t)$ is a deterministic function solely driven by the direct action of the leader $\mathbf{a}^t$ followed by the reaction of the follower $\mathbf{b}^t$. At each discrete time interval, the best response of the follower, $\mathfrak{B}(\mathbf{a}^t)$, is defined as follows:


\begin{equation}
    \mathfrak{B}(\mathbf{a}^t) = \underset{\mathbf{b} \in \mathcal{B}}{\mathrm{argmax}} \ \mathbbm{E} [\mathcal{G}_B(\mathbf{a}^t, \mathbf{b})] \label{eq:br-def}
\end{equation}

Eq. \eqref{eq:br-def} makes the assumption that there exists a unique best response to $\mathbf{a}^t$. We demonstrate the uniqueness of $\mathfrak{B}(\mathbf{a}^t)$ later in Theorem \ref{thm:optimal-order-b}. $\mathfrak{B}(\mathbf{a})$ is the best response function of $\mathbf{b}$ to $\mathbf{a}$, as defined in Equation \eqref{eq:br-def} under perfect information. Given $\mathfrak{B}(\mathbf{a})$ is unique, the objective for the leader is to maximize her reward based on the expectation of the follower's best response. In a Stackelberg game, the leader takes actions, from time $\{ \mathbf{a}^1, \dots, \mathbf{a}^T \}$, and receives rewards $\{ \mathcal{G}_A(\mathbf{a}^1, \pi_B(\mathbf{a}^1)), \dots, \mathcal{G}_A(\mathbf{a}^T, \pi_B(\mathbf{a}^T)) \}$, as the follower is always reacting to the leader with his perfect or approximate best response, he receives rewards $\{ \mathcal{G}_B(\mathbf{a}^1, \pi_B(\mathbf{a}^1)), \dots, \mathcal{G}_B(\mathbf{a}^T, \pi_B(\mathbf{a}^T)) \}$. We shall use the notation $(\pi_A, \pi_B)$ to represent the joint strategy of both players.


\textbf{Contextual Regret (Follower):} In the bandit setting with no state transitions, the leader's strategy $\pi_A$ represents a sequence consisting of actions, or contexts, $\mathbf{a}^t$, $\forall t \in \{1, \dots, T \}$ for the follower. The contextual regret is defined as, 



\begin{align}
    R_B^T(\pi_A, \pi_B) &= \mathbbm{E}_\theta \Big[ \sum^T_{t = 1} \mathcal{G}_B(\mathbf{a}^t, \mathfrak{B}(\mathbf{a}^t)) - \sum^T_{t = 1} \mathcal{G}_B(\mathbf{a}^t, \mathbf{b}^t) \Big]  \label{eq:follower-regret}
\end{align}

Contextual regret, defined as $R_B^T(\cdot)$ in Eq. \eqref{eq:follower-regret}, is the difference between cumulative follower rewards from the best-responding follower and any other strategy adhered to by the follower in response to a series of actions form the leader, $\{ \mathbf{a}^1, \dots, \mathbf{a}^T \}$. Given context $\mathbf{a}^t$ and environmental parameters $\theta$, the follower attempts to maximize his rewards via some no-regret learner (we choose an optimistic stochastic linear contextual bandit algorithm as outlined in Section \ref{sec:learning-algorithms}). $\mathbbm{E}_{\theta}[\mathcal{G}_B(\cdot)]$ indicates that expectation over uncertainty is with respect to the stochastic demand environment, governed by parameter $\theta$.



\textbf{Stackelberg Regret (Leader):} The Stackelberg regret, defined as $R_A^T(\cdot)$ for the leader, is the difference in cumulative rewards when players optimize their strategies in hindsight, versus via any joint strategy $(\pi_A, \pi_B)$.  \textit{Stackelberg regret}, also known as the \textit{external regret} \cite{haghtalab:2022-stack-non-myo} \cite{zhao:2023-online} is defined from the leader's perspective in Eq. \eqref{eq:leader-regret-def} as the cumulative difference in leader's rewards between the maximization of $\mathcal{G}_A(\cdot)$ under a best responding follower, versus any other joint strategy adhered to by both agents. This definition assumes that the rational leader must commit to a strategy $\pi_A$ at the beginning of the game, which she believes is optimal on expectation. The leader maximizes their reward taking into account their anticipated uncertainty of the follower's response $\mathbf{b}^t$.  When computing the expectation of $\mathcal{G}_A(\cdot)$, the randomness of $\mathcal{G}_A(\cdot)$ is fully strategic, that is, it arises from the follower's action according to the strategy $\mathbf{b}^t \sim \pi_B$. Our algorithm seeks to minimize the Stackelberg regret, constituting a no-regret learner from the leader's perspective.


\begin{align}
    R_A^T(\pi_A, \pi_B) &= \mathbbm{E}_{\pi_B} \Big[\sum_{t = 1}^{T} \underset{\mathbf{a} \in \mathcal{A} }{\mathrm{max}} \ \mathcal{G}_A(\mathbf{a}, \mathfrak{B}(\mathbf{a}^t)) - \sum_{t = 1}^{T} \, \mathcal{G}_A(\mathbf{a}^t, \mathbf{b}^t)\Big] \label{eq:leader-regret-def}
\end{align}

For convenience of notation, moving forward, we will establish the equivalences for $\mathbbm{E}_{\pi_B}[\mathcal{G}_A(\cdot)] \equiv \mathbbm{E}[\mathcal{G}_A(\cdot)]$ and $\mathbbm{E}_\theta[\mathcal{G}_B(\cdot)] \equiv \mathbbm{E}[\mathcal{G}_B(\cdot)]$.

\subsection{Literature Review on Learning in Stackelberg Games}

\begin{table*}\centering 
    \small
    \begin{tabular}{p{1.0in}p{1.0in}p{0.35in}p{1.1in}p{0.6in}p{1.0in}} \toprule
        \textbf{Source} & \textbf{Prob. Setting} & \textbf{Info.} & \textbf{Methodology} & \textbf{Theorem} & \textbf{Regret} \\ \midrule
       \citet{zhao:2023-online}  & Noisy omniscient follower. with noise& Partial Info. & UCB with expert information. & Thm. 3.3 & $\mathcal{O}(\hat{\epsilon} T + \sqrt{ N(\hat{\epsilon}) T})$ \\ \midrule
       \citet{haghtalab:2022-stack-non-myo}  & Stackelberg Security Game.  & Partial Info. & Batched binary search, with $\gamma$ discounting, and delayed feedback. & Thm. 3.13 & $\mathcal{O}(\log(T) + T_\gamma \log^2{(T_\gamma)})$  \\ \midrule
       \citet{haghtalab:2022-stack-non-myo}  & Dynamic pricing.  & Partial Info. & Successive elimination with upper-confidence bound. & Thm. 4.7 & $\mathcal{O}(\sqrt{T \log(T)} + T_\gamma \log^2(T_\gamma T) )$ or $\mathcal{O}(\log(T) + T)$  \\ \midrule
       \citet{flaxman:2004-online} & Convex joint reward function* & Partial Info.  & Gradient approximation. & Thm. 2 & $\mathcal{O}(P(T) + \log^{3/4}{(T)})$\\ \midrule
       \citet{chen:2019-grinding} & Spam Classification & Full Info.  & Adaptive polytope partitioning. & Thm. 4.1 & $\mathcal{O}( \sqrt{T \log{(T)}})$\\ \midrule
       \citet{balcan:2015-stack-learn-sec} & Stackelberg Security Game & Full Info.  & Follow the leader. & Thm. 5.1 & $\mathcal{O}(\sqrt{T} )$\\ \bottomrule
    \end{tabular} \caption{Worst case regret comparison among learning algorithms for Stackelberg games.} \label{tab:regret_summary}
\end{table*}

We classify game theoretic learning as consisting of two key aspects. The first is the learning of the opposing player's strategy. In particular, the strategy of the follower can be optimal, in the case of an \textit{omniscient follower}, or $\epsilon$-optimal \cite{chen:2019-grinding} \cite{balcan:2015-stack-learn-sec} \cite{gan:2023-robust-optimal}, in the case of an imperfect best response stemming from hindered information sharing or suboptimal follower policies. In this setting, the parameters of the reward function $\theta^*$ are known, and the players are learning an approximate or perfect equilibrium strategy. Notably, \citet{balcan:2015-stack-learn-sec} demonstrated that there always exists an algorithm that will produce a worst case leader regret bound of $\mathcal{O}(\log{(T)})$. In another example, \citet{chen:2019-grinding} presents a method where polytopes constructed within the joint action space of the players can be adaptively partitioned to produce a learning algorithm achieving sublinear regret under perfect information, demonstrated in a spam classification setting.

The second aspect of online learning in SG's pertains to the learning of the parameters of the joint reward function $\theta^*$ which corresponds to $\mathcal{G}_B(\cdot)$ and influences $\mathcal{G}_A(\cdot)$. In this environment, $\theta^*$ is unknown or only partially known, and must be learned over time in a bandit setting. Several approaches address this problem setting. For example, a worst case regret bound of $\mathcal{O}(\hat{\epsilon} T + \sqrt{ N(\hat{\epsilon}) T})$ was proven in \citet{zhao:2023-online}, where the authors use a $\hat{\epsilon}$-covering method to achieve the bound on regret. Where $N(\hat{\epsilon})$ is the $\hat{\epsilon}$-covering over the joint action space. \citet{haghtalab:2022-stack-non-myo} produce a worst-case regret bound of $\mathcal{O}(\log(T) + T_\gamma \log^2{(T_\gamma)})$ by leveraging batched binary search in a demand learning problem setting. Furthermore, they investigate the problem of a non-myopic follower, where the anticipation of future rewards ameliorated by a discount factor $\gamma$, induces the follower to suboptimally best respond. $T_\gamma = 1 / (1 - \gamma)$ represents a parameter proportional to the delay in the revelation of the follower's actions by a third party. Under this mechanism, the regret is proportional to both $T$ as well as $\gamma$, where for very small values of $\gamma$ (close to non-discounting), the $T_\gamma$ could potentially dominate the regret. \citet{flaxman:2004-online} provide a bound for a convex joint reward function using approximate gradient descent, where $P(T)$ represents a polynomial function with respect to $T$. Note that in the partial information setting, the bound on worst-case leader regret is typically superlinear.



\subsection{Approximate Stackelberg Equilibrium} \label{sec:stack-eq-defn}

The \textit{Stackelberg Equilibrium} of this game is defined as any joint strategy which,

\begin{align}
    \mathbf{a}^* &= \underset{\mathbf{a} \in \mathcal{A} }{\mathrm{argmax}} \ \mathop{\mathbb{E}}[\mathcal{G}_A(\mathbf{a}, \mathfrak{B}(\mathbf{a}))]
\end{align}

Where $\mathfrak{B}(\mathbf{a})$ is the best response of the follower, as defined in Eq. \eqref{eq:br-def}. Let $\Delta_B$ denote a probability simplex over possible strategies $\pi_B$, thus the expectation on $\mathbbm{E}_{\pi_B}[\cdot]$ is with respect to $\pi_B$. From the follower's perspective, suppose he adopts an approximate best response strategy $\hat{\pi}_B \equiv \hat{\pi}_B(\mathbf{a})$. This implies that there exists some $\epsilon_B \geq 0$, such that,

\begin{align}
     \mathbbm{E}_\theta [ \mathcal{G}_B(\mathbf{a}, \hat{\pi}_B) ] \geq \mathbbm{E}_\theta [\mathcal{G}_B(\mathbf{a}, \mathfrak{B}(\mathbf{a}))] - \epsilon_B, \quad \forall \mathbf{a} \in \mathcal{A}, \quad \forall \pi_B(\cdot) \in \Delta_B
\end{align}

This formalization allows for an $\epsilon$ deviation, potentially making the problem computationally tractable, particularly when optimization over certain functions lacks simple closed-form solutions or is computationally intensive (as we explore in detail in Section \ref{sec:econ-newsv-theory}). It's worth noting that the expectation for the leader reflects uncertainty about the follower's strategy, while the expectation for the follower relates to the randomness of the environment. Acknowledging the presence of the error of best response function resulting from the possibility of the follower to act imperfectly, or strategically with purpose influence the leader's strategy, we propose an \textit{$\epsilon$-approximate equilibrium} with full support, \cite{roughgarden:2010-algorithmic} \cite{daskalakis:2009complexity}.

\textbf{Upper-Bounding Approximation Function:} Let $\mathcal{U}_B(\mathbf{a}, \mathbf{b})$ represent a Lipschitz function, $\mathcal{U}_B : \mathbf{a} \times \mathbf{b} \to \mathbb{R}^+$ such that it is consistently greater than or equal to $\mathbbm{E}[\mathcal{G}_B(\mathbf{a}, \mathbf{b})]$. The selection of this upper-bounding function is appropriate within our problem framework for two key reasons. Firstly, it leverages on established theory on economic pricing, allowing us to derive a rational best response for the follower (see Section \ref{sec:econ-newsv-theory}). Secondly, its adoption aligns with our implementation of an optimistic bandit algorithm, ensuring an inherent tendency to overestimate reward potential in the face of uncertainty.


\begin{align} 
    \mathcal{U}_B(\mathbf{a}, \mathbf{b}) &\geq \mathbbm{E} [\mathcal{G}_B(\mathbf{a}, \mathbf{b})], \ \forall \mathbf{a} \in \mathcal{A}, \ \mathbf{b} \in \mathcal{B} \label{eq:u_def}
\end{align}

Therefore, we define the approximate follower best response as,

\begin{align} \label{eq:br_approx_def}
    \widetilde{\mathfrak{B}}(\mathbf{a}) &= \underset{b \in \mathcal{B}}{\mathrm{argmax}} \ \mathcal{U}_B(\mathbf{a}, \mathbf{b})
\end{align}

In accordance with Eq. \eqref{eq:u_def}, we note that the follower reward utilizing the approximate best response is less than that of the optimal best response as a consequence, where $\mathop{\mathbb{E}} [\mathcal{G}_B(a, \widetilde{\mathfrak{B}}(a))] \leq \mathop{\mathbb{E}} [\mathcal{G}_B(a, \mathfrak{B}(a))]$. Furthermore, in our problem definition, the follower approximates his best response with $\widetilde{\mathfrak{B}}(\mathbf{a})$ such that, across its domain of $\mathbb{E}[\mathcal{G}_B(\cdot)]$, the image of the approximating function, $\mathcal{U}(\cdot)$, is greater or equal than the image of $\mathop{\mathbb{E}} [\mathcal{G}_B(\cdot)$]. The motivation behind this is that, $\mathcal{U}(\cdot)$ can represent a simple parametric function with a known analytical solution, whereas $\mathop{\mathbb{E}}[\mathcal{G}_B(\cdot)]$ could represent an more elaborate function, with no known exact analytical solution. Therefore it is possible that, $\mathcal{U}(\cdot)$ and $\mathbb{E}[\mathcal{G}_B(\cdot)]$ may not necessarily belong to the same parametric family, nor adhere to the same functional form. 

\textbf{Approximate SE:} As a consequence, we can therefore infer that there exists some additive $\epsilon_B \geq 0$ which upper-bounds this approximation error in the image of $\mathbb{E}[\mathcal{G}_B(\cdot)]$. We define an approximate $\epsilon$-\textit{Stackelberg equilibrium} with $\epsilon_B$ approximation error the follower's perspective as follows,

\begin{equation} \label{eq:approx_se_defn_b}
      \mathop{\mathbb{E}} [\mathcal{G}_B(\mathbf{a}, \widetilde{\mathfrak{B}}(\mathbf{a}))] \leq \mathop{\mathbb{E}} [\mathcal{G}_B(\mathbf{a}, \mathfrak{B}(\mathbf{a}))] \leq \mathop{\mathbb{E}} [\mathcal{G}_B(\mathbf{a}, \widetilde{\mathfrak{B}}(\mathbf{a}))] + \epsilon_B , \quad \forall \mathbf{a} \in \mathcal{A}
\end{equation}

Effectively, $\epsilon_B$ defines an upper bound of the range on where $\mathop{\mathbb{E}} [\mathcal{G}_B(\mathbf{a}, \mathfrak{B}(\mathbf{a}))]$ could lie, given the approximate best response under perfect information, defined as $\widetilde{\mathfrak{B}}(\mathbf{a})$. $\epsilon_B$ represents the worst case additive approximation error of $\mathbbm{E}[\mathcal{G}_B(\mathbf{a}, \mathfrak{B}(\mathbf{a}))]$, where $\epsilon_B > 0$. 

\begin{align} 
    \epsilon_B &= \underset{\mathbf{a} \in \mathcal{A}}{\mathrm{sup}} \ \Big( \mathbbm{E}[\mathcal{G}_B(\mathbf{a}, \mathfrak{B}(\mathbf{a}))] - \mathbbm{E}[\mathcal{G}_B(\mathbf{a}, \widetilde{\mathfrak{B}}(\mathbf{a}))] \Big), \quad \epsilon_B > 0 \label{eq:br_approx_def_eps_b}
\end{align}


The suboptimality of $\mathop{\mathbb{E}} [\mathcal{G}_B(\mathbf{a}, \widetilde{\mathfrak{B}}(\mathbf{a}))] $ is attributed to the error in the approximation of the theoretically optimal response $\mathfrak{B}(\mathbf{a})$ with $\widetilde{\mathfrak{B}}(\mathbf{a})$, in response to the leader's action $\mathbf{a}$ (the economic definitions of $\mathfrak{B}(\cdot)$, $\mathcal{U}_B(\cdot)$ and $\mathcal{G}_B(\cdot)$ are later illustrated in Section \ref{sec:newsvendor-games}). In our learning scenario, the follower has access to $\mathcal{U}(\cdot)$, which shares the same parameters but differs from the functional form of $\mathbb{E}[\mathcal{G}_B(\cdot)]$. This allows us devise a learning algorithm which converges to $\epsilon_B$. Thus, any algorithm which learns an $\epsilon$-approximate Stackelberg equilibrium, is formally defined as an algorithm that converges to an equilibrium solution with $\epsilon_B$ approximation error, as defined in Eq. \eqref{eq:br_approx_def_eps_b}.

\subsection{Multi-Agent Economic Setting - the Newsvendor Pricing Game (NPG)} \label{sec:newsvendor-games}


We model the two learning agents in a \textit{Newsvendor pricing game}, involving a supplier $A$ and a retailer $B$. The leader, a supplier, is learning to dynamically price the product for the follower, a retailer, aiming to maximize her reward. To achieve this, the follower adheres to classical Newsvendor theory, which involves finding the optimal order quantity given a known demand distribution before the realization of the demand. We provide further economic details in Section \ref{sec:econ-newsv-theory}.

Differing from the closely related work of \citet{cesa:2022-supp-chain-games}, we do make some relaxations to the problem in their setting. Firstly, we assume no production cost for the item, which provides a non-consequential change to the economic theory. Secondly, we assume full observability of the realized demand of both the supplier and retailer. This is commonplace in settings where a disclosure of the realized sales must be disclosed to an upstream supplier, which could occur as a stipulation as a part of the supplier contract, and/or when the sales ecosystem is part of an integrated eCommerce platform.

Nevertheless, our problem setting further sophisticates the solution from \citet{cesa:2022-supp-chain-games} in multiple ways. Firstly, in addition to determining the optimal order amount (known as the \textit{Newsvendor} problem), the follower must also dynamically determine the optimal retail price for the product in the face of demand uncertainty (known as the \textit{price-setting Newsvendor}). Thereby, we introduce the problem of demand learning coupled with inventory risk. Secondly, in our problem setting, although the linear relation of price to demand $p \sim d_\theta(p)$ is assumed, the distribution of neither demand nor market price is explicitly given. Third, we apply the optimism under uncertainty linear bandit (OFUL) \cite{abbasi:2011ofu} to the Stackelberg game, whereas in \citet{cesa:2022-supp-chain-games} the \textit{Piyavsky-Schubert} \cite{bouttier:2020PV} algorithm is applied. Lastly, we provide Stackelberg regret guarantees on the cumulative regret, whereas \citet{cesa:2022-supp-chain-games} provides guarantees on the simple regret.

\subsubsection{Rules of the Newsvendor Pricing Game} \label{sec:rules_of_npg}

Moving forward, and away from abstraction, we explicitly denote $a \equiv \mathbf{a} \in \mathbbm{R}^1$, and $\mathbf{b} \equiv [b, p]^\intercal \in \mathbbm{R}^2$. Where $a$ denotes wholesale price from the supplier firm, $p$ and $b$ denote the retail price and order amount of the retail firm.

\begin{enumerate}
    \setlength\itemsep{-0.1em}
    \item The supplier selects wholesale price $a$, and provides it to the retailer. \label{item:supplier-contract-price}
    \item Given wholesale cost $a$, the retailer reacts with his best response $[b, p]^\intercal$, consisting of retail price $p$, and order amount $b$.
    \item As the retailer determines the optimal order amount $b$, he pays $\mathcal{G}_A(a, b) = ab$ to the supplier.
    \item At time $t$, nature draws demand $d^t \sim d_\theta(p)$, and it is revealed to the retailer.
    \item The retailer makes a profit of $\mathcal{G}_B(a, b) = p \ \min \{ d^t, b \} - ab$. \label{item:retail-profit}
    \item Steps \ref{item:supplier-contract-price} to \ref{item:retail-profit} are repeated for $t \in 1 ... T$ iterations.
\end{enumerate}

\subsubsection{Economic Game Assumptions} \label{sec:assumptions}

From the \textit{price setting Newsvendor}, the majority of the assumptions from \citet{mills:1959uncertainty} are inherited as standard practice. In our current problem setting, we explicitly state the assumptions of the \textit{Newsvendor pricing game} as follows: 

\begin{enumerate} 
    \setlength\itemsep{-0.1em}
    \item Expected demand, $\Gamma_\theta(p)$, is an additive linear model, as illustrated later in Eq. \eqref{eq:demand-theta-func}.
    \item Inventory is perishable, and we assume no goodwill costs and/or salvage value.
    \item Market demand, $d^t$, and action history, $(a^t, b^t$, $p^t)$, is fully observable to both retailer and supplier.
    \item The follower is rational and greedy, he maximizes for his immediate reward under the available information and best response computation.
    \item No deviation, stochastic or otherwise, exists in $b^t$, denoting the amount of goods ordered and goods received, from the supplier by the retailer.
    \item Decisions are made during fixed periodic discrete time intervals.
\end{enumerate}

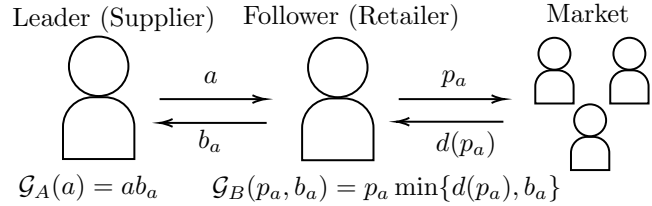
\begin{figure}[ht!]
    \centering
    \tikzset{every picture/.style={line width=0.75pt}} 
    
        \begin{tikzpicture}[x=0.75pt,y=0.75pt,yscale=-0.6,xscale=0.6]
        
        \draw   (60,214) .. controls (60,200.19) and (71.19,189) .. (85,189) -- (95,189) .. controls (108.81,189) and (120,200.19) .. (120,214) -- (120,241) .. controls (120,241) and (120,241) .. (120,241) -- (60,241) .. controls (60,241) and (60,241) .. (60,241) -- cycle ;
        \draw   (65,163) .. controls (65,149.19) and (76.19,138) .. (90,138) .. controls (103.81,138) and (115,149.19) .. (115,163) .. controls (115,176.81) and (103.81,188) .. (90,188) .. controls (76.19,188) and (65,176.81) .. (65,163) -- cycle ;
        \draw    (141.2,190.6) -- (228.2,190.6) ;
        \draw [shift={(230.2,190.6)}, rotate = 180] [color={rgb, 255:red, 0; green, 0; blue, 0 }  ][line width=0.75]    (10.93,-3.29) .. controls (6.95,-1.4) and (3.31,-0.3) .. (0,0) .. controls (3.31,0.3) and (6.95,1.4) .. (10.93,3.29)   ;
        \draw   (261,215) .. controls (261,201.19) and (272.19,190) .. (286,190) -- (296,190) .. controls (309.81,190) and (321,201.19) .. (321,215) -- (321,242) .. controls (321,242) and (321,242) .. (321,242) -- (261,242) .. controls (261,242) and (261,242) .. (261,242) -- cycle ;
        \draw   (266,164) .. controls (266,150.19) and (277.19,139) .. (291,139) .. controls (304.81,139) and (316,150.19) .. (316,164) .. controls (316,177.81) and (304.81,189) .. (291,189) .. controls (277.19,189) and (266,177.81) .. (266,164) -- cycle ;
        \draw    (231.2,210.6) -- (212.4,210.6) -- (143.2,210.6) ;
        \draw [shift={(141.2,210.6)}, rotate = 360] [color={rgb, 255:red, 0; green, 0; blue, 0 }  ][line width=0.75]    (10.93,-3.29) .. controls (6.95,-1.4) and (3.31,-0.3) .. (0,0) .. controls (3.31,0.3) and (6.95,1.4) .. (10.93,3.29)   ;
        \draw    (342.2,190.6) -- (429.2,190.6) ;
        \draw [shift={(431.2,190.6)}, rotate = 180] [color={rgb, 255:red, 0; green, 0; blue, 0 }  ][line width=0.75]    (10.93,-3.29) .. controls (6.95,-1.4) and (3.31,-0.3) .. (0,0) .. controls (3.31,0.3) and (6.95,1.4) .. (10.93,3.29)   ;
        \draw   (457,179.79) .. controls (457,172.38) and (463.01,166.37) .. (470.42,166.37) -- (475.78,166.37) .. controls (483.19,166.37) and (489.2,172.38) .. (489.2,179.79) -- (489.2,194.28) .. controls (489.2,194.28) and (489.2,194.28) .. (489.2,194.28) -- (457,194.28) .. controls (457,194.28) and (457,194.28) .. (457,194.28) -- cycle ;
        \draw   (459.68,152.42) .. controls (459.68,145.01) and (465.69,139) .. (473.1,139) .. controls (480.51,139) and (486.52,145.01) .. (486.52,152.42) .. controls (486.52,159.83) and (480.51,165.83) .. (473.1,165.83) .. controls (465.69,165.83) and (459.68,159.83) .. (459.68,152.42) -- cycle ;
        \draw   (487,235.79) .. controls (487,228.38) and (493.01,222.37) .. (500.42,222.37) -- (505.78,222.37) .. controls (513.19,222.37) and (519.2,228.38) .. (519.2,235.79) -- (519.2,250.28) .. controls (519.2,250.28) and (519.2,250.28) .. (519.2,250.28) -- (487,250.28) .. controls (487,250.28) and (487,250.28) .. (487,250.28) -- cycle ;
        \draw   (489.68,208.42) .. controls (489.68,201.01) and (495.69,195) .. (503.1,195) .. controls (510.51,195) and (516.52,201.01) .. (516.52,208.42) .. controls (516.52,215.83) and (510.51,221.83) .. (503.1,221.83) .. controls (495.69,221.83) and (489.68,215.83) .. (489.68,208.42) -- cycle ;
        \draw   (520,179.79) .. controls (520,172.38) and (526.01,166.37) .. (533.42,166.37) -- (538.78,166.37) .. controls (546.19,166.37) and (552.2,172.38) .. (552.2,179.79) -- (552.2,194.28) .. controls (552.2,194.28) and (552.2,194.28) .. (552.2,194.28) -- (520,194.28) .. controls (520,194.28) and (520,194.28) .. (520,194.28) -- cycle ;
        \draw   (522.68,152.42) .. controls (522.68,145.01) and (528.69,139) .. (536.1,139) .. controls (543.51,139) and (549.52,145.01) .. (549.52,152.42) .. controls (549.52,159.83) and (543.51,165.83) .. (536.1,165.83) .. controls (528.69,165.83) and (522.68,159.83) .. (522.68,152.42) -- cycle ;
        \draw    (430.2,209.6) -- (342.2,209.6) ;
        \draw [shift={(340.2,209.6)}, rotate = 360] [color={rgb, 255:red, 0; green, 0; blue, 0 }  ][line width=0.75]    (10.93,-3.29) .. controls (6.95,-1.4) and (3.31,-0.3) .. (0,0) .. controls (3.31,0.3) and (6.95,1.4) .. (10.93,3.29)   ;
        
        \draw (176,165) node [anchor=north west][inner sep=0.75pt]   [align=left] {$a$};
        \draw (171,214) node [anchor=north west][inner sep=0.75pt]   [align=left] {$b_a$};
        \draw (376,164) node [anchor=north west][inner sep=0.75pt]   [align=left] {$p_a$};
        \draw (371,215) node [anchor=north west][inner sep=0.75pt]   [align=left] {$d(p_a)$};
        \draw (183,252) node [anchor=north west][inner sep=0.75pt]   [align=left] {$\mathcal{G}_B(p_a, b_a) = p_a \min \{ d(p_a), b_a \}$};
        \draw (20,251) node [anchor=north west][inner sep=0.75pt]   [align=left] {$\mathcal{G}_A(a) = a b_a$};
        \draw (15,110) node [anchor=north west][inner sep=0.75pt]   [align=left] {Leader (Supplier)};
        \draw (209,108) node [anchor=north west][inner sep=0.75pt]   [align=left] {Follower (Retailer)};
        \draw (464,108) node [anchor=north west][inner sep=0.75pt]   [align=left] {Market};
    \end{tikzpicture}
    \caption{\textbf{The Newsvendor Pricing Game.} In this Stackelberg game, there a logistics network between a supplier (leader) and retailer (follower), where utility functions are not necessarily supermodular, the supplier issues a wholesale price $a$, and the retailer issues a purchase quantity $b$, and a retail price $p$ in response.} \label{fig:supplier-retailer-game}
\end{figure}

At each time period the leader dynamically selects a wholesale price, $a$. The follower must make two decisions in response to $a$, the order amount, $b$, and the market price $p$, (which we show later by consequence of Eq. \ref{eq:newsvendor-max-int} are unique bijections). The uniqueness of the optimal $b_a^*$, given $a$, is proven in Theorem \ref{thm:optimal-order-b}. This game can be viewed as an ultimatum game, where the retailer is constrained to accept or reject the the terms dictated by the supplier. However, the inclusion of inventory risk and demand parameter uncertainty in our model complicates this dynamic. The retailer must weigh his ordering decision against his risk assessment in the face of these uncertainties. Thus the problem setting can be interpreted as a variant of the ultimatum game, where a learning agent operates under market uncertainty.

\subsubsection{Economic Theory of the Newsvendor Pricing Game} \label{sec:econ-newsv-theory}

Stochastic demand is represented in Eq. \ref{eq:additive-exp-demand}, which is governed by a linear additive demand function $\Gamma_\theta(p)$ representing the expected demand, $\mathbbm{E}[d(p)]$, as a function of $p$ in Eq. \ref{eq:additive-exp-demand}. The demand function is governed by parameters $\theta$.

\begin{align}
    \Gamma_\theta(p) &= \max \{0, \theta_0 - \theta_1 p \}, \quad \theta_0 \geq 0, \ \theta_1 \geq 0 \label{eq:demand-theta-func} \\
    d_\theta(p) &= \Gamma_\theta(p) + \epsilon, \quad \epsilon \in \mathcal{N}(0, \sigma) \label{eq:additive-exp-demand} 
\end{align}

We stipulate that noise $\epsilon \in \mathop{\mathbb{R}}$ is zero mean and $\sigma$-sub-Gaussian. The retailer profit, $\mathcal{G}_B(p, a)$ is restricted by the order quantity $b$, is therefore equal to, given wholesale price $a$,

\begin{align} 
    \mathcal{G}_A(a, b) &= a b \\
  \mathcal{G}_B(p, b| a) &= 
    \begin{cases} 
      d_\theta(p)p -ab , \quad &0 \leq d_\theta(p) \leq b \\ 
      b (p-a)  , \quad  &b < d_\theta(p) \\ 
    \end{cases} \label{eq:profit-cases}
\end{align} 

The retailer's profit should be greater than the supplier's wholesale price multiplied by the order quantity, denoted as $ab$. Let $f_\theta(\cdot)$ be the demand probability function given expected demand $\Gamma_\theta(p)$, and $F_\theta(\cdot)$ be the cumulative demand function. As the noise with respect to $d_\theta(p)$ is $\sigma$-sub-Gaussian, consequently by the additive property so is the noise with respect to the expected retailer profit given supplier cost $a$ is $\mathbbm{E}[\mathcal{G}_B(p, b| a)]$. Given any price $p$, we can compute expected profit of the retailer $\mathcal{G}_B$ as,

\begin{align}
     \mathbbm{E}[\mathcal{G}_B(p, b| a)] = p \int_0^b x f_\theta(x) dx + p b (1 - F_\theta(b)) - ab \label{eq:newsvendor-max-int}
\end{align}

Given $\Gamma_\theta(p)$ and $\epsilon \sim \mathcal{N}(0, \sigma)$ we can obtain a probability distribution $f_\theta(x)$, which has a unique solution \cite{arrow:1951newsboy}. The retailer can select $p$ generating $d \sim \Gamma_\theta(p)$, and retailer must decide on $b$ to order from the supplier. Suppose we express demand as $d \sim \Gamma_\theta(p)$, then the retailer must order an optimal amount $b_0(p, a)$ as determined by Eq. \eqref{eq:newsvendor-cf}. The general solution to the Newsvendor order quantity is to maximize profit via $\mathfrak{B}(a)$, the unique closed form solution is presented in Eq. \eqref{eq:newsvendor-cf}. In classical Newsvendor theory, the optimal solution is a function of the value $\gamma$, referred to as the \textit{critical fractile}, where $\gamma = (p-a)/p$.

\begin{align}
    b_0(p, a) = F_\theta^{-1}( \gamma ) = F_\theta^{-1}\Big(\frac{p-a}{p}\Big)  \label{eq:newsvendor-cf}
\end{align}

Unlike typical dynamic pricing, in a \textit{price-setting Newsvendor} problem setting, the reward maximizing price $p^*$ is not necessarily the same as the profit maximizing price $p_0$ under no inventory risk, also known as the \textit{riskless price}. From \citet{mills:1959uncertainty} we know that the optimal price to set in the \textit{price-setting Newsvendor} with linear additive demand is,

\begin{align}
    p^* &= p_0 - \frac{\Psi_\theta(b)}{2 \theta_1}, \quad \text{where,} \quad \Psi_\theta(b) = \int_b^{\infty} (x - b)f_{\theta, p}(x) dx  \label{eq:mills-opt-price}
\end{align}

Where $f_{\theta,p}(x)$ is the probability distribution of demand, such that $\mathbbm{E}[d_\theta(p)] = x$.

\begin{lemma}\label{thm:invariance-of-z}
    \textbf{Invariance of $\Psi_\theta(b)$:} Assuming symmetric behaviour of $f_{\theta, p}(\cdot)$ with respect to $\Gamma_\theta(p)$ and constant standard deviation $\sigma$, the term representing supply shortage probability defined as $\Psi_\theta(b)$, is invariant of the estimate of $\theta$. (Proof in Appendix \ref{prf:invariance-of-z}.)
\end{lemma}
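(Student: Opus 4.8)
The plan is to exploit the location--scale structure of the demand distribution together with the critical-fractile characterisation of the optimal order quantity, so that after standardisation every dependence on $\theta$ cancels. First I would record the precise content of the two hypotheses. By \eqref{eq:additive-exp-demand} the demand at price $p$ is $d_\theta(p) = \Gamma_\theta(p) + \epsilon$ with $\epsilon$ zero-mean; the ``constant standard deviation $\sigma$'' assumption fixes the scale and the ``symmetric behaviour of $f_{\theta,p}$ with respect to $\Gamma_\theta(p)$'' assumption fixes the shape, so that $f_{\theta,p}$ belongs to a single location--scale family with location $\Gamma_\theta(p)$ and scale $\sigma$: there is one fixed base density $\varphi$ (the standard Gaussian in the model of \eqref{eq:additive-exp-demand}) with $f_{\theta,p}(x) = \sigma^{-1}\varphi\big((x-\Gamma_\theta(p))/\sigma\big)$ and $F_{\theta,p}(x) = \Phi\big((x-\Gamma_\theta(p))/\sigma\big)$, where $\Phi$ is the corresponding fixed CDF. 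The only effect of $\theta$, entering through $\Gamma_\theta(p)$, is a translation of the support.

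Next I would bring in the optimal order quantity, since $\Psi_\theta(b)$ is meaningful in \eqref{eq:mills-opt-price} precisely when $b$ is the Newsvendor order quantity $b_0(p,a) = F_\theta^{-1}(\gamma)$ of \eqref{eq:newsvendor-cf}, with critical fractile $\gamma = (p-a)/p$. Writing the standardised order level $\zeta_\theta := (b_0(p,a) - \Gamma_\theta(p))/\sigma$, the defining identity $F_{\theta,p}(b_0(p,a)) = \gamma$ becomes $\Phi(\zeta_\theta) = \gamma$, hence $\zeta_\theta = \Phi^{-1}(\gamma) =: z_\gamma$, which depends only on $\gamma$ and not on $\theta$. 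I would then change variables $u = (x - \Gamma_\theta(p))/\sigma$ in the integral $\Psi_\theta(b_0(p,a)) = \int_{b_0(p,a)}^{\infty}(x - b_0(p,a))\,f_{\theta,p}(x)\,dx$: since $x - b_0(p,a) = \sigma(u - z_\gamma)$ and the lower limit maps to $u = z_\gamma$, this yields $\Psi_\theta(b_0(p,a)) = \sigma \int_{z_\gamma}^{\infty}(u - z_\gamma)\varphi(u)\,du = \sigma\big(\varphi(z_\gamma) - z_\gamma(1 - \Phi(z_\gamma))\big)$. The right-hand side involves only the fixed density $\varphi$, the constant $\sigma$, and $z_\gamma$ --- none of which depends on the estimate $\hat\theta$ --- which establishes the claimed invariance.

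The main obstacle is conceptual rather than computational: $\Psi_\theta(\cdot)$ is \emph{not} invariant of $\theta$ as a function of a free argument $b$, so the proof must make explicit that the argument is always the critical-fractile order quantity $b_0(p,a)$, and that it is this coupling between $b$ and $\theta$ --- together with the location--scale hypothesis --- that cancels the $\theta$-dependence; the statement should be read as: two different estimates $\hat\theta_1,\hat\theta_2$, producing orders $F_{\hat\theta_1}^{-1}(\gamma)$ and $F_{\hat\theta_2}^{-1}(\gamma)$ at the same price, give the same value of $\Psi$. A secondary point to treat carefully is the regularity needed for the substitution and for $F_\theta^{-1}$ to be well defined, which is immediate here since $\varphi$ is continuous, positive, and has a finite first moment, so $F_{\theta,p}$ is a strictly increasing bijection and $\int_{z_\gamma}^{\infty} u\,\varphi(u)\,du < \infty$. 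I would close by noting that the same argument goes through verbatim for any location--scale family with an integrable base density, which is exactly the generality asserted by the hypotheses.
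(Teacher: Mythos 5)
Your proof is correct and rests on the same core computation as the paper's: a change of variables that translates the demand density by its mean $\Gamma_\theta(p)$, so that $\Psi_\theta(b)$ reduces to an integral against the $\theta$-free noise density evaluated at the stocking factor $z=b-\Gamma_\theta(p)$. The paper stops there, showing $\Psi_\theta(b)=\Psi(z)$ and then asserting invariance with the remark that ``$\theta$ can only vary by updating samples with respect to $t$''; you go one step further and make explicit the point that the paper leaves implicit, namely that $\Psi_\theta(\cdot)$ is \emph{not} $\theta$-invariant in a free argument $b$, and that the invariance only materialises because the argument is the critical-fractile order quantity, for which the standardised level $\Phi^{-1}(\gamma)$ depends on $(p,a)$ but not on the estimate of $\theta$. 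That closing of the loop, together with the scale normalisation by $\sigma$ and the resulting closed form $\sigma\bigl(\varphi(z_\gamma)-z_\gamma(1-\Phi(z_\gamma))\bigr)$, is a genuine improvement in rigour over the appendix proof (the closed form itself is Gaussian-specific, as you implicitly acknowledge, but the invariance argument is location--scale general). No gaps.
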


The optimal price $p^*$, has the relation $p^* \leq p_0$. $f_{\theta,p}(x)$ expresses the probability of demand equal to $x$, under demand parameters $\theta$, and price $p$. This defines the unique relation of $b$ to $p(b)$ as evidenced by Eq. \eqref{eq:mills-opt-price}, and is a unique function. Knowing this relation, it is also logical to see that the riskless price, $p_0$, of the retailer must be always above the wholesale price, $a$, as introduced in the constraint, thus, $a < p^* \leq p_0 \xrightarrow[]{} a < p_0$. 

\begin{theorem} \label{thm:optimal-order-b}
    \textbf{Uniqueness of Best Response under Perfect Information:} In the perfect information NPG (from Sec. \ref{sec:newsvendor-games}), with a known demand function, under the assumptions listed in Sec. \ref{sec:assumptions}, the optimal order amount $b^*_a$ which maximizes expected profit of the retailer $\mathbbm{E}[\mathcal{G}_B(\cdot)]$ is $\mathfrak{B}(a)$ and is a surjection from $a \xrightarrow[]{} \mathfrak{B}(a)$, and a bijection when $a < \bar{p}$. Furthermore, within the range $a < \bar{p}$, as $a$ monotonically increases, $b_a^*$ is monotonically decreasing. Where $\bar{p}$ denotes the optimistic estimate of the riskless price $p_0$. (Proof in Appendix \ref{prf:optimal-order-b}.)
\end{theorem}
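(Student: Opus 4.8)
The plan is to collapse the retailer's two--dimensional decision $[b,p]^\intercal$ to a single variable, establish unimodality to obtain existence and uniqueness of $b^\ast_a$, and then carry out comparative statics in $a$ for the monotonicity and (bi)jectivity claims. First I would fix the wholesale price $a$ and differentiate the expected profit $\mathbbm{E}[\mathcal{G}_B(p,b\mid a)]$ of Eq.~\eqref{eq:newsvendor-max-int} in $b$ for fixed $p$: a short computation gives $\partial_b\,\mathbbm{E}[\mathcal{G}_B]=p\,(1-F_{\theta,p}(b))-a$ and $\partial^2_{bb}\,\mathbbm{E}[\mathcal{G}_B]=-p\,f_{\theta,p}(b)<0$, where $F_{\theta,p}$ is the demand c.d.f.\ at price $p$. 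Hence for each $p$ the profit is strictly concave in $b$ with the unique maximiser $b_0(p,a)=F_{\theta,p}^{-1}\big((p-a)/p\big)$, recovering Eq.~\eqref{eq:newsvendor-cf}; in the stocking--factor variable $z=b-\Gamma_\theta(p)$ this is the critical--fractile condition $\Phi_\sigma(z)=(p-a)/p$, with $\Phi_\sigma$ the $\mathcal{N}(0,\sigma)$ c.d.f. Substituting this relation eliminates one variable and reduces the retailer's problem to the price optimisation of \citet{mills:1959uncertainty}, whose stationarity condition is Eq.~\eqref{eq:mills-opt-price}, $p^\ast=p_0-\Psi_\theta(b)/(2\theta_1)$, with $p_0=(\theta_0+\theta_1 a)/(2\theta_1)$ the riskless price.

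\emph{Existence and uniqueness of $b^\ast_a$.} I would then show the coupled first--order system $\Phi_\sigma(z)=(p-a)/p$ and $p=p_0-\Psi_\theta(b)/(2\theta_1)$ has exactly one solution. By Lemma~\ref{thm:invariance-of-z}, in the $z$--parametrisation $\Psi_\theta$ is a fixed function $\Lambda(z)$ of the stocking factor alone, with $\Lambda'(z)=-(1-\Phi_\sigma(z))<0$; so the Mills branch $z\mapsto p^\ast(z)$ is strictly decreasing, whereas the fractile branch obtained from $\Phi_\sigma(z)=(p-a)/p$ is strictly increasing in $z$ (since $\Phi_\sigma$ is strictly increasing and $(p-a)/p$ is increasing in $p$). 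Two strictly monotone curves of opposite monotonicity meet at most once, and existence of a crossing follows from continuity together with the boundary behaviour of the two branches, the standard second--order analysis of the price--setting newsvendor \cite{petruzzi:1999newsv} ruling out degeneracy despite the unbounded Gaussian support. This determines $(p^\ast_a,z^\ast_a)$, hence $b^\ast_a=z^\ast_a+\Gamma_\theta(p^\ast_a)$, uniquely; this is exactly $\mathfrak{B}(a)$ and simultaneously discharges the uniqueness hypothesis invoked after Eq.~\eqref{eq:br-def}.

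\emph{Monotonicity, surjectivity and bijectivity.} The wholesale price enters only through $p_0=\theta_0/(2\theta_1)+a/2$ and the fractile $\gamma_a=(p^\ast_a-a)/p^\ast_a$. Writing $b^\ast_a=\Gamma_\theta(p^\ast_a)+\Phi_\sigma^{-1}(\gamma_a)$, it suffices to show (i) $p^\ast_a$ is non-decreasing in $a$ and (ii) $a/p^\ast_a$ is strictly increasing in $a$: then $\Gamma_\theta(p^\ast_a)=\theta_0-\theta_1 p^\ast_a$ is non-increasing, $\gamma_a=1-a/p^\ast_a$ is strictly decreasing, and since $\Phi_\sigma^{-1}$ is strictly increasing, $b^\ast_a$ is strictly decreasing on the participation region. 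Both (i) and (ii) follow by implicit differentiation of the coupled stationarity conditions, the required signs coming from the second--order conditions established in the uniqueness step and the inequality chain $a<p^\ast_a\le p_0$ already recorded in the text (which forces $a/p^\ast_a<1$ and a slope $dp^\ast_a/da<1$); this is the comparative statics of the price--setting newsvendor, cf.\ \cite{petruzzi:1999newsv}. For the range: once $a$ reaches the riskless price $\bar p$, no price yields a positive margin $(p-a)\Gamma_\theta(p)$ and the retailer's unconstrained best response turns non-positive, so the corner $b^\ast_a=0$ is adopted; hence the entire region $\{a\ge\bar p\}$ maps to $0$, which is precisely why $a\mapsto\mathfrak{B}(a)$ is a surjection but not an injection globally. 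On $a<\bar p$, strict monotonicity gives injectivity, and continuity of $b^\ast_a$ in $a$ together with the limits $b^\ast_a\to 0$ as $a\uparrow\bar p$ and $b^\ast_a\to\infty$ as $a\downarrow 0$ (there $\gamma_a\to 1$ while $p^\ast_a\to p_0>0$) gives, via the intermediate value theorem, that $a\mapsto b^\ast_a$ is a strictly decreasing bijection onto $(0,\infty)$; combining the two regimes yields surjectivity onto $[0,\infty)$.

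\emph{Main obstacle.} The delicate point is the uniqueness step: $\mathbbm{E}[\mathcal{G}_B(p,b\mid a)]$ is not jointly concave in $(p,b)$, so one cannot read it off a Hessian sign and must instead pass through the stocking--factor reparametrisation and a single--crossing argument in the spirit of \citet{petruzzi:1999newsv}, taking care that the unbounded Gaussian support does not introduce spurious roots of the stationarity system; the same second--order information is then reused to sign $dp^\ast_a/da$ and $db^\ast_a/da$ in the comparative--statics step, and to pin down the exact threshold $\bar p$ separating the interior and corner regimes.
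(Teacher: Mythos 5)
There is a genuine gap in your uniqueness step. You write that the Mills branch $z\mapsto p^*(z)$ is strictly decreasing while the fractile branch is strictly increasing, and you conclude uniqueness from a single crossing of two curves of opposite monotonicity. The sign of the Mills branch is wrong. With $\Lambda(z)=\int_z^{\infty}(x-z)f(x)\,dx$ you correctly note $\Lambda'(z)=-(1-\Phi_\sigma(z))<0$, so $\Lambda$ is \emph{decreasing}; but then
\begin{align*}
p^*(z) \;=\; p_0-\frac{\Lambda(z)}{2\theta_1}
\end{align*}
is \emph{increasing} in $z$. The fractile branch $p=a/(1-\Phi_\sigma(z))$ is also increasing in $z$, so both first-order-condition curves slope the same way and can in principle intersect more than once. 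The standard repair (and what \citet{petruzzi:1999newsv} actually do, and what the paper implicitly leans on by citation) is to substitute $p(z)$ back into the expected profit and prove unimodality of the resulting one-variable function of $z$ via a condition on the hazard rate of the noise distribution (satisfied by the Gaussian, which has increasing failure rate). Your comparative-statics step then inherits this problem, since you sign $dp^*_a/da$ and $db^*_a/da$ using "the second-order conditions established in the uniqueness step," which as written do not exist.

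For comparison, the paper's own proof is much terser: it does not attempt the two-branch crossing argument at all, but cites \citet{mills:1959uncertainty} and \citet{petruzzi:1999newsv} for the existence and uniqueness of $(p^*,z^*)$, and then derives monotonicity of $b^*_a$ from the claim that the critical fractile $\gamma^*=(p^*-a)/p^*$ is strictly decreasing in $a$ (argued separately in Theorem \ref{thm:fgamma-decreasing}). Your reduction $\partial_b\,\mathbbm{E}[\mathcal{G}_B]=p(1-F_{\theta,p}(b))-a$ recovering Eq.~\eqref{eq:newsvendor-cf}, your decomposition $b^*_a=\Gamma_\theta(p^*_a)+\sigma$-quantile term, and your treatment of the corner regime $a\ge\bar p$ (which explains the surjection-versus-bijection distinction more explicitly than the paper does) are all sound and in places more careful than the published argument; only the single-crossing claim needs to be replaced by the unimodality-in-$z$ argument.
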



\begin{theorem} \label{thm:unique-se}
    \textbf{Unique Stackelberg Equilibrium:} A unique pure strategy Stackelberg equilibrium exists in the full-information Newsvendor pricing game. (Proof in Appendix \ref{prf:unique-se}.)
\end{theorem}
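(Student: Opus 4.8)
The plan is to reduce the leader's problem to a one‑dimensional optimization and then to show that the resulting scalar objective is strictly unimodal. First I would pin down the follower side: by Theorem~\ref{thm:optimal-order-b}, for every admissible wholesale price $a$ the retailer's optimal order quantity $b^*_a$ is a well‑defined, continuous function of $a$ that is strictly decreasing on $a<\bar p$ (and $b^*_a=0$ for $a\ge\bar p$, i.e.\ the retailer does not participate), and by Eq.~\eqref{eq:mills-opt-price} together with Lemma~\ref{thm:invariance-of-z} the accompanying retail price $p^*_a$ is determined uniquely by $b^*_a$ through the bijection $b\mapsto p(b)$. Hence the best response $\mathfrak{B}(a)=[b^*_a,p^*_a]^\intercal$ is single‑valued and continuous in $a$. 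In particular, in any pure‑strategy Stackelberg equilibrium the follower's strategy is forced to equal $\pi_B=\mathfrak{B}$, and, since $\mathcal{G}_A(a,b)=ab$ is deterministic, the leader's problem collapses to maximizing
\begin{equation}
    \phi(a) \;:=\; \mathbbm{E}[\mathcal{G}_A(a,\mathfrak{B}(a))] \;=\; a\,b^*_a
\end{equation}
over the admissible interval of wholesale prices.

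For existence I would invoke the extreme‑value theorem: $\phi$ is continuous on the compact closure of the admissible interval, with $\phi(0)=0$ and $\phi>0$ on the interior, so a maximizer $a^*$ exists and lies strictly inside (at the right endpoint $\bar p$ one has $b^*_{\bar p}=0$, so $\phi$ vanishes). For uniqueness the target is strict quasi‑concavity of $\phi$ on the interior — equivalently, that
\[
\phi'(a) \;=\; b^*_a + a\,\tfrac{d b^*_a}{da} \;=\; b^*_a\bigl(1-\eta(a)\bigr), \qquad \eta(a):=-\tfrac{a}{b^*_a}\tfrac{d b^*_a}{da}>0,
\]
changes sign exactly once, where $\eta$ is the elasticity of the retailer's order quantity in the wholesale price. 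To establish this single‑crossing I would differentiate the follower's first‑order conditions implicitly: writing the optimum as $b^*_a=\Gamma_\theta(p^*_a)+\sigma z^*_a$ with $z^*_a=\Phi^{-1}(\gamma_a)$, $\gamma_a=(p^*_a-a)/p^*_a$, and $p^*_a$ given by Eq.~\eqref{eq:mills-opt-price}, one can solve for $\tfrac{d b^*_a}{da}$ and $\tfrac{d p^*_a}{da}$ in closed form and then show $\eta$ is strictly increasing on $(0,\bar p)$ (it is small near $a=0$ and blows up as $a\to\bar p$). An equivalent and perhaps cleaner route is to prove that $a\mapsto\log b^*_a$ is concave, which, combined with the strict concavity of $\log a$, makes $\phi$ strictly log‑concave and hence admits the unique maximizer $a^*$. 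A useful sanity check is the degenerate case of a fixed safety factor ($z^*_a\equiv z_0$), where $b^*_a$ is affine and decreasing in $a$ and $\phi$ is an exact downward parabola.

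The main obstacle is precisely this curvature/monotonicity step: $b^*_a$ is defined only implicitly through the coupled pair consisting of the critical‑fractile condition and Mills' price equation, so signing $\tfrac{d}{da}\eta(a)$ requires carefully controlling how $z^*_a$ and $p^*_a$ co‑move with $a$ (the safety factor $z^*_a$ is itself decreasing in $a$, reinforcing the decrease of $b^*_a$). The structural facts I expect to do the work are the linearity of expected demand $\Gamma_\theta(p)=\theta_0-\theta_1 p$ (which makes the riskless price affine in $a$ and the price first‑order condition linear), the invariance of $\Psi_\theta(b)$ from Lemma~\ref{thm:invariance-of-z}, and the standard log‑concavity/convexity properties of the normal density, of the normal loss function, and of $\Phi^{-1}$. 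Once $\phi$ is shown to have the unique maximizer $a^*$, the pair $(\pi_A,\pi_B)=(a^*,\mathfrak{B})$ is a Stackelberg equilibrium by construction, and it is the only one: any equilibrium must have the follower best responding — hence $\pi_B=\mathfrak{B}$, unique by Theorem~\ref{thm:optimal-order-b} — and the leader playing an optimizer of $\phi$ — hence $a^*$, unique by strict quasi‑concavity. This establishes the claim.
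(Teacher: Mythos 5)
Your overall reduction---force $\pi_B=\mathfrak{B}$ by follower rationality, collapse the leader's problem to the scalar objective $\phi(a)=a\,b^*_a$, get existence from continuity on a compact interval with $\phi$ vanishing at both endpoints---is sound and is in fact more structured than what the paper does. The paper's own proof is much thinner: it shows that $F_\theta$ is strictly increasing so $F_\theta^{-1}$ is injective, that the critical fractile $\gamma^*=(p^*-a)/p^*$ is strictly decreasing in $a$, concludes that $a\mapsto\mathfrak{B}(a)$ is a well-defined injection, and then simply asserts that $\max_a a\,F_\theta^{-1}\bigl((p^*-a)/p^*\bigr)$ ``has a unique maximum.'' It never establishes any curvature or single-crossing property of the leader's objective. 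So you have correctly identified the step that actually carries the uniqueness claim, which the paper glosses over.

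That said, your proposal does not close that step either, and this is a genuine gap. The uniqueness of $a^*$ rests entirely on the claim that the elasticity $\eta(a)=-\tfrac{a}{b^*_a}\tfrac{db^*_a}{da}$ crosses $1$ exactly once (equivalently, that $\log b^*_a$ is concave). You describe how one \emph{would} verify this---implicit differentiation of the coupled critical-fractile and Mills price conditions, log-concavity of the normal density and loss function---but you do not carry it out, and it is not obviously true: $b^*_a=F_\theta^{-1}(\gamma_a)$ involves $\Phi^{-1}$, which is convex on $(0,\tfrac12)$ and concave on $(\tfrac12,1)$, and the fractile $\gamma_a$ sweeps through both regimes as $a$ ranges over $(0,\bar p)$, while $p^*_a$ co-moves with $a$ through Eq.~\eqref{eq:mills-opt-price}. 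Signing $\eta'(a)$ therefore requires a concrete computation under the paper's distributional assumptions, not just a list of structural facts that ``should do the work.'' Until that single-crossing (or the log-concavity of $a\mapsto b^*_a$) is actually established, your argument proves existence and at-most-finitely-many critical points, but not uniqueness. Your sanity check with a fixed safety factor is a good start, but it is exactly the degenerate case in which the difficulty disappears.
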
 

\textbf{Sketch of Proof (Theorem \ref{thm:optimal-order-b} and \ref{thm:unique-se}):} Under perfect information, the unique solution to $a \xrightarrow[]{} \mathfrak{B}(a)$ in Eq. \eqref{eq:newsvendor-cf} can be obtained, so long as $\mathfrak{B}(a) > 0$. If each leader action $a$ elicits a unique action $\mathfrak{B}(a)$, then there must exist at least one optimal action for the leader in $a \in \mathcal{A}$. Note that, under perfect information, the follower would always act in accordance to their best approximation of $\mathcal{B}(a)$, which is unique under the assumptions in Section \ref{sec:assumptions}. This determines a unique best response. (Mathematical details are given in Appendix \ref{prf:optimal-order-b} and \ref{prf:unique-se}).

Evident from the previous discourse, $\mathbbm{E}[\mathcal{G}_B(p^*,b^*|a)]$, representing the optimal expected reward (profit) function for the follower under perfect information, does not lend itself to a straightforward computation. Instead, it necessitates inverse look up of a probability density function, from Eq. \eqref{eq:newsvendor-cf}, which has no parametric representation nor closed form solution. In addition, the solution to an integral equation from Eq. \eqref{eq:mills-opt-price} is required, which also lacks a closed-form solution. Consequently, this motivates the introduction of an upper-bounding approximation function, such as $\mathcal{U}_B(p|a)$ (represented in Eq. \eqref{eq:riskless-price-lin-func}), which both bounds $\mathbbm{E}[\mathcal{G}_B(p^*,b,a)]$, and is characterized by a straightforward parametric structure, $(\theta_0 - \theta_1p)(p-a)$. Observing the absence of the variable $b$ from the argument is noteworthy, albeit without consequence, as $\mathcal{U}_B(p|a)$ effectively provides an upper bound for $\mathcal{U}_B(p|a)$ across all admissible values of $b$ in $\mathbbm{E}[\mathcal{G}_B(p^*,b^*|a)]$. Subsequently, we can then leverage this approximation within an online learning framework.

\begin{align}
    \mathbbm{E}[\mathcal{G}_B(p, b|a)] \leq \mathcal{U}_B(p|a) &= (\theta_0 - \theta_1p)(p-a) \label{eq:riskless-price-lin-func} 
\end{align}

Note that the linear parametric form is preserved under this representation, where $\mathcal{U}_B(p|a) = \theta X $, with $\theta \equiv [\theta_0, \theta_1]$ and $X \equiv [(p-a), -(p^2 - ap)]^\intercal$.

\textbf{Economic Interpretation of $\mathcal{U}_B(p|a)$:} $\mathcal{U}_B(p|a)$ in Eq. \eqref{eq:riskless-price-lin-func} represents the theoretical reward the follower could achieve if all realized demand was met, and no cost of inventory surplus or shortage is considered. This is simply the unit profit margin, denoted as $p-a$, multiplied by the expected demand generated by the price, denoted as $\theta_0 - \theta_1p$, assuming all inventory is sold. Therefore, this \textit{riskless profit},  $\mathcal{U}_B(p|a)$, is effectively a concave polynomial with respect to $p$, and is always greater or equal than the expected reward $\mathbbm{E}[\mathcal{G}_B(p, b|a)]$ with inventory risk (a key result from \cite{mills:1959uncertainty} \cite{petruzzi:1999newsv}). 

\section{Online Learning Algorithms} \label{sec:learning-algorithms}

We present a learning algorithm for the \textit{Newsvendor Pricing Game} (NPG) involving two learning agents, namely two firms representing a supplier (leader) and retailer (follower), in a repeated Stackelberg game. In this online learning setting, both agents must learn the parameters of a reward function that is derived from a linear demand function. This presents additional challenge for both agents. For the leader, she bears the opportunity costs of suboptimal pricing given the best response of the retailer. For the retailer, he bears the cost of additional surplus or shortage when inaccurately estimating the demand function. These factors create additional complexity in comparison with a pure dynamic pricing problem setting. 

More specifically, we adapt a contextual linear bandit learning algorithm \cite{abbasi:2011ofu} \cite{chu:2011-linucb}, which already enjoys robust theoretical guarantees in a single-agent setting, to a multi-agent setting. Furthermore, we extend its application to the economic domain by integrating classical economic theory from the \textit{price-setting Newsvendor}, as discussed in Section \ref{sec:econ-newsv-theory}, with contextual linear bandits. In this game, the follower optimizes to maximize $\mathcal{U}_B(p|a)$, an optimistic upper-bounding concave function on his perspective reward function, while the leader aims to learn an optimal strategy by leveraging information about the follower's policy. We present technically novel analytical techniques for quantifying bounds on estimated parameters and best response functions. For example, we introduce an innovative argument via analysis in Euclidean space in Theorem \ref{thm:bound-del-ht} and approximation methods for non-elementary functions in Theorem \ref{thm:leader-regret}. Given this, we successfully extend the robust theoretical guarantees of single-agent contextual linear bandits to an economic multi-agent setting.


\subsection{Optimism in the Face of Uncertainty}
 
We propose an adaptation of the \textit{optimism-under-uncertainty} (OFUL) algorithm from \citet{abbasi:2011ofu} (illustrated in Appendix \ref{sec:alg-abbasi} for reference) to learn the parameters of demand from Eq. \eqref{eq:demand-theta-func}. The OFUL algorithm is a contextual linear bandit algorithm, also previously studied in \citet{dani:2008-ball} and \citet{rusmevichientong:2010linearly}, which constructs a confidence ball $\mathcal{C}^t$ where the parameters of an estimated reward function fall within. 

\begin{align}
    \norm{\theta^*_t - \theta_t}_{\bar{V}_t} \in \mathcal{O}(\sqrt{\log{t}}) \label{eq:self-norm-abs-bound}
\end{align}

 Given context $a$, we have two parameters to estimate $\theta_0$ and $\theta_1$, leading us to learning of the optimal riskless price optimization. From Eq. \eqref{eq:self-norm-abs-bound} we see that $||\theta^*_t - \theta_t||_{\bar{V}_t}$ grows logarithmically, matching the covariance of $\theta_t$, where $|| \cdot ||_{\bar{V}_t}$ is defined as the self-normalizing norm \cite{delapena:2004_self_norm}. Lemma \ref{thm:theta-2-norm} extends this relationship to a shrinking finite 2-norm bound, which we later use to derive bounds on regret.

\begin{lemma} \label{thm:theta-2-norm}
     With probability $1-\delta$, and $t \geq e$, the constraint $||\theta - \theta^*||_{\bar{V}_t} \leq \mathcal{O}(\sqrt{\log(t)})$ from \citet{abbasi:2011ofu} implies the finite $2$-norm bound $||\theta - \theta^*||_2 \leq \mathcal{O}(\sqrt{\log(t)/ t})$, and also thereby the existence of $\kappa > 0$ such that $||\theta - \theta^*||_2 \leq \kappa \sqrt{\log(t)/ t}$. (Proof in Appendix \ref{prf:theta-2-norm}.)  
\end{lemma}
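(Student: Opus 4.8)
The plan is to pass from the data-dependent norm $\|\cdot\|_{\bar{V}_t}$ to the Euclidean norm by controlling the smallest eigenvalue of the regularized Gram matrix $\bar{V}_t = \lambda I + \sum_{s=1}^{t} X_s X_s^{\top}$, where $X_s$ is the context vector played at round $s$ (here $X_s = [(p^s - a^s),\, -((p^s)^2 - a^s p^s)]^{\top}$). Since $\bar{V}_t$ is symmetric positive definite, for any vector $v$ we have $\|v\|_{\bar{V}_t}^2 = v^{\top}\bar{V}_t v \ge \lambda_{\min}(\bar{V}_t)\,\|v\|_2^2$, hence $\|\theta - \theta^*\|_2 \le \|\theta - \theta^*\|_{\bar{V}_t}\big/\sqrt{\lambda_{\min}(\bar{V}_t)}$. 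Combined with the OFUL confidence bound $\|\theta - \theta^*\|_{\bar{V}_t} \le \mathcal{O}(\sqrt{\log t})$ of \citet{abbasi:2011ofu}, valid on an event of probability $1-\delta$, the lemma reduces to showing $\lambda_{\min}(\bar{V}_t) = \Omega(t)$.

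To obtain the linear growth of $\lambda_{\min}(\bar{V}_t)$ I would exploit the diversity of the contexts generated along the trajectory. Regularization alone gives only $\lambda_{\min}(\bar{V}_t) \ge \lambda$, which is not enough; instead I use that the prices $p^s$ and wholesale prices $a^s$ explored by the optimistic learner keep the conditional second-moment matrix uniformly elliptical, i.e.\ there is a constant $c_0 > 0$ with $\mathbbm{E}[X_s X_s^{\top}\mid \mathcal{F}_{s-1}] \succeq c_0 I$ for all $s$ (the persistent-excitation / covariate-diversity condition implicit in the NPG model). A matrix Chernoff / matrix Freedman inequality then yields $\lambda_{\min}\big(\sum_{s=1}^{t} X_s X_s^{\top}\big) \ge \tfrac{c_0}{2}\,t$ with probability at least $1-\delta'$ once $t$ exceeds a constant depending on $c_0$ and the context bound; absorbing $\delta'$ into $\delta$ by a union bound, $\lambda_{\min}(\bar{V}_t) \ge \lambda + \tfrac{c_0}{2} t = \Omega(t)$.

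Putting the two bounds together gives, with probability $1-\delta$ and for $t \ge e$,
\begin{align}
    \|\theta - \theta^*\|_2 \;\le\; \frac{\|\theta - \theta^*\|_{\bar{V}_t}}{\sqrt{\lambda_{\min}(\bar{V}_t)}} \;\le\; \frac{\mathcal{O}(\sqrt{\log t})}{\sqrt{(c_0/2)\,t}} \;=\; \mathcal{O}\!\left(\sqrt{\tfrac{\log t}{t}}\right),
\end{align}
and reading off the hidden constant produces an explicit $\kappa > 0$ (depending on $\lambda$, $c_0$, the sub-Gaussian parameter $\sigma$, the confidence level $\delta$, and the bounds on $\theta^*$ and the contexts) such that $\|\theta - \theta^*\|_2 \le \kappa\sqrt{\log t / t}$; the hypothesis $t \ge e$ ensures $\log t \ge 1$, so the right-hand side is a genuine positive bound. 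The main obstacle is the middle step: justifying $\lambda_{\min}(\bar{V}_t) = \Omega(t)$ rigorously, since it requires either a standing assumption that the context sequence is persistently exciting or an argument that the optimistic NPG dynamics themselves force enough spread in the explored $(a,p)$ pairs; the remaining ingredients are a deterministic eigenvalue inequality and a high-probability union bound.
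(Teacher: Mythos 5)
Your proposal follows essentially the same route as the paper's proof: both pass from the self-normalized norm to the Euclidean norm via the eigenvalue inequality $\|v\|_{\bar V_t}^2 \ge \lambda_{\min}(\bar V_t)\,\|v\|_2^2$, reduce the lemma to showing $\lambda_{\min}(\bar V_t) = \Omega(t)$, and then combine with the $\mathcal{O}(\sqrt{\log t})$ confidence-width bound (the paper additionally spells out an explicit $\kappa_v$ with $R\sqrt{d\log((1+tL^2/\lambda)/\delta)}+\lambda^{1/2}S \le \kappa_v\sqrt{\log t}$ for $t\ge e$, which your appeal to the stated $\mathcal{O}(\sqrt{\log t})$ bound subsumes). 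The only real difference is that the paper simply \emph{imposes} the linear eigenvalue growth ("we impose that the minimum eigenvalue of $\bar V_t$ scales as $\kappa_2 t$", under a standing "sufficient exploration" assumption), whereas you correctly identify this as the genuine obstacle and sketch a justification via persistent excitation plus matrix concentration — so your account is, if anything, more candid about the one step neither argument fully establishes.
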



\subsection{Online Learning Algorithm for Newsvendor Pricing Game}

Let $\mathcal{H}(\theta) \equiv \theta_0/\theta_1$. We introduce a learning algorithm where both the leader and the follower optimistically estimates $\mathcal{H}(\theta)$, while the follower learns $\mathcal{U}_B(p_0, b_0 | a)$. We denote by $\mathcal{C}^t$, the confidence set, wherein the true parameters lie with high probability $1-\delta$ (i.e. $\theta^* \in \mathcal{C}^t$). Our intuition, supported by Lemma \ref{thm:theta-2-norm}, suggests that the size of $\mathcal{C}^t$ decreases linearly with the sample size $t$. Moreover, the linear structure of $\mathcal{U}_B(p_0, b_0 | a)$ enables Lemma \ref{thm:theta-2-norm} to remain valid during estimation of $\theta$ via linear contextual bandits. To highlight, the follower executes the optimistic linear contextual bandit, while the leader selects her actions maximizing over the follower's optimistic best response.


\begin{algorithm}[h!]
\caption{Learning Algorithm for Newsvendor Pricing Game (LNPG) }\label{alg:se-newsv}
\begin{algorithmic}[1]
    \For {$t \in 1 ... T$}:
        \State Leader and follower estimates $\mathcal{C}^t$ from available data $\mathcal{D}^t$. 
        \State Leader plays action $a$, where $a = \underset{a \in \mathcal{A}, \theta \in \mathcal{C}^t} {\mathrm{argmax}} \ a F^{-1}_{\bar{\theta}_a} \Big( 1 - \frac{2a}{ \mathcal{H}(\theta) + a} \Big)$ from Eq. \eqref{eq:argmax-a-obj}.
        \State Follower sets price $p = (\mathcal{H}(\theta) + a)/2$, referencing the maximization problem in Eq. \eqref{eq:h-max-theta}.
        \State Follower estimates their optimistic parameters $\bar{\theta}_a$, and best response $\bar{b}_a$ from Eq. \eqref{eq:ba_upper_bound_true} and \eqref{eq:h-max-theta} respectively, and plays action $b = \bar{b}_a$.
        \State Leader obtains reward, $\mathcal{G}_A = ab$.
        \State Follower obtains reward, $\mathcal{G}_B = p \min \{ b, d(p) \} $ (Eq. \eqref{eq:additive-exp-demand}).
    \EndFor
\end{algorithmic}
\end{algorithm}


Given leader action $a$, the optimal follower, $\mathcal{G}_B^*(\cdot)$ is unique from Theorem \ref{thm:unique-se}. Thus there there exists a unique solution where $\mathcal{G}_A^*(a)$ is maximized for any estimate of $\theta$. Although Theorem \ref{thm:unique-se} illustrates the uniqueness of the SE under perfect information, there still exists uncertainty under imperfect information, when the parameters of demand are unknown. From the leader's perspective, she is rewarded $\mathcal{G}_A(a, b) = ab$ depending on the reaction of the follower. The follower's economic reaction is determined by $F_{\theta}^{-1}((p_0-a)/p_0)$, which is affected by estimate $\theta$. We present empirical results of Algorithm \ref{alg:se-newsv} in Appendix \ref{sec:experiment-results}.

\subsection{Measuring the Best Response under Optimism} \label{sec:best-response}

To characterize the best response of the follower, we must first establish a bound on the optimistic, yet uncertain, pricing behaviour $p_a$ of the follower given leader action $a$. From Theorem \eqref{thm:optimal-order-b} there exists a direct relationship between $p^t$ and $b_a^t$. As the follower is learning parameters via OFUL, its certainty about the model parameters is bounded by $||\theta - \theta^*||_2 \leq \mathcal{O}(\sqrt{\log(t)/t} )$. For each given action $a$, and a confidence bound on the parameters $\theta^*$, denoted as $\mathcal{C}^t$, we can have an optimistic and pessimistic estimate of the best response, $b_a$, denoted as $(\bar{b}_a, \underline{b}_a)$ respectively. We define the pessimistic estimate of $b_a^*$ as,


\begin{align}
    \underline{b}_a &\equiv \max \Big\{0, F_{\hat{\theta}}^{-1} \Big(\frac{p_0(\theta) - a}{ p_0(\theta) } \Big) \Big\} \geq \max \Big\{ 0, F_{\hat{\theta}}^{-1} \Big(\frac{p^*(\theta) - a}{ p^*(\theta) } \Big) \Big\} \label{eq:ba_lower_bound_true} 
\end{align}

The pessimistic best response $\underline{b}_a$ to $a$ occurs when the optimistic parameter estimate is equal to the maximum likelihood estimate itself, denoted as $\hat{\theta}$. Thus for realism, we set a lower bound of 0 for $b_a$. Additionally, we recognize that the riskless price estimate $p_0$ is always greater than or equal to $p^*$, represented as $a \leq p^* \leq p_0$ in Eq. \eqref{eq:ba_lower_bound_true}. Later, in Lemma \ref{thm:opt-p-star-decrease}, we will demonstrate that when the algorithm converges to $p_0$, the learned parameters enable the follower to calculate $p^*$ and $b_a^*$ respectively.

The upper bound of the optimistic best response, $\bar{b}_a$ (Eq. \eqref{eq:ba_upper_bound_true}), can be determined by estimating the optimistic riskless price $\bar{p}_0$. Let $\mathbb{E}[\mathcal{G}_B(\theta|a)]$ denote the expected profit under parameter $\theta$, give $a$. Both the left and right terms of Eq. \eqref{eq:max-theta} are optimistic estimates of expected profit, $\mathbb{E}[\mathcal{G}_B(\theta|a)]$, using the riskless price $p_0$ as a surrogate for the optimal price. The follower's pricing with $\bar{p}_0$, and ordering with $\bar{b}_a$, under an optimistic estimate of the demand parameters $\theta$ given data $\mathcal{D}^t$, can be summarized from Eq. \eqref{eq:max-theta} to Eq. \eqref{eq:ba_upper_bound_true}. (The derivation from Eq. \eqref{eq:max-theta} to Eq. \eqref{eq:ba_upper_bound_true} is in Appendix \ref{sec:notes-opt-br}.) 

\begin{align}
    \bar{\theta} &= \underset{\theta \in \mathcal{C}^t} {\mathrm{argmax}} \ \mathbb{E}[\mathcal{G}_B(\theta|a)], \quad \mathbb{E}[\mathcal{G}_B(\theta|a)] \leq \underset{\theta \in \mathcal{C}^t} {\mathrm{max}} \ \Big( \frac{\mathcal{H}(\theta) + a}{2} \Big)  F_{\theta}^{-1} \Big(1 - \frac{2a}{\mathcal{H}(\theta) + a} \Big) \label{eq:max-theta} \\
    \bar{b}_a &= F^{-1}_{\bar{\theta}_a} \Big( \frac{p_0 - a}{p_0} \Big), \quad p_0 = a + \frac{\bar{p}_0 - a}{2} = \frac{\mathcal{H}(\bar{\theta}) + a}{2}, \quad \mathfrak{B}(a) \in [\underline{b}_a, \bar{b}_a], \quad \underline{b}_a \geq 0  \label{eq:ba_upper_bound_true}
\end{align}

\begin{theorem} \label{thm:bound-del-ht}
    \textbf{Bounding the Optimistic Follower Pricing:} Let $\Delta_\mathcal{H}(t) = \mathcal{H}^*(\theta) -  \theta_0^*/\theta_1^*$ represent the deviation in pricing action of the follower (retailer) at time $t$. Then $\Delta_\mathcal{H}(t) \leq  \kappa_H \sqrt{ \log(t)/t } $ for some $\kappa_H > 0$ and $t \geq e$. (Proof in Appendix \ref{prf:bound-del-ht}.)
\end{theorem}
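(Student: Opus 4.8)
The plan is to reduce the statement to a local Lipschitz estimate for the map $\theta \mapsto \mathcal{H}(\theta) = \theta_0/\theta_1$, and then feed in the $2$-norm concentration of Lemma \ref{thm:theta-2-norm}. Writing $\bar{\theta}_a$ for the optimistic parameter selected from the confidence set $\mathcal{C}^t$ in Eq. \eqref{eq:max-theta}, we have $\Delta_\mathcal{H}(t) = \mathcal{H}(\bar{\theta}_a) - \mathcal{H}(\theta^*)$, and by the follower's pricing rule $p = (\mathcal{H}(\theta)+a)/2$ the deviation in the price action is exactly $\tfrac12\Delta_\mathcal{H}(t)$; so it suffices to control $|\mathcal{H}(\bar{\theta}_a) - \mathcal{H}(\theta^*)|$, from which the (one-sided) bound on $\Delta_\mathcal{H}(t)$ follows a fortiori.

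First I would fix the high-probability event of Lemma \ref{thm:theta-2-norm}: for $t \geq e$, with probability $1-\delta$, we have $\theta^* \in \mathcal{C}^t$ and $\mathcal{C}^t$ has $2$-norm diameter at most $\kappa\sqrt{\log(t)/t}$ (combining $\theta,\theta^* \in \mathcal{C}^t$ in the self-normalized norm $\|\cdot\|_{\bar V_t}$ and then invoking the implication of Lemma \ref{thm:theta-2-norm}). In particular $\|\bar{\theta}_a - \theta^*\|_2 \leq \kappa\sqrt{\log(t)/t}$. Second, I would establish the Lipschitz bound for $\mathcal{H}$ on the region where the iterates live. Under the standing boundedness of the parameter space --- the estimates and $\theta^*$ lie in a fixed compact box with $\theta_1 \geq \theta_1^{\min} > 0$ and $\theta_0 \leq \theta_0^{\max}$, the usual OFUL regularity assumption, needed here to keep the demand slope non-degenerate --- the gradient $\nabla\mathcal{H}(\theta) = \big(1/\theta_1,\, -\theta_0/\theta_1^{2}\big)$ satisfies $\|\nabla\mathcal{H}(\theta)\|_2 \leq L := \sqrt{(\theta_1^{\min})^{-2} + (\theta_0^{\max})^{2}(\theta_1^{\min})^{-4}}$ throughout the box, which is convex.

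Combining the two ingredients via the mean value inequality along the segment $[\theta^*, \bar{\theta}_a]$ (which stays in the box by convexity) gives $|\mathcal{H}(\bar{\theta}_a) - \mathcal{H}(\theta^*)| \leq L\|\bar{\theta}_a - \theta^*\|_2 \leq L\kappa\sqrt{\log(t)/t}$, so the claim holds with $\kappa_H := L\kappa$.

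I expect the main obstacle to be exactly the uniform control of the denominator: $L \to \infty$ as $\theta_1^{\min}\to 0$, so the argument genuinely needs the parameter set bounded away from $\theta_1 = 0$; if one does not want to assume this outright, a burn-in argument is required, showing that once $t$ exceeds a threshold depending on $(\theta^*,\delta)$ the confidence set is trapped in $\{\theta_1 \geq \theta_1^*/2\}$, and then $\kappa_H$ absorbs the finite, $t$-independent contribution of the pre-burn-in steps. A secondary check is that the rate delivered by Lemma \ref{thm:theta-2-norm} is $\sqrt{\log t/t}$ rather than merely $\sqrt{\log t}$ --- which it is --- and that the optimistic selection rule for $\bar{\theta}_a$ is used only through the membership $\bar{\theta}_a \in \mathcal{C}^t$, so no monotonicity or convexity of $\mathcal{H}$ along the optimization path is needed.
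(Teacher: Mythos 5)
Your proposal is correct, but it takes a genuinely different route from the paper. The paper solves the constrained maximization $\max\{\theta_0/\theta_1 : \|\theta-\theta^*\|_2 \leq \kappa\sqrt{\log(t)/t}\}$ \emph{exactly} by plane geometry: the optimum occurs where the ray $\theta_0 = \mathcal{H}^*\theta_1$ is tangent to the confidence disc, the point-to-line distance formula yields a quadratic in $\mathcal{H}^*$, and the resulting closed-form radical expression (Eq.~\eqref{eq:hstar-def}) is then bounded by elementary algebra as a linear combination $v_1\sqrt{\log(t)/t} + v_2\log(t)/t$, absorbed into $\kappa_H = v_1+v_2$ using $\log(t)/t \leq \sqrt{\log(t)/t}$ for $t\geq 1$. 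You replace all of this with a mean-value/Lipschitz bound on $\mathcal{H}(\theta)=\theta_0/\theta_1$ over a convex region with $\theta_1$ bounded below, which is shorter, more transparent, and generalizes immediately to higher-dimensional parameter vectors (relevant to the extension the authors flag in the conclusion), whereas the tangency computation is intrinsically two-dimensional. What you lose is the explicit formula for $\mathcal{H}^*(\theta)$, which the paper reuses downstream in the proof of Theorem~\ref{thm:leader-regret}; your argument certifies only the rate, not the maximizer. Your flagged obstacle is real but is shared by the paper: its denominator $\hat{\theta}_1^2 - \kappa^2\log(t)/t$ must be positive and its constants $v_1, v_2$ scale like $\hat{\theta}_1^{-3}$, so the paper implicitly assumes exactly the separation $\theta_1 \geq \theta_1^{\min} > 0$ that you state explicitly (and your burn-in remark is the honest way to discharge it). One small caution: you assert the price deviation is exactly $\tfrac12\Delta_\mathcal{H}(t)$ via $p=(\mathcal{H}(\theta)+a)/2$; that is consistent with the algorithm but is not needed for the theorem as stated, which concerns $\Delta_\mathcal{H}$ itself.
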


\begin{subequations}
    \begin{alignat}{2}
        &\!\max_{\theta}   &\qquad \qquad \mathcal{H}(\theta) &= \ \frac{\theta_0}{\theta_1}  \label{eq:h-max-theta} \\
        &\text{subject to} &\qquad       \sqrt{ (\theta_0 - \theta^*_0)^2 + (\theta_1 - \theta^*_1)^2 } &\leq \kappa \sqrt{\log(t)/t}  \label{eq:theta_constr}
    \end{alignat}
\end{subequations}

\textbf{Sketch of Proof:} As illustrated in Fig. \ref{fig:conf-ball}, we specify the optimization problem, where with probability $1- \delta$, the 2-norm error of the parameter estimates for $\theta$ lie within the confidence ball characterized by a radius less than $\kappa \sqrt{\log(t)/t}$. The general maximization problem is expressed in Eq. \eqref{eq:h-max-theta} and \eqref{eq:theta_constr}. The optimistic estimate of the $\theta$, denoted as $\mathcal{H}^*(\theta)$ (which we can denote in short as $\mathcal{H}^*$) lies in the extreme point between a ray $\mathcal{H}(\theta)$ and the boundary of $||\theta - \theta^*||_2 \leq \kappa \sqrt{\log(t)} / \sqrt{t}$.

\begin{align}
        \mathcal{H}^* \theta_1 &= \theta_0 \label{eq:theta_line} \\
        \frac{| \mathcal{H}^* \theta_1 - \theta_0 |}{\sqrt{ {\mathcal{H}^*}^2+1 }} &= \kappa\sqrt{\log(t)/t} \label{eq:point-to-line}    
\end{align}

Leveraging Eq. \eqref{eq:point-to-line}, we can construct a bound on $\Delta_\mathcal{H}(t)$. Theorem \ref{thm:bound-del-ht} characterizes the shrinking nature of $\Delta_\mathcal{H}(t)$ with respect to $t$, as more samples are observed. Similar to Lemma \ref{thm:theta-2-norm}, there also exists some $\kappa_H$, where $\kappa_H \sqrt{\log(t)/t}$ bounds $\Delta_\mathcal{H}(T)$ as expressed in Theorem \ref{thm:bound-del-ht}. The establishment of this bound characterizes the pricing deviation of the follower pricing strategy, and is used to prove Theorem \ref{thm:follower-regret}. To obtain a closed form solution to this optimization problem, we note the closed form expression of the Euclidean distance between a point in 2-space $(\theta_0, \theta_1)$ to a line described in Eq. \eqref{eq:theta_line}. This distance measure is expressed  in Eq. \eqref{eq:point-to-line}. %
\begin{figure}[!htb]
\minipage{0.4\textwidth}
    \tikzset{every picture/.style={line width=0.75pt}} \hspace*{-2.4em}
    \begin{tikzpicture}[x=0.75pt,y=0.75pt,yscale=-0.73,xscale=0.73,scale=0.73] 

        \draw    (51,62.64) -- (51.19,330.91) ;
        \draw   (51.19,330.91) -- (509.19,330.91) ;

        \draw [dotted, color={rgb, 255:red, 74; green, 144; blue, 226 },draw opacity=1 ][fill={rgb, 255:red, 74; green, 144; blue, 226 }  ,fill opacity=1 ][line width=1.5]    (51.2,142.16) -- (228.2,330.16) ;
        \draw [dotted, color={rgb, 255:red, 74; green, 144; blue, 226 },draw opacity=1 ][fill={rgb, 255:red, 74; green, 144; blue, 226 }  ,fill opacity=1 ][line width=1.5]    (51.2,80.16) -- (470.2,331.16) ;
        \fill[color={rgb, 255:red, 202; green, 235; blue, 245 }] (51.2,142.16) -- (228.2,330.16) -- (470.2,331.16) -- (51.2,80.16);
        

        \draw [color={rgb, 255:red, 126; green, 211; blue, 33 }  ,draw opacity=1 ][line width=1.5]     (51.2,112.16) -- (360.2,330.16) ;
        \draw [color={rgb, 255:red, 126; green, 211; blue, 33 }  ,draw opacity=1 ]   (382,239) .. controls (329.27,273.83) and (326.03,221.53) .. (271.82,262.38) ;
        \draw [shift={(271,263)}, rotate = 322.63] [color={rgb, 255:red, 126; green, 211; blue, 33 }  ,draw opacity=1 ][line width=0.75]    (10.93,-3.29) .. controls (6.95,-1.4) and (3.31,-0.3) .. (0,0) .. controls (3.31,0.3) and (6.95,1.4) .. (10.93,3.29)   ;

        
        \draw [dashed, color={rgb, 255:red, 255; green, 2; blue, 2 },draw opacity=1 ]   (99,106) -- (99,331.67) ;

        \draw [color={rgb, 255:red, 74; green, 144; blue, 226 }  ,draw opacity=1 ]   (221.43,126.29) .. controls (168.69,161.11) and (201.1,93.96) .. (147.25,134.66) ;
        \draw [shift={(146.43,135.29)}, rotate = 322.63] [color={rgb, 255:red, 74; green, 144; blue, 226 }  ,draw opacity=1 ][line width=0.75]    (10.93,-3.29) .. controls (6.95,-1.4) and (3.31,-0.3) .. (0,0) .. controls (3.31,0.3) and (6.95,1.4) .. (10.93,3.29)   ;
        
        \draw [dashed, color={rgb, 255:red, 74; green, 144; blue, 226 }  ,draw opacity=1 ]   (51,255) -- (335,255);
        \draw [dashed, color={rgb, 255:red, 74; green, 144; blue, 226 }  ,draw opacity=1 ]   (335,255) -- (335,331.67);

        \draw[dashed,color={rgb, 255:red, 126; green, 211; blue, 33 }  ,draw opacity=1 ]    (51,227) -- (214,227) ;
        \draw[dashed, color={rgb, 255:red, 126; green, 211; blue, 33 }  ,draw opacity=1 ]   (214,227) -- (214,331.67) ;

        \draw (330,195) node [anchor=north west][inner sep=0.75pt]   [align=left] {$\Gamma_\theta(p) = \theta_0 - \theta_1 p$};
        
        \draw (93,343) node [anchor=north west][inner sep=0.75pt]   [align=left] {$a$};
        \draw (325,343) node [anchor=north west][inner sep=0.75pt]   [align=left] {$\bar{p}_0$};
        \draw (465,343) node [anchor=north west][inner sep=0.75pt]   [align=left] {$\bar{p}$};
        \draw (-15,176.45) node [anchor=north west][inner sep=0.75pt]   [align=left] {$\mathbbm{E}[d]$};
        \draw (199,343) node [anchor=north west][inner sep=0.75pt]   [align=left] {$p_0$};
        \draw (231,109) node [anchor=north west][inner sep=0.75pt]   [align=left] {$||\theta - \theta^*||_2 \leq \kappa \sqrt{\log(t)/t}$};
        \end{tikzpicture}
    \caption{Linear demand uncertainty.} \label{fig:lin-demand}
\endminipage\hfill
\minipage{0.4\textwidth}
    \tikzset{every picture/.style={line width=0.75pt}} \hspace*{0.6em} 
    \begin{tikzpicture}[x=0.75pt,y=0.75pt,yscale=-0.71,xscale=0.71,scale=0.71]
        
        \draw    (131.6,63.28) -- (131.88,345.05) ;
        \draw   (131.88,345.05) -- (441.14,345.05) ;
        \draw[dashed, color={rgb, 255:red, 74; green, 144; blue, 226 }  ,draw opacity=1]   (233.07,204.74) .. controls (232.81,176.11) and (255.8,152.69) .. (284.43,152.44) .. controls (313.05,152.18) and (336.47,175.17) .. (336.73,203.8) .. controls (336.99,232.42) and (314,255.84) .. (285.37,256.1) .. controls (256.74,256.36) and (233.33,233.36) .. (233.07,204.74) -- cycle ;
        \fill[color={rgb, 255:red, 202; green, 235; blue, 245}]   (233.07,204.74) .. controls (232.81,176.11) and (255.8,152.69) .. (284.43,152.44) .. controls (313.05,152.18) and (336.47,175.17) .. (336.73,203.8) .. controls (336.99,232.42) and (314,255.84) .. (285.37,256.1) .. controls (256.74,256.36) and (233.33,233.36) .. (233.07,204.74) -- cycle ;
        
        \draw [color={rgb, 255:red, 74; green, 144; blue, 226 }  ,draw opacity=1 ]   (131.88,345.05) -- (285.22,108.9) ;
        \draw [shift={(286.31,107.23)}, rotate = 123] [color={rgb, 255:red, 74; green, 144; blue, 226 }  ,draw opacity=1 ][line width=0.75]    (10.93,-3.29) .. controls (6.95,-1.4) and (3.31,-0.3) .. (0,0) .. controls (3.31,0.3) and (6.95,1.4) .. (10.93,3.29)   ;
        \draw [color={rgb, 255:red, 208; green, 2; blue, 27 }  ,draw opacity=1 ]   (131.88,345.05) -- (398.46,202.4) ;
        \draw [shift={(400.22,201.45)}, rotate = 151.85] [color={rgb, 255:red, 208; green, 2; blue, 27 }  ,draw opacity=1 ][line width=0.75]    (10.93,-3.29) .. controls (6.95,-1.4) and (3.31,-0.3) .. (0,0) .. controls (3.31,0.3) and (6.95,1.4) .. (10.93,3.29)   ;
        
        \draw[dashed] (131.74,204.16) -- (284.9,204.27) ;
        \draw[dashed]    (284.9,204.27) -- (285.18,345.89) ;
        \draw[dashed]  (244.11,171.92) -- (284.9,204.27) ;
        

        \draw   (285.95,202.16) .. controls (288.92,198.55) and (288.6,195.27) .. (284.99,192.31) -- (281.43,189.38) .. controls (276.28,185.15) and (275.18,181.23) .. (278.14,177.63) .. controls (275.18,181.23) and (271.12,180.92) .. (265.97,176.69)(268.29,178.59) -- (256.43,168.84) .. controls (252.82,165.88) and (249.54,166.2) .. (246.57,169.81) ;
        \draw    (371.43,155.43) .. controls (318.25,200.21) and (337.77,121.22) .. (282.27,173.63) ;
        \draw [shift={(281.43,174.43)}, rotate = 316.31] [color={rgb, 255:red, 0; green, 0; blue, 0 }  ][line width=0.75]    (10.93,-3.29) .. controls (6.95,-1.4) and (3.31,-0.3) .. (0,0) .. controls (3.31,0.3) and (6.95,1.4) .. (10.93,3.29)   ;
        
        \draw (94,192.05) node [anchor=north west][inner sep=0.75pt]   [align=left] {$\theta_0$};
        \draw (280,355) node [anchor=north west][inner sep=0.75pt]   [align=left] {$\theta_1$};
        \draw (250,71) node [anchor=north west][inner sep=0.9pt]   [align=left] {$\bar{p} = \underset{\theta \in \mathcal{C}^t } { \mathrm{argmax}} \ \theta_0/\theta_1 $};
        \draw (378,128) node [anchor=north west][inner sep=0.75pt]   [align=left] {$\kappa \sqrt{\log(t)/t}$};
        
    \end{tikzpicture}
    \caption{Confidence ball.} \label{fig:conf-ball}
\endminipage\hfill
\caption*{Fig. \ref{fig:lin-demand} represents the uncertainty of the linear demand function given $||\theta - \theta^*||_2 \leq \kappa \sqrt{\log(t)/t}$. We can see the optimistic estimate of the riskless price, $\bar{p}_0$ is proportional to the estimates of $
\bar{p} = \theta_0/\theta_1$. Fig. \ref{fig:conf-ball} visualizes a confidence ball with the radius $\kappa \sqrt{\log(t)/t}$. The extreme values of $\bar{p}$ are represented by the rays extending from the origin and intersecting the maximum radius of the confidence ball.} 
\end{figure}


From Lemma \ref{thm:cum-del-HT} we can see that the cumulative optimistic pricing deviation of the follower is $\Delta_\mathcal{H}(T) \in \mathcal{O}(\sqrt{T \log(T)})$. This will have further ramifications when we compute the Stackelberg regret in Theorem \ref{thm:leader-regret}. For interest to the reader, the current bound in Lemman \ref{thm:cum-del-HT} would naturally extend to non-linear settings such as exponential class General Linear Models (GLM's) (we provide a description of this Appendix \ref{sec:non_lin_demand_exp}).

\begin{lemma} \label{thm:cum-del-HT}
    \textbf{Bounding the Cumulation of $\Delta_\mathcal{H}(t)$}: Suppose we wish to measure the cumulative value of $\Delta_\mathcal{H}(t)$ from $1$ to $T$, denoted as $\Delta_\mathcal{H}(T)$. Then  $\Delta_\mathcal{H}(T) \leq 2 \kappa_H  \sqrt{T \log(T)} $. (Proof in Appendix \ref{prf:cum-del-HT}.)
\end{lemma}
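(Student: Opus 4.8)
The plan is to bound the cumulative deviation $\Delta_{\mathcal H}(T) = \sum_{t=1}^T \Delta_{\mathcal H}(t)$ by summing the per-round bound from Theorem \ref{thm:bound-del-ht}, which gives $\Delta_{\mathcal H}(t) \le \kappa_H \sqrt{\log(t)/t}$ for $t \ge e$. So the task reduces to controlling $\sum_{t=1}^T \sqrt{\log(t)/t}$. First I would split off the small initial terms (those $t < e$, i.e. $t \in \{1,2\}$) which contribute only an additive constant, and then handle the tail $\sum_{t=3}^T \sqrt{\log(t)/t}$ by the standard integral comparison: since $f(x) = \sqrt{\log(x)/x}$ is eventually decreasing, $\sum_{t=3}^T \sqrt{\log(t)/t} \le \int_{2}^{T} \sqrt{\log(x)/x}\,dx$ (up to one boundary term).

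The key computation is the integral $\int \sqrt{\log(x)/x}\,dx$. Substituting $u = \log x$, $du = dx/x$, this becomes $\int \sqrt{u}\,du = \tfrac{2}{3} u^{3/2} = \tfrac{2}{3}(\log x)^{3/2}$. Hence $\int_2^T \sqrt{\log(x)/x}\,dx \le \tfrac{2}{3}(\log T)^{3/2}$. Then I would observe that $(\log T)^{3/2} \le \sqrt{T}\,\sqrt{\log T}$ for $T$ large enough (indeed $(\log T)^{3/2} = \sqrt{\log T}\cdot \log T \le \sqrt{\log T}\cdot \sqrt T$ since $\log T \le \sqrt T$ for all $T \ge 1$), which yields $\sum_{t=1}^T \sqrt{\log(t)/t} \le O(\sqrt{T\log T})$, and after absorbing constants and the finitely many initial terms into the leading constant, $\Delta_{\mathcal H}(T) \le 2\kappa_H \sqrt{T\log T}$ as claimed. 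The constant $2$ is generous: the integral bound already gives a factor $\tfrac{2}{3}$, so the slack comfortably covers the boundary term and the $t<e$ terms.

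The main obstacle, such as it is, is purely bookkeeping: verifying that $f(x)=\sqrt{\log(x)/x}$ is monotonically decreasing on the relevant range so the integral comparison is valid (its derivative has sign determined by $1 - \log x \le 0$, i.e. it decreases for $x \ge e$, so one must start the integral comparison at $t \ge 3$ and treat $t \in \{1,2\}$ separately), and then checking that the explicit constant $2\kappa_H$ genuinely dominates the sum of the leading term $\tfrac{2}{3}\kappa_H(\log T)^{3/2}$, the boundary term $\kappa_H\sqrt{\log 3/3}$, and the contributions of $t=1,2$ for all $T$ in the regime of interest (e.g. $T \ge 3$). None of this is deep, so I would state the monotonicity observation, invoke the integral bound, perform the $u=\log x$ substitution, apply $\log T \le \sqrt T$, and collect constants. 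The whole argument is a few lines once the substitution is in place.
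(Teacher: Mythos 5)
Your overall strategy (sum the per-round bound from Theorem \ref{thm:bound-del-ht}, compare the sum to an integral, evaluate the integral) is the same as the paper's, but your central computation is wrong, and it is not a bookkeeping slip. You substitute $u=\log x$, $du = dx/x$ into $\int \sqrt{\log(x)/x}\,dx$ and conclude the integral equals $\int\sqrt{u}\,du = \tfrac{2}{3}(\log x)^{3/2}$. That substitution would be valid for the integrand $\frac{\sqrt{\log x}}{x}$, but the integrand here is $\sqrt{\frac{\log x}{x}} = \frac{\sqrt{\log x}}{\sqrt{x}}$, which carries only half a power of $x$ in the denominator. Carrying the substitution out correctly ($x=e^u$, $dx=e^u\,du$) gives
\begin{align*}
\int \sqrt{\frac{\log x}{x}}\,dx \;=\; \int \sqrt{u}\,e^{u/2}\,du \;=\; 2\sqrt{u}\,e^{u/2} \;-\; \sqrt{2\pi}\,\operatorname{erfi}\!\left(\sqrt{u/2}\right) \;=\; 2\sqrt{x\log x} \;-\; \sqrt{2\pi}\,\operatorname{erfi}\!\left(\sqrt{\log(x)/2}\right),
\end{align*}
which is $\Theta(\sqrt{x\log x})$, not $\Theta((\log x)^{3/2})$. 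Concretely, your claimed bound $\sum_{t\le T}\sqrt{\log(t)/t} \le \tfrac{2}{3}(\log T)^{3/2}$ is false: the left side is genuinely of order $\sqrt{T\log T}$ (for $T=10^6$ it is in the thousands while $(\log T)^{3/2}\approx 51$). Your final inequality $\Delta_{\mathcal H}(T)\le 2\kappa_H\sqrt{T\log T}$ happens to be the true statement, but only because you then pad the (false) intermediate bound up to $\sqrt{T\log T}$ via $\log T\le\sqrt T$; the derivation as written does not establish it, and your remark that the constant $2$ is ``generous'' is an artifact of the miscomputed integral --- in fact $2$ is essentially the sharp leading constant, since the antiderivative's dominant term is exactly $2\sqrt{T\log T}$.

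This is precisely how the paper's proof proceeds: it evaluates $\kappa_H\int_1^T\sqrt{\log(t)/t}\,dt$ to $\kappa_H\bigl(2\sqrt{T\log T}-\sqrt{2\pi}\,\operatorname{erf}(\sqrt{\log(T)/2})\bigr)$ (the special function there should be $\operatorname{erfi}$, but it is positive either way) and drops the subtracted term to obtain $2\kappa_H\sqrt{T\log T}$. Your auxiliary observations --- that $f(x)=\sqrt{\log(x)/x}$ increases up to $x=e$ and decreases thereafter, so the sum-to-integral comparison needs the first couple of terms handled separately --- are correct and are in fact glossed over by the paper, which asserts the sum is dominated by the integral merely because the summand is positive. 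So your framing is sound; you need only replace the integral evaluation with the correct one.
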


\begin{lemma} \label{thm:opt-p-star-decrease}
    \textbf{Convergence  of $\bar{p}^*$ to $p^*$:}  Suppose the solution to Eq. \eqref{eq:mills-opt-price} is computable. For any symmetric demand distribution, the optimistic estimate of the optimal market price from Eq. \eqref{eq:mills-opt-price}, denoted as $\bar{p}^*$ which is the solution to $p^*$ under optimistic parameters $\bar{\theta}$, is decreasing with respect to $T$ such that $\bar{p}^*(\theta_{T+1}) \leq \bar{p}^*(\theta_T)$, where the  difference $\bar{p}^*(\theta_T) - p^*$ is bounded by $\mathcal{O}(\log(T)/T)$. (Proof in Appendix \ref{prf:opt-p-star-decrease} )
\end{lemma}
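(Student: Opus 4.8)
\textbf{Proof proposal (Lemma \ref{thm:opt-p-star-decrease}).} The plan is to view $\bar p^{*}(\theta_T)$ as the Mills optimum of Eq. \eqref{eq:mills-opt-price} evaluated at the optimistic parameter $\bar\theta_T = \arg\max_{\theta\in\mathcal{C}^T}\mathcal{H}(\theta)$, and to reduce the gap $\bar p^{*}(\theta_T)-p^{*}$ to two scalar error terms that the earlier results already control. Concretely, writing $p^{*}=p_0-\tfrac{\Psi}{2\theta_1^{*}}$ under $\theta^{*}$ and $\bar p^{*}(\theta_T)=\bar p_0(\bar\theta_T)-\tfrac{\Psi}{2(\bar\theta_T)_1}$ under $\bar\theta_T$ --- where by Lemma \ref{thm:invariance-of-z} the shortage term $\Psi$ is the same constant in both expressions (symmetric demand, constant $\sigma$), and $\bar p_0(\bar\theta_T)=\tfrac{\mathcal{H}(\bar\theta_T)+a}{2}$ is the riskless price attached to the optimistic choke point as in \eqref{eq:ba_upper_bound_true} --- one obtains the decomposition $\bar p^{*}(\theta_T)-p^{*}=\tfrac12\Delta_{\mathcal{H}}(T)-\tfrac{\Psi}{2}\big(\tfrac1{(\bar\theta_T)_1}-\tfrac1{\theta_1^{*}}\big)$. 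The slope perturbation is Lipschitz in $\bar\theta_T-\theta^{*}$ since $\theta_1$ is bounded away from $0$ by the game assumptions, so both pieces are controlled by Theorem \ref{thm:bound-del-ht} and Lemma \ref{thm:theta-2-norm}.

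For the monotonicity claim $\bar p^{*}(\theta_{T+1})\le\bar p^{*}(\theta_T)$, I would work along the optimistic ray singled out in the proof sketch of Theorem \ref{thm:bound-del-ht} and in Fig. \ref{fig:conf-ball}, i.e. the direction $\propto(\theta_1^{*},-\theta_0^{*})$ in which $\mathcal{H}$ increases. Differentiating the formula for $\bar p^{*}$ with respect to the confidence radius $r_t=\kappa\sqrt{\log t/t}$ along this ray, the gain in the riskless price is seen to dominate the opposing decrease of $(\bar\theta_T)_1$ precisely when $\|\theta^{*}\|_2^{2}>\Psi\,\theta_0^{*}$, which holds under the mild regularity of the setting (the expected shortage at the optimal critical fractile of \eqref{eq:newsvendor-cf} is dominated by the choke demand $\theta_0^{*}$). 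Since $r_t$ is itself decreasing for $t\ge e$ and $\theta^{*}\in\mathcal{C}^t$ with probability $1-\delta$ --- replacing, if needed, $\mathcal{C}^t$ by the nested intersection $\bigcap_{s\le t}\mathcal{C}^s$, which still contains $\theta^{*}$ on the same event --- the admissible set over which $\mathcal{H}$ is maximised shrinks with $t$, and monotonicity of $\bar p^{*}$ follows.

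For the rate, the decomposition above already yields the crude bound $|\bar p^{*}(\theta_T)-p^{*}|=\mathcal{O}(\sqrt{\log T/T})$ directly from Theorem \ref{thm:bound-del-ht} and Lemma \ref{thm:theta-2-norm}. To sharpen it to $\mathcal{O}(\log T/T)$ I would Taylor-expand the Mills map $\theta\mapsto p^{*}(\theta)$ about $\theta^{*}$ along the optimistic direction and use the stationarity condition defining $p^{*}$ in \eqref{eq:mills-opt-price}, together with the invariance of $\Psi$ from Lemma \ref{thm:invariance-of-z}, to argue that the leading $\mathcal{O}(r_T)$ contributions of the riskless-price term and of the shortage-correction term combine so as to leave only the quadratic remainder, which is $\mathcal{O}(\|\bar\theta_T-\theta^{*}\|_2^{2})=\mathcal{O}(\log T/T)$ by the squared form of Lemma \ref{thm:theta-2-norm}; the computability/smoothness hypothesis on \eqref{eq:mills-opt-price} supplies the Hessian bound needed to control that remainder.

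I expect this last step to be the main obstacle. Each of the two terms in the decomposition is individually $\Theta(\sqrt{\log T/T})$, so the improvement to $\mathcal{O}(\log T/T)$ is not a matter of bounding them separately but of extracting a genuine cancellation of leading order from the price-setting Newsvendor optimality conditions; in fact a naive substitution of $\bar\theta_T$ into \eqref{eq:mills-opt-price} appears to leave a non-vanishing $\Theta(\sqrt{\log T/T})$ term, so the argument likely has to treat $\bar p^{*}$ through the full Mills fixed point self-consistently rather than as a plain plug-in, and to align the optimistic perturbation carefully with the direction that kills the first-order error. By comparison the monotonicity sign condition $\|\theta^{*}\|_2^{2}>\Psi\,\theta_0^{*}$ and the Lipschitz bounds are routine bookkeeping.
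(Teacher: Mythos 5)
Your decomposition is essentially the paper's own route. The paper's proof also starts from $p^*_\theta(z) = p_0(\theta) - \Psi(z)/(2\theta_1)$, invokes Lemma \ref{thm:invariance-of-z} to freeze $\Psi$ across parameter updates, isolates the perturbation in $\theta_1$ (by setting $\hat{\theta}_0 = \theta_0^*$ and bounding $\Delta_{\theta_1} \leq \kappa\sqrt{\log(t)/t}$), and then passes to the limit $t \to \infty$. Your explicit two-term split into $\tfrac{1}{2}\Delta_{\mathcal{H}}(T)$ plus the $1/\theta_1$ perturbation is a somewhat more careful version of the same computation, and your Lipschitz remark (requiring $\theta_1$ bounded away from $0$) is a hypothesis the paper uses silently.

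The substantive point is your final paragraph, and your suspicion is correct --- but the defect lies in the paper, not in your argument. Lemma \ref{thm:theta-2-norm} and Theorem \ref{thm:bound-del-ht} only give $\mathcal{O}(\sqrt{\log(t)/t})$ control on $\|\bar{\theta}_T - \theta^*\|_2$ and on $\Delta_{\mathcal{H}}(T)$, so both terms of your decomposition are generically $\Theta(\sqrt{\log T/T})$ and no first-order cancellation is exhibited anywhere. The paper's appendix asserts that ``the rate of convergence to $\theta^*$ is $\mathcal{O}(\log(t)/t)$'' while citing Eq. \eqref{eq:bigO-theta-del-decrease}, which actually reads $\Delta_{\mathcal{H}}(t) \in \mathcal{O}(\sqrt{\log(t)/t})$, and similarly describes the maximal $\Delta_{\theta_1} = \kappa\sqrt{\log(t)/t}$ as ``shrinking at a rate of $\mathcal{O}(\log(t)/t)$''; these are misquotations of its own bounds, and the written proof establishes only asymptotic convergence $\bar{p}^* \to p^*$ at rate $\mathcal{O}(\sqrt{\log(T)/T})$. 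So you should not invest effort in the Taylor-expansion rescue: there is no cancellation to extract from \eqref{eq:mills-opt-price} given that $\Psi$ is held fixed by Lemma \ref{thm:invariance-of-z}, and the lemma's rate should be read as $\mathcal{O}(\sqrt{\log(T)/T})$. Likewise, the paper never actually proves the monotonicity claim $\bar{p}^*(\theta_{T+1}) \leq \bar{p}^*(\theta_T)$; your observation that it needs nested (or at least radius-monotone) confidence sets plus a sign condition on the derivative along the optimistic ray is a genuine requirement that the published argument omits entirely.
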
 

In Algorithm \ref{alg:se-newsv} we propose that the follower prices the product at the optimistic riskless price $\bar{p}_0$, and not necessarily the optimistic optimal price $\bar{p}^*$. The profits derived from pricing with any of these suboptimal prices would result in suboptimal follower rewards (we characterize this regret later in Theorem \ref{thm:follower-regret}).

\subsection{Analysis from the Follower's Perspective} \label{sec:oful-npg-follower-reg}

For any approximately best responding follower, we can always expect a worst case error of $\epsilon_B$, defined in Eq. \eqref{eq:br_approx_def_eps_b}, for the follower at any discrete time interval. Thus, by intuition alone, we can presume the regret to be at least $\mathcal{O}(T)$. Nevertheless, it is worth noting that there exists these two distinct components to the regret, one caused specifically by the approximation of the best response, and one caused by the parameter uncertainty of the learning algorithm itself. The problem ultimately reduces to a contextual linear bandit problem.

Given our bounds on the optimistic pricing behaviour of the follower under context $a$ and uncertainty $\theta \in \mathcal{C}^t$, for any sequence of actions, $\{ a^1, ...  , a^T \}$, the expected regret of the follower is expressed in Eq. \eqref{eq:follower-regret}.



\begin{theorem} \label{thm:follower-regret}
    \textbf{Follower Regret in Online Learning for NPG:} Given pure leader strategy $\pi_A$, the worst-case regret of the follower, $R_B^T(\pi_A)$, as defined in Eq. \eqref{eq:follower-regret} when adopting the LNPG algorithm, is bounded by $R_B^T(\pi_A) \leq \aleph_B + \epsilon_B T - \theta_1 \kappa_H^2 \log^2(T)$, for sufficiently large values of $T$, where $\aleph_B, \epsilon_B, \theta_1, \kappa_H$ are positive real constants in $\mathbb{R}^+$. (Proof in Appendix \ref{prf:follower-regret}.)
\end{theorem}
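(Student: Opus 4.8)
\textbf{Proof proposal (Theorem~\ref{thm:follower-regret}).}
The plan is to split the per-round follower regret of Eq.~\eqref{eq:follower-regret} into the irreducible surrogate-approximation gap and the shrinking parameter-estimation loss, using the perfect-information surrogate optimizer $\widetilde{\mathfrak{B}}$ of Eq.~\eqref{eq:br_approx_def} as a pivot. For each $t$ I would write
\begin{align}
\mathbbm{E}[\mathcal{G}_B(a^t,\mathfrak{B}(a^t))] - \mathbbm{E}[\mathcal{G}_B(a^t,b^t)]
&= \Big(\mathbbm{E}[\mathcal{G}_B(a^t,\mathfrak{B}(a^t))] - \mathbbm{E}[\mathcal{G}_B(a^t,\widetilde{\mathfrak{B}}(a^t))]\Big)
+ \Big(\mathbbm{E}[\mathcal{G}_B(a^t,\widetilde{\mathfrak{B}}(a^t))] - \mathbbm{E}[\mathcal{G}_B(a^t,b^t)]\Big), \label{eq:prop-split}
\end{align}
where by Eq.~\eqref{eq:br_approx_def_eps_b} the first parenthesis is at most $\epsilon_B$, contributing $\epsilon_B T$ after summation. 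For the second parenthesis I use the LNPG prescription of Algorithm~\ref{alg:se-newsv}: the follower prices at $\bar p_0 = (\mathcal{H}(\bar\theta)+a^t)/2$ and orders $\bar b_a^t$ via the critical fractile of Eq.~\eqref{eq:ba_upper_bound_true}, while $\widetilde{\mathfrak{B}}(a^t)$ prices at the true riskless price $p_0 = (\mathcal{H}(\theta^*)+a^t)/2$, so the pricing displacement is exactly $\bar p_0 - p_0 = \tfrac12\Delta_{\mathcal{H}}(t)$, which is $\mathcal{O}(\sqrt{\log t/t})$ by Theorem~\ref{thm:bound-del-ht}.

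The second step is a second-order expansion of the reward around $p_0$ against the quadratic surrogate $\mathcal{U}_B(p\mid a)=(\theta_0-\theta_1 p)(p-a)$ of Eq.~\eqref{eq:riskless-price-lin-func}. Since $p_0$ is the maximizer of $\mathcal{U}_B(\cdot\mid a)$ the first-order term in $\Delta_{\mathcal{H}}(t)$ vanishes, and since $\partial_p^2 \mathcal{U}_B = -2\theta_1$ the leading surviving term is proportional to $-\theta_1(\bar p_0-p_0)^2 = -\tfrac14\theta_1\Delta_{\mathcal{H}}(t)^2$; the vanishing of the linear term is precisely why the follower-regret residual scales like $\log^2 T$ rather than like the $\sqrt{T\log T}$ cumulative displacement of Lemma~\ref{thm:cum-del-HT}. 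The discrepancy between $\bar b_a^t$ and the order $b_0(p_0,a^t)$ is absorbed through the Newsvendor cost identity $\mathbbm{E}[\mathcal{G}_B(p,b\mid a)] = \mathcal{U}_B(p\mid a) - a\big(b-\Gamma_\theta(p)\big) - p\,\Psi_\theta(b)$ together with Lemma~\ref{thm:invariance-of-z} (invariance of $\Psi_\theta$ under the estimate) and Lemma~\ref{thm:opt-p-star-decrease} ($\bar p^{*}\to p^{*}$ at rate $\mathcal{O}(\log T/T)$), which show that the order-quantity and $p^{*}$-versus-$p_0$ cross terms are of lower order than $\log^2 T$. Substituting Theorem~\ref{thm:bound-del-ht} and summing,
\begin{align}
\sum_{t=1}^{T}\Delta_{\mathcal{H}}(t)^2 \;\le\; \kappa_H^2\sum_{t=1}^{T}\frac{\log t}{t} \;\le\; \tfrac12\kappa_H^2\log^2 T + \mathcal{O}(\log T). \label{eq:prop-sum}
\end{align}
Folding the $\mathcal{O}(\log T)$ residue, the OFUL burn-in constants implied by Lemma~\ref{thm:theta-2-norm}, and all remaining time-independent quantities into a single constant $\aleph_B>0$ then yields $R_B^T(\pi_A)\le \aleph_B + \epsilon_B T - \theta_1\kappa_H^2\log^2 T$ once $T$ is large enough that $\aleph_B$ dominates the lower-order corrections.

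The step I expect to be the main obstacle is pinning down the sign and constant of the $\log^2 T$ term. The bound $\epsilon_B T$ on the first parenthesis of \eqref{eq:prop-split} is a worst-case over-count --- it charges the full riskless-price approximation gap at every round even though, during the transient, the follower's optimistic over-pricing $\bar p_0 > p_0$ perturbs the realized gap --- and that perturbation is itself quadratic in $\Delta_{\mathcal{H}}(t)$, so the two $\Delta_{\mathcal{H}}(t)^2$-order contributions must be reconciled against one another rather than bounded separately; carrying this out is what turns the net second-order correction negative. Making it rigorous requires simultaneously tracking the second-order expansions of both $\mathcal{U}_B(\cdot\mid a)$ and $\mathbbm{E}[\mathcal{G}_B(\cdot\mid a)]$, verifying via Lemmas~\ref{thm:invariance-of-z} and~\ref{thm:opt-p-star-decrease} that no cross term contributes at order $\log^2 T$, and keeping the OFUL failure probability $\delta$ and its associated burn-in out of the dominant terms.
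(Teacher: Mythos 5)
Your high-level ingredients match the paper's (a constant per-round approximation charge plus a $\log^2 T$ contribution driven by the quadratic behaviour of $\mathcal{U}_B$ at its maximizer and the $\sqrt{\log t/t}$ rate of Theorem~\ref{thm:bound-del-ht}), but your decomposition cannot be pushed to the stated bound, for two reasons. First, your second parenthesis compares the \emph{true} reward $\mathbbm{E}[\mathcal{G}_B(\cdot)]$ at $\widetilde{\mathfrak{B}}(a^t)$ and at $b^t$; the point you expand around, $p_0$, is a critical point of the surrogate $\mathcal{U}_B(\cdot\mid a)$ but not of $\mathbbm{E}[\mathcal{G}_B(\cdot\mid a)]$ (whose maximizer is $p^*<p_0$), so the first-order term in the displacement $\bar p_0-p_0$ does not vanish and that piece is $\Theta(\sqrt{\log t/t})$ per round, cumulating to $\Theta(\sqrt{T\log T})$ and swamping the claimed $\log^2 T$. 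The vanishing-gradient argument is only available when both endpoints of the difference are evaluated on $\mathcal{U}_B$, which is exactly how the paper proceeds: it works with $\Delta_\mathcal{U}(t)=\mathcal{U}_B(p_0)-\mathcal{U}_B(\bar p_0)=\theta_1\Delta_\mathcal{H}(t)^2$ and confines every comparison of true rewards either to a time-independent constant $\mathcal{Z}_B(p_0)$ or to the finite transient $t\le T'$, where the linear-order discrepancy $\Delta_\mathcal{G}(t)=\mathbbm{E}[\mathcal{G}_B(p_0)]-\mathbbm{E}[\mathcal{G}_B(\bar p_0)]$ is bounded by a constant $\bar{\Delta}_\mathcal{G}$ and absorbed into $\aleph_B$.

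Second, and more fundamentally, with $\epsilon_B$ taken from Eq.~\eqref{eq:br_approx_def_eps_b} your split yields $\epsilon_B T$ \emph{plus} an asymptotically nonnegative learning term, and no reconciliation of second-order contributions can flip the sign: if the leader repeatedly plays the context attaining the supremum in Eq.~\eqref{eq:br_approx_def_eps_b}, the instantaneous regret converges to $\epsilon_B$ from above (since $\bar p_0\ge p_0\ge p^*$ pushes the realized reward below $\mathbbm{E}[\mathcal{G}_B(p_0)]$), so the cumulative regret is at least $\epsilon_B T-O(1)$ and any bound of the form $\aleph_B+\epsilon_B T-\theta_1\kappa_H^2\log^2 T$ would eventually be violated. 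The theorem's negative term is only consistent because the paper's proof uses a different, larger constant: it sets $\epsilon_B=\mathcal{Z}_B(p_0)=\mathcal{U}_B(p_0\mid a)-\mathbbm{E}[\mathcal{G}_B(p_0\mid a)]$, charging the full surrogate-to-true gap at the riskless price every round, and then, in the regime where $\mathbbm{E}[\mathcal{G}_B(p^*\mid a)]\le\mathcal{U}_B(\bar p_0\mid a)$, upper-bounds the optimal reward by $\mathcal{U}_B(\bar p_0)=\mathcal{U}_B(p_0)-\Delta_\mathcal{U}(t)$ rather than by $\mathcal{U}_B(p_0)$, so the per-round bound becomes $\mathcal{Z}_B(p_0)-\Delta_\mathcal{U}(t)$ and the subtracted $\Delta_\mathcal{U}(t)$ integrates to $-\theta_1\kappa_H^2\log^2(T)$. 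To repair your argument you would need to replace the pivot $\widetilde{\mathfrak{B}}(a^t)$ and the constant of Eq.~\eqref{eq:br_approx_def_eps_b} by the surrogate value $\mathcal{U}_B(\bar p_0\mid a^t)$ and the constant $\mathcal{Z}_B(p_0)$, and separately dispose of the transient rounds where $\mathbbm{E}[\mathcal{G}_B(p^*\mid a)]>\mathcal{U}_B(\bar p_0\mid a)$. (Your Newsvendor identity $\mathbbm{E}[\mathcal{G}_B(p,b\mid a)]=\mathcal{U}_B(p\mid a)-a(b-\Gamma_\theta(p))-p\,\Psi_\theta(b)$ is correct and cleaner than anything the paper writes down, but it does not by itself resolve either issue.)
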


\textbf{Sketch of Proof:} As the follower implements the OFUL algorithm (see Appendix \ref{sec:alg-abbasi}) to estimate the demand parameters $\theta$, we propose $\mathcal{U}_B(p_0|a)$ as the theoretically maximum achievable expected reward. We quantify the variation in $\mathcal{U}_B(\bar{p}_0|a)$, represented by $\Delta_\mathcal{U}(t)$, across a finite number of samples $T$. Furthermore, we provide an upper bound on the instantaneous regret using the larger upper-bounding discrepancy $\mathcal{U}_B(\bar{p}_0|a) - \mathbb{E}[\mathcal{G}_B(p^*|a)]$. It is worth noting that although $\mathcal{U}_B(p|a)$ is a concave function, $\mathbb{E}[\mathcal{G}_B(p|a)]$ may not be. Under perfect information, the follower may incur an approximation error $\tilde{\epsilon}$ derived from this discrepancy, rendering it $\tilde{\epsilon}$-suboptimal in the worst case scenario. Nonetheless, this error is independent of $t$, and the learning error for $\theta$. Next, we establish a bound for the linear function estimation to compute optimal riskless pricing, for this we can apply the results of Theorem \ref{thm:bound-del-ht}, and establish the bound of $\Delta_\mathcal{U}(t) \leq \theta_1 \kappa_H^2 \log(t)/t$. We then partition the learning time period into two phases: Case 2 denotes the period where $\mathbbm{E}[\mathcal{G}_B(p^*|a)] > \mathcal{U}_B(\bar{p}_0|a)$, occurring within the time interval $1 \leq t \leq T'$, and Case 1 denotes when $\mathbbm{E}[\mathcal{G}_B(p^*|a)] \leq \mathcal{U}_B(\bar{p}_0|a)$, occurring within the interval $T' \leq t \leq T$ (assuming sufficiently large values of $T$). We note the differences in regret behaviour within these regimes and focus on the asymptotic regime as $T$ approaches $\infty$, where the cumulative regret is bounded between the worst-case approximation error for $\mathbbm{E}[\mathcal{G}_B(p^*|a)]$, denoted as $\epsilon_B$, and the logarithmic term $\Delta_\mathcal{U}(t)$ pertaining to linear stochastic bandit learning, presented as $\aleph_B + \epsilon_B T - \theta_1 \kappa_H^2 \log^2(T)$. (Complete details are provided in Appendix \ref{prf:follower-regret}.)

To provide further insight, Theorem \ref{thm:follower-regret} suggests that the regret is bounded by the linear worst case approximation error $\epsilon_B T$, subtracted by the sublinear term, $\theta_1 \kappa_H^2 \log^2(T)$, for sufficiently large values of $T$. As the follower forms an approximate best response to the leader's action $a$ based on optimistic riskless pricing, we can expect linear regret in the worst-case. Nevertheless, knowing the characteristic behaviour of this regret is useful when we can quantify the learning error caused by the approximation of $\mathbb{E}[\mathcal{G}_B(\cdot)]$ with $\mathcal{U}_B(\cdot)$.


\subsection{Stackelberg Regret in the Newsvendor Pricing Game} \label{sec:leader_stackelberg_regret}

We focus on providing a theoretical worst-case guarantee of the regret from the leader's perspective as defined in Eq. \eqref{eq:leader-regret-def}, also known as Stackelberg regret under an approximate best responding follower. We see that given the equation Eq. \eqref{eq:newsvendor-cf}, the critical fractile used to compute $b$ is decreasing with respect to $a$. This makes $F^{-1}_\theta$ also strictly decreasing with respect to $a$, so $b$ is strictly decreasing with respect to $a$. Therefore there exists a unique $a$, under the assumptions provided in Section \ref{sec:assumptions}, which maximizes the upper bound on the leader profit $\mathcal{G}_A^*(\cdot)$, given the best response of the follower, $b$, for any leader action $a$.

\begin{align}
    \mathcal{G}_A^*(a, \mathfrak{B}(a)) \leq  \underset{a \in \mathcal{A} }{\mathrm{max}} \ a F_{\theta}^{-1} \Big(\frac{p_0 - a}{p_0} \Big) = \underset{a \in \mathcal{A} }{\mathrm{max}} \ a F_{\theta}^{-1} \Big(  1 - \frac{ 2 a }{ \mathcal{H}(\theta) + a} \Big) \label{eq:argmax-a-obj}
\end{align}

To bound the Stackelberg regret, we must effectively characterize the best response of the follower given the assumptions outlined in Sec. \ref{sec:assumptions}. The key idea is that the follower will 1) act optimistically in the face of uncertainty, and rationally and be greedy under his best approximation of the best response. Meaning he will not try to deceive or misplay its own actions in order to influence the leader's future actions.


 If the follower's strategy is known to the leader, and that the information available to both agents are the same, then in theory there exists no regret from the leader's perspective, as the leader has perfect premonition over the actions of the follower. In our case, regret comes from the existence of information asymmetry, for example when the follower has received more observations about the demand than the leader, and therefore tends to be more confident in their estimate of $\Gamma_\theta(p)$, and less optimistic in their ordering and pricing strategies. Algorithm \ref{alg:se-newsv} (LNPG) seeks to minimize the Stackelberg regret, as defined in Eq. \eqref{eq:leader-regret-def}, constituting a no-regret learner from the over the follower's strategy from the leader's perspective. Where $a^*$ denotes the optimal action from the leader's perspective. 
 
\begin{align}
    a^* &= \underset{a \in \mathcal{A}} { \mathrm{argmax}} \ \mathcal{G}_A(a, \mathfrak{B}(a)) \label{eq:a-max-leader}
\end{align}


 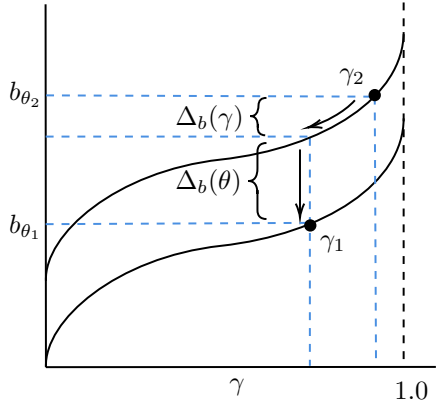
\begin{figure}
    \centering
    \tikzset{every picture/.style={line width=0.75pt}} 
        \begin{tikzpicture}[x=0.75pt,y=0.75pt,yscale=-0.65,xscale=0.65]
            
            \draw    (145.6,89.28) -- (145.88,371.05) ;
            \draw    (145.88,371.05) -- (448.43,370.86) ;
            \draw    (343,202) -- (343,253) ;
            \draw [shift={(343,255)}, rotate = 270] [color={rgb, 255:red, 0; green, 0; blue, 0 }  ][line width=0.75]    (10.93,-3.29) .. controls (6.95,-1.4) and (3.31,-0.3) .. (0,0) .. controls (3.31,0.3) and (6.95,1.4) .. (10.93,3.29)   ;
            \draw    (145.88,371.05) .. controls (144.43,334.86) and (200.43,285.86) .. (281.43,276.86) .. controls (362.43,267.86) and (423.43,227.86) .. (423.43,177.86) ;
            \draw[dashed]    (423.43,87.86) -- (423.43,370.86) ;
            \draw [dashed, color={rgb, 255:red, 74; green, 144; blue, 226 }  ,draw opacity=1 ]   (350.67,193) -- (350.67,370) ;
            \draw    (145.88,304.05) .. controls (144.43,267.86) and (200.43,218.86) .. (281.43,209.86) .. controls (362.43,200.86) and (423.43,160.86) .. (423.43,110.86) ;
            \draw    (385.6,163.8) .. controls (377.05,173.3) and (362.18,180.99) .. (350.77,184.48) ;
            \draw [shift={(349,185)}, rotate = 344.58] [color={rgb, 255:red, 0; green, 0; blue, 0 }  ][line width=0.75]    (10.93,-3.29) .. controls (6.95,-1.4) and (3.31,-0.3) .. (0,0) .. controls (3.31,0.3) and (6.95,1.4) .. (10.93,3.29)   ;
            \draw [dashed, color={rgb, 255:red, 74; green, 144; blue, 226 }  ,draw opacity=1 ]   (145.67,192) -- (350,192) ;
            \draw [dashed, color={rgb, 255:red, 74; green, 144; blue, 226 }  ,draw opacity=1 ]   (145.67,260) -- (350.67,259) ;
            \draw [dashed, color={rgb, 255:red, 74; green, 144; blue, 226 }  ,draw opacity=1 ]   (401,161) -- (402,370) ;
            \draw [dashed, color={rgb, 255:red, 74; green, 144; blue, 226 }  ,draw opacity=1 ]   (146,160) -- (401,161) ;
            \draw   (317,162) .. controls (313.02,161.86) and (310.96,163.78) .. (310.82,167.77) -- (310.82,167.77) .. controls (310.63,173.45) and (308.54,176.22) .. (304.56,176.09) .. controls (308.54,176.22) and (310.43,179.13) .. (310.24,184.82)(310.33,182.27) -- (310.24,184.82) .. controls (310.1,188.8) and (312.02,190.86) .. (316,191) ;
            \draw   (317,197) .. controls (312.33,197) and (310,199.33) .. (310,204) -- (310,216.5) .. controls (310,223.17) and (307.67,226.5) .. (303,226.5) .. controls (307.67,226.5) and (310,229.83) .. (310,236.5)(310,233.5) -- (310,249) .. controls (310,253.67) and (312.33,256) .. (317,256) ;
            
            \draw (414,379.91) node [anchor=north west][inner sep=0.75pt]   [align=left] {1.0};
            \draw (115,250) node [anchor=north west][inner sep=0.75pt]   [align=left] {$b_{\theta_1}$};
            \draw (115,147) node [anchor=north west][inner sep=0.75pt]   [align=left] {$b_{\theta_2}$};
            \draw (286,377.91) node [anchor=north west][inner sep=0.75pt]   [align=left] {$\gamma$};
            \draw (244,165) node [anchor=north west][inner sep=0.75pt]   [align=left] {$\Delta_b(\gamma)$};
            \draw (244,215) node [anchor=north west][inner sep=0.75pt]   [align=left] {$\Delta_b(\theta)$};

            \filldraw[black] (401,159.8) circle (3pt) node[anchor=south east]{$\gamma_2$};
            \filldraw[black] (351,261) circle (3pt) node[anchor=north west]{$\gamma_1$};

        \end{tikzpicture}
    \caption{Denotes the change in the optimistic order amount $b_a$ as improved estimates of the demand function are obtained.}
    \label{fig:cum-dist-shift}
\end{figure} 


\begin{theorem} \label{thm:leader-regret}
    \textbf{Stackelberg Regret:} Given the pure strategy best response of the follower defined $\mathfrak{B}(a) \in [\underline{b}_a, \bar{b}_a]$ from Eq. \eqref{eq:ba_lower_bound_true} and \eqref{eq:ba_upper_bound_true} respectively, the worst-case regret, $R_A^T(\pi_A)$, from the leader's perspective, as defined in Eq. \eqref{eq:leader-regret-def} when adopting the LNPG algorithm, is bounded by $\mathcal{O}(\sqrt{T \log(T)})$. (Proof in Appendix \ref{prf:leader-regret}.)
\end{theorem}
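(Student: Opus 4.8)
The plan is a standard optimism-based regret decomposition, with the pricing-deviation bound $\Delta_\mathcal{H}(t)$ of Theorem~\ref{thm:bound-del-ht} (itself a consequence of the shrinking confidence ball of Lemma~\ref{thm:theta-2-norm}) playing the role of the per-round confidence width, and Lemma~\ref{thm:cum-del-HT} doing the summation. Write $g^*(a):=a\,\mathfrak{B}(a)$ for the leader's reward against a perfectly informed best-responding follower and $a^*:=\arg\max_{a\in\mathcal{A}}g^*(a)$, which exists and is unique by Theorems~\ref{thm:optimal-order-b} and~\ref{thm:unique-se} (on $a<\bar p$ the order $b_a$ is strictly decreasing in $a$, so $a\,\mathfrak{B}(a)$ is unimodal). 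First I would condition on the good event $\mathcal{E}=\{\theta^*\in\mathcal{C}^t\ \forall\,t\le T\}$, which holds with probability at least $1-\delta$ by the OFUL confidence-ball construction; on $\mathcal{E}^{c}$ the per-round regret is at most a constant $M$ fixed by $\sup\mathcal{A}$ and $\sup_a\bar b_a$, contributing $\le\delta M T$ in expectation, which is absorbed by taking $\delta=1/T$.

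Second comes the optimism inequality. On $\mathcal{E}$, $\theta^*$ is feasible in the leader's objective of Eq.~\eqref{eq:argmax-a-obj}; evaluating that objective at $(a^*,\theta^*)$ and using that (i) the riskless-price critical fractile dominates the optimal-price one (Mills' identity Eq.~\eqref{eq:mills-opt-price} gives $p^*\le p_0$, hence $\tfrac{p_0-a}{p_0}\ge\tfrac{p^*-a}{p^*}$) and (ii) the optimistic demand parameter $\bar\theta_{a^*}$ yields a stochastically larger demand than $\theta^*$, one gets
\begin{align*}
    V^t:=\max_{a\in\mathcal{A},\ \theta\in\mathcal{C}^t}a\,F^{-1}_{\bar\theta_a}\!\Big(1-\tfrac{2a}{\mathcal{H}(\theta)+a}\Big)\ \ge\ g^*(a^*).
\end{align*}
The leader plays the maximiser $a^t$ of $V^t$, and the follower responds with $b^t=\bar b_{a^t}$, which uses the same demand parameter $\bar\theta_{a^t}$ and the same $a^t$ as the term attaining $V^t$, differing only in the $\mathcal{H}$-surrogate it feeds to the critical fractile (and, through the induced riskless price, to the demand mean). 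Hence $r^t:=g^*(a^*)-a^t b^t\le V^t-a^t b^t$, a gap governed solely by $|\mathcal{H}(\theta^{\circ})-\mathcal{H}(\bar\theta_{a^t})|$, where $\theta^{\circ}$ is the $\theta$-component attaining $V^t$. Note that the follower pricing at the riskless price rather than at $p^*$ only inflates $b^t$ and so never hurts the leader, which is precisely why the follower approximation error $\epsilon_B$ of Theorem~\ref{thm:follower-regret} does not propagate into $R_A^T$.

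Third, I would bound $V^t-a^t b^t\le C\,\Delta_\mathcal{H}(t)$ for an explicit constant $C$. Both $\mathcal{H}(\theta^{\circ})$ and $\mathcal{H}(\bar\theta_{a^t})$ lie within $\Delta_\mathcal{H}(t)\le\kappa_H\sqrt{\log(t)/t}$ of $\theta^*_0/\theta^*_1$ by Theorem~\ref{thm:bound-del-ht}, so they differ by $\mathcal{O}(\Delta_\mathcal{H}(t))$. The map $\mathcal{H}\mapsto 1-\tfrac{2a}{\mathcal{H}+a}$ has derivative $\tfrac{2a}{(\mathcal{H}+a)^{2}}$, bounded uniformly over $a\in\mathcal{A}$ since $\mathcal{A}$ is compact and $\mathcal{H}(\theta)>a$ on the feasible set, and the mean shift $\Gamma_{\bar\theta_{a^t}}(p_0)$ is affine in $p_0$, hence in $\mathcal{H}$. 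The remaining ingredient (this is where the non-elementary-function estimates enter) is to Taylor-expand $F^{-1}_{\bar\theta_{a^t}}$ around the operating fractile and bound $(F^{-1})'$ there: one checks that, uniformly over $\theta\in\mathcal{C}^t$ and $a\in\mathcal{A}$, the operating fractile stays in a compact subinterval of $(0,1)$ bounded away from its endpoints (the maximising action is bounded away from $0$, since a vanishing wholesale price gives vanishing leader reward, and away from $\bar p$ by Theorem~\ref{thm:optimal-order-b}), on which the Gaussian density $f_{\theta,p}$ is bounded below and $(F^{-1})'$ is finite. Multiplying by the leading factor $a\le\sup\mathcal{A}$ then yields $r^t\le C\kappa_H\sqrt{\log(t)/t}$ on $\mathcal{E}$.

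Finally, summing on $\mathcal{E}$, $\sum_{t=1}^{T}r^t\le C\sum_{t=1}^{T}\Delta_\mathcal{H}(t)\le 2C\kappa_H\sqrt{T\log T}$ by Lemma~\ref{thm:cum-del-HT}; the residuals --- the $\bar p^{*}-p^{*}$ slack, which is $\mathcal{O}(\log(t)/t)$ by Lemma~\ref{thm:opt-p-star-decrease} and sums to $\mathcal{O}(\log^{2}T)$, and the $\mathcal{E}^{c}$ term $\delta M T=M$ --- are lower order. Hence $R_A^T(\pi_A)=\mathcal{O}(\sqrt{T\log T})$. I expect the main obstacle to be the third step: controlling the non-elementary quantile $F^{-1}_\theta$, and thereby the leader's objective, as $\theta$ ranges over the shrinking ball $\mathcal{C}^t$ while simultaneously moving the demand mean $\Gamma_\theta(p)$, the induced riskless price $p_0(\theta,a)$, and the critical fractile, so that one must certify the operating point never drifts toward the singularities of $F^{-1}$; the remaining steps are Lipschitz / mean-value arguments layered on the already-proven width bounds.
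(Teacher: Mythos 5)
Your proposal is correct and arrives at the same $\mathcal{O}(\sqrt{T\log(T)})$ bound through the same underlying quantities --- the per-round width $\Delta_\mathcal{H}(t)\leq\kappa_H\sqrt{\log(t)/t}$ of Theorem~\ref{thm:bound-del-ht}, its cumulation via Lemma~\ref{thm:cum-del-HT}, and a uniform lower bound on the density over the operating range of the fractile (your ``$(F^{-1})'$ finite on a compact subinterval'' is exactly the paper's constant $\mathscr{M}$ in Eq.~\eqref{eq:Mb_bound}) --- but the central step is genuinely different. The paper does \emph{not} use the optimism sandwich $g^*(a^*)\leq V^t$ evaluated at the played action; instead it explicitly locates the optimistic maximiser $\bar{a}^t$ and bounds its drift $\Delta_a(t)=\bar{a}-a^*$, which forces it to approximate the non-elementary quantile $F^{-1}_\theta(\gamma)$ by a truncated Taylor/Weierstrass polynomial $\mathcal{P}_N(\gamma)$, introduce the ``$\gamma(a)$ trick'' to get a closed-form stationary point, and invoke Lemma~\ref{thm:shrink_x_eps} to show the argmax perturbation $\widetilde{\Delta}_a$ vanishes as the polynomial order grows. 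Your UCB-style decomposition $r^t\leq V^t-a^tb^t$ sidesteps all of that machinery: since the benchmark is dominated by the optimistic value (Eq.~\eqref{eq:argmax-a-obj}) and the realised reward is the same surrogate evaluated at the same $a^t$ with a parameter in the same confidence ball, only Lipschitz propagation of $\Delta_\mathcal{H}(t)$ through the fractile map and $F^{-1}$ is needed. What your route buys is brevity, a cleaner failure-probability accounting ($\delta=1/T$, which the paper glosses over), and a transparent explanation of why $\epsilon_B$ does not leak into the leader's regret; what the paper's route buys is an explicit handle on the action gap $\Delta_a(t)$ and the constants $C_1,C_2,C_3$, which it reuses in the equilibrium-convergence argument. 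One wrinkle worth acknowledging: when leader and follower share $\mathcal{C}^t$, your gap $V^t-a^tb^t$ collapses to zero, so your bound really controls the regret against the realised (optimistically inflated) order, whereas the paper compares against the true-parameter response $a^t\mathfrak{B}(a^t)$; the same Lipschitz argument with $\bar\theta$ compared to $\theta^*$ rather than to $\theta^{\circ}$ closes that version too, so this is a definitional rather than a substantive gap.
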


\textbf{Sketch of Proof:} The technicalities in the proof for Theorem \ref{thm:leader-regret} are best illustrated in Fig. \ref{fig:cum-dist-shift}. The change in the best response of the follower are affected by two components. $\Delta_b(\gamma)$ reflects the change in $b_a$ due to changes in the critical fractile $\gamma$, and $\Delta_b(\theta)$ reflects the change in $b_a$ due to the overall shift in the expected demand due to the increasing confidence on $||\theta^* - \hat{\theta}||_2$. We construct linear function approximator, $f(\gamma) \simeq F_\theta^{-1}(\gamma)$ with a bounded estimation error. One existing challenge is that $F_\theta^{-1}(\gamma)$ has no closed form solution in the form of natural expression. We utilize the function $f(\gamma)$ to provide an estimation for the value of $a^*$, given our confidence set $\theta^* \in \mathcal{C}^t$. The core technical challenges lie in securing a sufficiently accurate approximation for $F_\theta^{-1}(\gamma)$ while imposing upper bounds on the error arising from such a function approximation. To address this, we employ a Taylor series expansion to approximate $F_\theta^{-1}(\gamma)$, subsequently applying the Wasserstein approximation theorem for polynomials to confine the approximation error. Following this, we measure this error within the function's domain, exploiting certain inherent properties such as the concavity of the upper-bounding function (refer to details in Lemma \ref{thm:shrink_x_eps} in Appendix). Consequently, the approximation of $\bar{b}^*_a(\theta)$ establishes of a bound on $\Delta_b(\gamma)$, while $\Delta_b(\theta)$ can be bounded using the outcome derived from Theorem \ref{thm:bound-del-ht} (the mathematical details for the proof of Theorem \ref{thm:bound-del-ht} are outlined in Appendix \ref{prf:leader-regret}).


\begin{corollary} \label{cor:alt_br}
     Suppose the follower were to respond with an alternative best-response function $\tilde{\mathfrak{B}}(a)$. As $a^t \to a^*$, $R_A^T(\pi_A)$, where $a^*$ maximizes $a \mathfrak{B}(a)$ and $\tilde{a}^*$ maximizes $a \tilde{\mathfrak{B}}(a)$, the Stackelberg regret $R_A^T(\pi_A)$ from Eq. \eqref{eq:leader-regret} would suffer an additive penalty term of $\hat{\epsilon}T$, where $\hat{\epsilon} = \tilde{a}^* \tilde{\mathfrak{B}}(\tilde{a}^*) - a^* \tilde{\mathfrak{B}}(a^*)$. (Proof in Appendix \ref{prf:alt_br}.)
\end{corollary}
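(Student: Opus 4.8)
The plan is to reduce the claim to the already-proved $\mathcal{O}(\sqrt{T\log(T)})$ bound of Theorem \ref{thm:leader-regret} plus a single telescoping term. First I would fix the benchmark: when the follower is committed to the alternative response map $\tilde{\mathfrak{B}}(\cdot)$, the hindsight-optimal leader reward in Eq. \eqref{eq:leader-regret-def} is $\sum_{t=1}^{T}\max_{a\in\mathcal{A}}\mathcal{G}_A(a,\tilde{\mathfrak{B}}(a)) = T\,\tilde{a}^{*}\tilde{\mathfrak{B}}(\tilde{a}^{*})$, since $\mathcal{G}_A(a,b)=ab$ is time-invariant and $\tilde{a}^{*}=\operatorname{argmax}_a a\tilde{\mathfrak{B}}(a)$ by definition. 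The realized leader reward is $\sum_{t=1}^{T} a^{t}\tilde{\mathfrak{B}}(a^{t})$, because the action-selection rule in Algorithm \ref{alg:se-newsv} (Eq. \eqref{eq:argmax-a-obj}) is driven only by the leader's optimistic model of $a\mathfrak{B}(a)$ and never feeds the follower's realized $b^{t}$ back into the choice of $a^{t}$; hence the follower's deviation leaves the action trajectory $\{a^{t}\}$ governed by exactly the analysis of Theorem \ref{thm:leader-regret}, so $a^{t}\to a^{*}=\operatorname{argmax}_a a\mathfrak{B}(a)$ with $|a^{t}-a^{*}|=\mathcal{O}(\sqrt{\log(t)/t})$ — this is precisely the hypothesis "$a^{t}\to a^{*}$" of the corollary.

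Second, I would decompose the regret as
\begin{align*}
    R_A^T(\pi_A)
    &= \sum_{t=1}^{T}\big[\tilde{a}^{*}\tilde{\mathfrak{B}}(\tilde{a}^{*}) - a^{t}\tilde{\mathfrak{B}}(a^{t})\big] \\
    &= \underbrace{\sum_{t=1}^{T}\big[\tilde{a}^{*}\tilde{\mathfrak{B}}(\tilde{a}^{*}) - a^{*}\tilde{\mathfrak{B}}(a^{*})\big]}_{=\,\hat{\epsilon}\,T} \;+\; \underbrace{\sum_{t=1}^{T}\big[a^{*}\tilde{\mathfrak{B}}(a^{*}) - a^{t}\tilde{\mathfrak{B}}(a^{t})\big]}_{=:\,S_T}.
\end{align*}
The first block is exactly $\hat{\epsilon}\,T$ with $\hat{\epsilon}$ as defined in the statement, and $\hat{\epsilon}\ge 0$ follows immediately from optimality of $\tilde{a}^{*}$ for the map $a\mapsto a\tilde{\mathfrak{B}}(a)$; economically this is the irreducible cost of the leader mis-modelling the follower's response function.

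Third, I would bound the transient term $S_T$. Assuming $\tilde{\mathfrak{B}}$ is Lipschitz on the compact set $\mathcal{A}$ (consistent with the paper's standing Lipschitz hypotheses on best-response/utility maps, cf. the discussion around Eq. \eqref{eq:u_def}), the map $a\mapsto a\tilde{\mathfrak{B}}(a)$ is Lipschitz with some constant $L$, so $|a^{*}\tilde{\mathfrak{B}}(a^{*}) - a^{t}\tilde{\mathfrak{B}}(a^{t})|\le L\,|a^{t}-a^{*}|\le Lc\sqrt{\log(t)/t}$ for the constant $c>0$ governing action convergence. Summing and invoking the same integral estimate $\sum_{t=1}^{T}\sqrt{\log(t)/t}\le 2\sqrt{T\log(T)}$ that underlies Lemma \ref{thm:cum-del-HT} gives $|S_T|=\mathcal{O}(\sqrt{T\log(T)})$ (if only continuity of $\tilde{\mathfrak{B}}$ is available, uniform continuity on $\mathcal{A}$ still yields $S_T=o(T)$). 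Combining the two blocks, $R_A^T(\pi_A)\le \hat{\epsilon}\,T + \mathcal{O}(\sqrt{T\log(T)})$, i.e. the Stackelberg-regret bound of Theorem \ref{thm:leader-regret} inflated by the additive penalty $\hat{\epsilon}\,T$, as claimed.

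The main obstacle is the step that pins down the leader's action trajectory: one must argue that the follower's switch to $\tilde{\mathfrak{B}}$ does not corrupt the leader's estimation of $\theta^{*}$ in a way that destroys $a^{t}\to a^{*}$ or worsens its $\mathcal{O}(\sqrt{\log(t)/t})$ rate. Since the demand $d_\theta(p)$ is governed by the true $\theta^{*}$ for whatever price $p$ the follower posts, the leader's contextual-linear-bandit estimate remains consistent at the same concentration rate regardless of how prices are generated, so Eq. \eqref{eq:argmax-a-obj} inherits the convergence established for Theorem \ref{thm:leader-regret}; the corollary's hypothesis "$a^{t}\to a^{*}$" makes this rigorous appeal unnecessary, leaving only the Lipschitz telescoping bound on $S_T$, which is routine.
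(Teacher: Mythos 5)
Your proof is correct and follows essentially the same route as the paper's: decompose the regret under the alternative response map into the Theorem~\ref{thm:leader-regret} transient plus the per-round penalty $\hat{\epsilon} = \tilde{a}^{*}\tilde{\mathfrak{B}}(\tilde{a}^{*}) - a^{*}\tilde{\mathfrak{B}}(a^{*})$, exploiting $a^{t}\to a^{*}$. The only notable difference is that you make explicit a Lipschitz condition on $a \mapsto a\tilde{\mathfrak{B}}(a)$ in order to quantify the transient term as $\mathcal{O}(\sqrt{T\log(T)})$, a regularity assumption the paper's argument relies on implicitly when asserting that $\tilde{a}^{*}\tilde{\mathfrak{B}}(\tilde{a}^{*}) - a^{t}\tilde{\mathfrak{B}}(a^{t}) \to \hat{\epsilon}$.
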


As one of our key assumptions in Section \ref{sec:assumptions} was that the follower acts via risk-free pricing due to the lack of a closed-form solution for the \textit{price-setting Newsvendor} problem, a prudent reader might question what would happen if an alternative approximation for the \textit{price-setting Newsvendor} were devised and applied in this Stackelberg game setting, thereby violating one of the assumptions in Section \ref{sec:assumptions}. Corollary \ref{cor:alt_br} states that, in this case, a penalty term of $\hat{\epsilon}$, which can be computed in advance, would be added to the Stackelberg regret from Theorem \ref{thm:leader-regret}. This notion is relatively intuitive, as the two objective functions are converging to different solutions for the leader, $\tilde{a}^*$ and $a^*$, while maintaining the alternate best response $\tilde{\mathfrak{B}}(a)$. This is an interesting notion that relaxes the constraint of assuming a specific unique best response for the leader. Nevertheless, we stress that this serves only as an extension of the theoretical framework, and all Stackelberg games require some form of assumptions on the structure of the follower's best response.

\subsection{Convergence to Stackelberg Equilibrium} \label{sec:conv-se}

%


\begin{theorem} \label{thm:se-convergence}
    The LNPG algorithm (Algorithm \ref{alg:se-newsv}) converges to an approximate Stackelberg equilibrium, as defined in Section \ref{sec:stack-eq-defn}. (Proof in Appendix \ref{prf:se-convergence}.)
\end{theorem}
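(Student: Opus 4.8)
The plan is to show that the iterates $(\pi_A^t,\pi_B^t)$ produced by LNPG (Algorithm \ref{alg:se-newsv}) converge to the joint profile $(a^*,\widetilde{\mathfrak{B}})$ that realises the $\epsilon_B$-approximate Stackelberg equilibrium of Section \ref{sec:stack-eq-defn}, with the only irreducible slack being the structural approximation error $\epsilon_B$ of Eq. \eqref{eq:br_approx_def_eps_b} and all learning-induced error vanishing as $t\to\infty$. The argument proceeds in three blocks: (i) parameter and best-response convergence on the follower side, (ii) convergence of the leader's action to $a^*$, and (iii) verification against the equilibrium definition.

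First (follower side), by Lemma \ref{thm:theta-2-norm} the confidence set satisfies $\sup_{\theta\in\mathcal{C}^t}\lVert\theta-\theta^*\rVert_2\le\kappa\sqrt{\log t/t}\to 0$ with probability $1-\delta$, so $\mathcal{C}^t$ collapses onto $\{\theta^*\}$. Hence every optimistic quantity the follower computes converges to its perfect-information counterpart: $\mathcal{H}(\bar\theta)\to\mathcal{H}(\theta^*)$ at rate $\kappa_H\sqrt{\log t/t}$ by Theorem \ref{thm:bound-del-ht}, the optimistic riskless price $\bar p_0\to p_0$, and $\bar p^*\to p^*$ with gap $\mathcal{O}(\log T/T)$ by Lemma \ref{thm:opt-p-star-decrease}. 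Since $\mathfrak{B}(a)\in[\underline b_a,\bar b_a]$ with both endpoints converging to $\mathfrak{B}(a)$ (Theorem \ref{thm:optimal-order-b} supplies the monotone bijective dependence that makes $F^{-1}_{\bar\theta_a}$ continuous in $\bar\theta_a$), a squeeze argument gives $b^t=\bar b_{a^t}\to\mathfrak{B}(a^t)$; combined with Eq. \eqref{eq:riskless-price-lin-func} and the definition of $\widetilde{\mathfrak{B}}$ in Eq. \eqref{eq:br_approx_def}, the follower's play converges to $\widetilde{\mathfrak{B}}(a^t)$ and, by Eq. \eqref{eq:approx_se_defn_b}, its expected reward lies within $\epsilon_B$ of $\mathbbm{E}[\mathcal{G}_B(a^t,\mathfrak{B}(a^t))]$. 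This is consistent with the asymptotic form of Theorem \ref{thm:follower-regret}, whose leading term is exactly $\epsilon_B T$.

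Second (leader side), I would use the explicit rule in Algorithm \ref{alg:se-newsv}, $a^t=\mathrm{argmax}_{a\in\mathcal{A},\,\theta\in\mathcal{C}^t}\ aF^{-1}_{\bar\theta_a}\!\bigl(1-\tfrac{2a}{\mathcal{H}(\theta)+a}\bigr)$ from Eq. \eqref{eq:argmax-a-obj}. The objective is jointly continuous in $(a,\theta)$, $\mathcal{A}$ is compact, the map $a\mapsto aF^{-1}_{\theta^*}((p_0-a)/p_0)$ is strictly quasi-concave with unique maximiser $a^*$ (precisely the content of Theorems \ref{thm:optimal-order-b} and \ref{thm:unique-se}: $b_a$ strictly decreasing in $a$ forces a unique interior optimum in Eq. \eqref{eq:a-max-leader}), and $\mathcal{C}^t\downarrow\{\theta^*\}$; a standard argmax-continuity (Berge-type) argument then yields $a^t\to a^*$. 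This deterministic convergence is stronger than the Ces\`aro statement $R_A^T\in\mathcal{O}(\sqrt{T\log T})$ of Theorem \ref{thm:leader-regret} alone, and it is what identifies the limiting leader strategy with the equilibrium action $\mathbf{a}^*$; the sublinear regret bound is the quantitative counterpart confirming $\max_{a}\mathcal{G}_A(a,\mathfrak{B}(a))-\mathcal{G}_A(a^t,\mathfrak{B}(a^t))$ vanishes on average. Assembling the two limits, $\pi_A^t\to a^*$ and $\pi_B^t\to\widetilde{\mathfrak{B}}$, which is exactly the $\epsilon_B$-approximate Stackelberg equilibrium with full support of Eqs. \eqref{eq:approx_se_defn_b}--\eqref{eq:br_approx_def_eps_b}.

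I expect the main obstacle to be the second block: upgrading ``vanishing average regret'' to ``the realised actions $a^t$ themselves converge to $a^*$.'' A vanishing time-average permits infinitely many off-optimal rounds, so convergence of the iterates cannot be read off the regret bound; the remedy is to exploit the structural property that the leader re-solves Eq. \eqref{eq:argmax-a-obj} over the shrinking set $\mathcal{C}^t$ every round, together with uniqueness and stability of the maximiser, and to control the information asymmetry noted in Section \ref{sec:leader_stackelberg_regret} --- the follower may hold more samples than the leader, so one must check the leader's own confidence set (not merely the follower's) shrinks to $\{\theta^*\}$, which again follows from Lemma \ref{thm:theta-2-norm} applied to the leader's data $\mathcal{D}^t$. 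A secondary technical point is reconciling the algorithm's pricing at $\bar p_0$ with the definition of $\widetilde{\mathfrak{B}}$, for which Lemma \ref{thm:opt-p-star-decrease} and the $\epsilon_B$ bookkeeping of Eq. \eqref{eq:br_approx_def_eps_b} suffice.
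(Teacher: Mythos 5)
Your proposal is sound and its follower-side block is essentially the paper's own argument: the paper likewise chains Lemma \ref{thm:theta-2-norm}, Theorem \ref{thm:bound-del-ht} and Lemma \ref{thm:opt-p-star-decrease} to get $\bar{p}_0\to p_0$ and $\bar\theta\to\theta^*$, then defines a suboptimality gap $\Delta_B(t|a)=\mathbbm{E}[\mathcal{G}_B(p^*|a)]-\mathbbm{E}[\mathcal{G}_B(\bar p_0|a)]$ and shows $\max_a\lim_{t\to\infty}\Delta_B(t|a)=\epsilon_B$, which is exactly your ``all learning error vanishes, only the structural $\epsilon_B$ survives'' bookkeeping. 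Where you genuinely diverge is the leader block. The paper does \emph{not} run a Berge/argmax-continuity argument and never explicitly establishes $a^t\to a^*$; instead it introduces two separate confidence regions $\mathbf{u}_A$ and $\mathbf{u}_B$ (one per agent, with possibly different sample counts $t_A$, $t_B$), shows $\theta^*$ lies in their overlap with probability $1-\delta^2$, and argues that as $\min\{t_A,t_B\}\to\infty$ both regions squeeze down so that the follower's best-response interval satisfies $\bar b_a-\underline{b}_a\to 0$, i.e.\ the leader's uncertainty over the follower's reaction collapses to a singleton; uniqueness of the equilibrium (Theorem \ref{thm:unique-se}) then implicitly pins down the leader's limit. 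Your route buys a sharper conclusion --- iterate convergence $a^t\to a^*$ rather than only collapse of the response set --- and you correctly identify that sublinear Stackelberg regret alone cannot deliver this; the price is that you must actually verify joint continuity of $(a,\theta)\mapsto aF^{-1}_{\bar\theta_a}(1-\tfrac{2a}{\mathcal{H}(\theta)+a})$ and compactness of $\mathcal{A}$ (plausible here given the bounded fractile range $[\underline{\gamma},\bar{\gamma}]$ and the Gaussian demand, but not free). The paper's route buys an explicit treatment of the two-agent information asymmetry you flag at the end, at the cost of leaving the leader's own convergence more heuristic. Both arguments terminate at the same place: the residual gap equals $\epsilon_B$ of Eq. \eqref{eq:br_approx_def_eps_b}, satisfying the definition in Section \ref{sec:stack-eq-defn}.
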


\textbf{Sketch of Proof:} Given the regret properties of the leader in Theorem \ref{thm:leader-regret} and follower in Theorem \ref{thm:follower-regret}, as the asymptotic performance of instantaneous regret approaches 0 as $t \xrightarrow[]{} \infty$, the algorithm achieves an approximate Stackelberg equilibrium. 





\section{Empirical Results}

We compare the performance of our proposed algorithm, LNPG, with a well-known and widely used algorithm in the field of multi-armed bandit problems: Upper Confidence Bound (UCB) \cite{szepesvari:2010algorithmsRL}. UCB follows a straightforward exploration-exploitation trade-off strategy by selecting the action with the highest upper confidence bound. It requires discretization over the action space, and we use this as a baseline measure when computing empirical Stackelberg regret. The results are presented in Fig. \ref{fig:stack_regret}, and additional results and configurations are presented in Appendix \ref{sec:experiment-results}.


\begin{figure}[t]
\minipage{0.5\textwidth}
  \includegraphics[width=\linewidth]{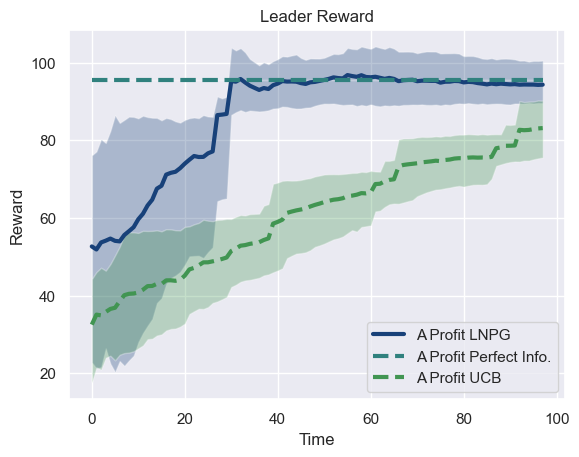}
\endminipage\hfill
\minipage{0.5\textwidth}
  \includegraphics[width=\linewidth]{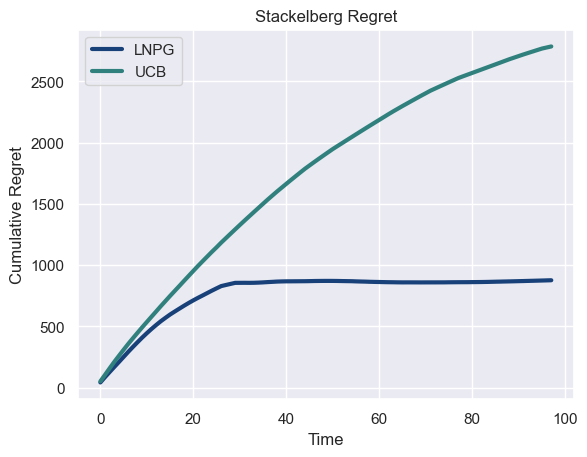}
\endminipage\hfill
\caption{Episodic reward and Stackelberg (leader) regret performance across an Newsvendor pricing game, comparing Algorithm \ref{alg:se-newsv} (LNPG) against UCB. All shaded areas, denoting confidence intervals, are within a quarter quantile. (Parameters used $\kappa=3, \theta_0 = 18, \theta_1 = 7$.)  } \label{fig:stack_regret}
\end{figure}

\section{Conclusion} \label{sec:conclusion}

We demonstrated the effective use of an optimistic online learning algorithm in a dyadic exchange network. We define a Newsvendor pricing game (NPG) as an economic game between two learning agents, representing firms in a supply chain. Both agents are learning the parameters of an additive demand function. The follower, or retailer, is a Newsvendor firm, meaning it must bear the cost of surplus and shortages when subject to stochastic demand. The follower must also determine the optimal price for the product, constituting the classical \textit{price-setting Newsvendor} problem. We establish that, under perfect information, the NPG contains a unique follower best response and that a Stackelberg equilibrium must exist. In the context of online learning, we combine established economic theory of the \textit{price-setting Newsvendor} with that of linear contextual bandits to learn the parameters of reward function. We demonstrate that from the leader's perspective, the this algorithm achieves a Stackelberg regret $\mathcal{O}(\sqrt{T \log(T)})$ for $T$ finite samples, and is  guaranteed to converge to an approximate Stackelberg equilibrium.  One limitation of this work is the structure of the demand function. Currently, the demand function is a linear function of price. To better represent reality, we propose in future work to expand the structure of the demand function to accommodate, in addition to price, multiple features (such as customer demographic features, product features etc.). This implies the extension of the Lemma \ref{thm:theta-2-norm} to p-norm rather than 2-norm, and a restructuring of Theorem \ref{thm:bound-del-ht} to accommodate a more general solution to the maximization problem from Eq. \eqref{eq:h-max-theta}.







\printbibliography

@article{bai:2021-stackel,
  title={Sample-efficient learning of stackelberg equilibria in general-sum games},
  author={Bai, Yu and Jin, Chi and Wang, Huan and Xiong, Caiming},
  journal={Advances in Neural Information Processing Systems},
  volume={34},
  pages={25799--25811},
  year={2021}
}

@inproceedings{cesa:2022-supp-chain-games,
  title={Learning the stackelberg equilibrium in a newsvendor game},
  author={Cesa-Bianchi, Nicol{\`o} and Cesari, Tommaso and Osogami, Takayuki and Scarsini, Marco and Wasserkrug, Segev},
  booktitle={Proceedings of the 2023 International Conference on Autonomous Agents and Multiagent Systems},
  pages={242--250},
  year={2023}
}

@article{abbasi:2011ofu,
  title={Improved algorithms for linear stochastic bandits},
  author={Abbasi-Yadkori, Yasin and P{\'a}l, D{\'a}vid and Szepesv{\'a}ri, Csaba},
  journal={Advances in neural information processing systems},
  volume={24},
  year={2011}
}

@inproceedings{chu:2011-linucb,
  title={Contextual bandits with linear payoff functions},
  author={Chu, Wei and Li, Lihong and Reyzin, Lev and Schapire, Robert},
  booktitle={Proceedings of the Fourteenth International Conference on Artificial Intelligence and Statistics},
  pages={208--214},
  year={2011},
  organization={JMLR Workshop and Conference Proceedings}
}

@article{noh:2019-stack-supp-chain,
  title={Two-echelon supply chain coordination with advertising-driven demand under Stackelberg game policy},
  author={Noh, Jiseong and Kim, Jong Soo and Sarkar, Biswajit},
  journal={European journal of industrial engineering},
  volume={13},
  number={2},
  pages={213--244},
  year={2019},
  publisher={Inderscience Publishers (IEL)}
}

@article{anderson:1992-stackelberg-v-cournot,
  title={Stackelberg versus Cournot oligopoly equilibrium},
  author={Anderson, Simon P and Engers, Maxim},
  journal={International Journal of Industrial Organization},
  volume={10},
  number={1},
  pages={127--135},
  year={1992},
  publisher={Elsevier}
}

@article{defraja:1990-game-theory-stack,
  title={Game theoretic models of mixed oligopoly},
  author={De Fraja, Giovanni and Delbono, Flavio},
  journal={Journal of economic surveys},
  volume={4},
  number={1},
  pages={1--17},
  year={1990},
  publisher={Wiley Online Library}
}

@article{fiez:2019-stack-learn,
  title={Convergence of learning dynamics in stackelberg games},
  author={Fiez, Tanner and Chasnov, Benjamin and Ratliff, Lillian J},
  journal={arXiv preprint arXiv:1906.01217},
  year={2019}
}

@inproceedings{fiez:2020-stack-implicit-learn,
  title={Implicit learning dynamics in stackelberg games: Equilibria characterization, convergence analysis, and empirical study},
  author={Fiez, Tanner and Chasnov, Benjamin and Ratliff, Lillian},
  booktitle={International Conference on Machine Learning},
  pages={3133--3144},
  year={2020},
  organization={PMLR}
}

@inproceedings{balcan:2015-stack-learn-sec,
  title={Commitment without regrets: Online learning in stackelberg security games},
  author={Balcan, Maria-Florina and Blum, Avrim and Haghtalab, Nika and Procaccia, Ariel D},
  booktitle={Proceedings of the sixteenth ACM conference on economics and computation},
  pages={61--78},
  year={2015}
}

@article{dani:2008-ball,
  title={Stochastic linear optimization under bandit feedback},
  author={Dani, Varsha and Hayes, Thomas P and Kakade, Sham M},
  year={2008}
}

@book{lattimore:2020-bandit,
  title={Bandit algorithms},
  author={Lattimore, Tor and Szepesv{\'a}ri, Csaba},
  year={2020},
  publisher={Cambridge University Press}
}

@article{chen:2019-grinding,
  title={Grinding the space: Learning to classify against strategic agents},
  author={Chen, Yiling and Liu, Yang and Podimata, Chara},
  journal={arXiv preprint arXiv:1911.04004},
  year={2019}
}

@inproceedings{haghtalab:2022-stack-non-myo,
  title={Learning in Stackelberg Games with Non-myopic Agents},
  author={Haghtalab, Nika and Lykouris, Thodoris and Nietert, Sloan and Wei, Alexander},
  booktitle={Proceedings of the 23rd ACM Conference on Economics and Computation},
  pages={917--918},
  year={2022}
}

@article{cachon:1999competitive-supply-chain,
  title={Competitive and cooperative inventory policies in a two-stage supply chain},
  author={Cachon, G{\'e}rard P and Zipkin, Paul H},
  journal={Management science},
  volume={45},
  number={7},
  pages={936--953},
  year={1999},
  publisher={INFORMS}
}

@article{mills:1959uncertainty,
  title={Uncertainty and price theory},
  author={Mills, Edwin S},
  journal={The Quarterly Journal of Economics},
  volume={73},
  number={1},
  pages={116--130},
  year={1959},
  publisher={MIT Press}
}

@article{petruzzi:1999newsv,
  title={Pricing and the newsvendor problem: A review with extensions},
  author={Petruzzi, Nicholas C and Dada, Maqbool},
  journal={Operations research},
  volume={47},
  number={2},
  pages={183--194},
  year={1999},
  publisher={INFORMS}
}

@article{arrow:1951newsboy,
  title={Optimal inventory policy},
  author={Arrow, Kenneth J and Harris, Theodore and Marschak, Jacob},
  journal={Econometrica: Journal of the Econometric Society},
  pages={250--272},
  year={1951},
  publisher={JSTOR}
}

@article{jammernegg:2013pricesettingnewsvendor,
  title={The price-setting newsvendor with service and loss constraints},
  author={Jammernegg, Werner and Kischka, Peter},
  journal={Omega},
  volume={41},
  number={2},
  pages={326--335},
  year={2013},
  publisher={Elsevier}
}

@article{perakis:2007price,
  title={The “price of anarchy” under nonlinear and asymmetric costs},
  author={Perakis, Georgia},
  journal={Mathematics of Operations Research},
  volume={32},
  number={3},
  pages={614--628},
  year={2007},
  publisher={INFORMS}
}

@article{rusmevichientong:2010linearly,
  title={Linearly parameterized bandits},
  author={Rusmevichientong, Paat and Tsitsiklis, John N},
  journal={Mathematics of Operations Research},
  volume={35},
  number={2},
  pages={395--411},
  year={2010},
  publisher={INFORMS}
}

@inproceedings{kleinberg:2003value,
  title={The value of knowing a demand curve: Bounds on regret for online posted-price auctions},
  author={Kleinberg, Robert and Leighton, Tom},
  booktitle={44th Annual IEEE Symposium on Foundations of Computer Science, 2003. Proceedings.},
  pages={594--605},
  year={2003},
  organization={IEEE}
}

@article{ban:2019big-data,
  title={The big data newsvendor: Practical insights from machine learning},
  author={Ban, Gah-Yi and Rudin, Cynthia},
  journal={Operations Research},
  volume={67},
  number={1},
  pages={90--108},
  year={2019},
  publisher={INFORMS}
}

@article{szepesvari:2010algorithmsRL,
  title={Algorithms for reinforcement learning},
  author={Szepesv{\'a}ri, Csaba},
  journal={Synthesis lectures on artificial intelligence and machine learning},
  volume={4},
  number={1},
  pages={1--103},
  year={2010},
  publisher={Morgan \& Claypool Publishers}
}

@techreport{hadley:1963-analysis-inven,
  title={Analysis of inventory systems},
  author={Hadley, George and Whitin, Thomson M},
  year={1963}
}

@article{ma:2022-poascm,
  title={The price of anarchy in closed-loop supply chains},
  author={Ma, Zu-Jun and Ye, Yu-Sen and Dai, Ying and Yan, Hong},
  journal={International Transactions in Operational Research},
  volume={29},
  number={1},
  pages={624--656},
  year={2022},
  publisher={Wiley Online Library}
}

@article{bertsimas:2006dynamicpricing,
  title={Dynamic pricing: A learning approach},
  author={Bertsimas, Dimitris and Perakis, Georgia},
  journal={Mathematical and computational models for congestion charging},
  pages={45--79},
  year={2006},
  publisher={Springer}
}

@article{pk:2001dynamicpricing,
  title={Dynamic pricing on the Internet: Importance and implications for consumer behavior},
  author={PK Kannan, Praveen K Kopalle},
  journal={International Journal of Electronic Commerce},
  volume={5},
  number={3},
  pages={63--83},
  year={2001},
  publisher={Taylor \& Francis}
}

@article{garivier:2016-etc,
  title={On explore-then-commit strategies},
  author={Garivier, Aur{\'e}lien and Lattimore, Tor and Kaufmann, Emilie},
  journal={Advances in Neural Information Processing Systems},
  volume={29},
  year={2016}
}

@article{roughgarden:2010-algorithmic,
  title={Algorithmic game theory},
  author={Roughgarden, Tim},
  journal={Communications of the ACM},
  volume={53},
  number={7},
  pages={78--86},
  year={2010},
  publisher={ACM New York, NY, USA}
}

@article{kutschinski:2003-ma-dyp,
  title={Learning competitive pricing strategies by multi-agent reinforcement learning},
  author={Kutschinski, Erich and Uthmann, Thomas and Polani, Daniel},
  journal={Journal of Economic Dynamics and Control},
  volume={27},
  number={11-12},
  pages={2207--2218},
  year={2003},
  publisher={Elsevier}
}

@inproceedings{dasgupta:2000-dyp,
  title={Dynamic pricing with limited competitor information in a multi-agent economy},
  author={Dasgupta, Prithviraj and Das, Rajarshi},
  booktitle={Cooperative Information Systems: 7th International Conference, CoopIS 2000 Eilat, Israel, September 6-8, 2000. Proceedings 7},
  pages={299--310},
  year={2000},
  organization={Springer}
}

@article{goyal:2021mnl,
  title={Dynamic pricing and assortment under a contextual MNL demand},
  author={Goyal, Vineet and Perivier, Noemie},
  journal={arXiv preprint arXiv:2110.10018},
  year={2021}
}

@article{bu:2022context-dyn-pric,
  title={Context-Based Dynamic Pricing with Partially Linear Demand Model},
  author={Bu, Jinzhi and Simchi-Levi, David and Wang, Chonghuan},
  journal={Advances in Neural Information Processing Systems},
  volume={35},
  pages={23780--23791},
  year={2022}
}

@misc{babaioff:2015dyp-lim-supply,
  title={Dynamic pricing with limited supply},
  author={Babaioff, Moshe and Dughmi, Shaddin and Kleinberg, Robert and Slivkins, Aleksandrs},
  year={2015},
  publisher={ACM New York, NY, USA}
}

@misc{zhao:2023-online,
	title = {Online {Learning} in {Stackelberg} {Games} with an {Omniscient} {Follower}},
	url = {http://arxiv.org/abs/2301.11518},
	abstract = {We study the problem of online learning in a two-player decentralized cooperative Stackelberg game. In each round, the leader ﬁrst takes an action, followed by the follower who takes their action after observing the leader’s move. The goal of the leader is to learn to minimize the cumulative regret based on the history of interactions. Diﬀering from the traditional formulation of repeated Stackelberg games, we assume the follower is omniscient, with full knowledge of the true reward, and that they always best-respond to the leader’s actions. We analyze the sample complexity of regret minimization in this repeated Stackelberg game. We show that depending on the reward structure, the existence of the omniscient follower may change the sample complexity drastically, from constant to exponential, even for linear cooperative Stackelberg games. This poses unique challenges for the learning process of the leader and the subsequent regret analysis.},
	language = {en},
	urldate = {2023-08-30},
	publisher = {arXiv},
	author = {Zhao, Geng and Zhu, Banghua and Jiao, Jiantao and Jordan, Michael I.},
	month = apr,
	year = {2023},
	note = {arXiv:2301.11518 [cs]},
	keywords = {Computer Science - Machine Learning},
	file = {Zhao et al. - 2023 - Online Learning in Stackelberg Games with an Omnis.pdf:/Users/larkin/Zotero/storage/F6NSHMY4/Zhao et al. - 2023 - Online Learning in Stackelberg Games with an Omnis.pdf:application/pdf},
}

@misc{flaxman:2004-online,
	title = {Online convex optimization in the bandit setting: gradient descent without a gradient},
	shorttitle = {Online convex optimization in the bandit setting},
	url = {http://arxiv.org/abs/cs/0408007},
	abstract = {We study a general online convex optimization problem. We have a convex set S and an unknown sequence of cost functions c1, c2, . . . , and in each round, we choose a feasible point xt in S, and learn the cost ct(xt). If the function ct is also revealed after each round then, as Zinke√vich shows in [23], gradient descent can be used on these funct√ions to get regret bounds of O( n). That is, after n rounds, the total cost incurred will be O( n) more than the cost of the best single feasible decision chosen with the beneﬁt of hindsight, minx∈S ct(x).},
	language = {en},
	urldate = {2023-09-04},
	publisher = {arXiv},
	author = {Flaxman, Abraham D. and Kalai, Adam Tauman and McMahan, H. Brendan},
	year = {2004},
	note = {arXiv:cs/0408007},
	keywords = {Computer Science - Computational Complexity, Computer Science - Machine Learning},
	file = {Flaxman et al. - 2004 - Online convex optimization in the bandit setting .pdf:/Users/larkin/Zotero/storage/9XQL5JSG/Flaxman et al. - 2004 - Online convex optimization in the bandit setting .pdf:application/pdf},
}

@inproceedings{gan:2023-robust-optimal,
author = {Gan, Jiarui and Han, Minbiao and Wu, Jibang and Xu, Haifeng},
title = {Robust Stackelberg Equilibria},
year = {2023},
isbn = {9798400701047},
publisher = {Association for Computing Machinery},
address = {New York, NY, USA},
url = {https://doi.org/10.1145/3580507.3597680},
doi = {10.1145/3580507.3597680},
abstract = {This paper provides a systematic study of the robust Stackelberg equilibrium (RSE), which naturally generalizes the widely adopted solution concept of the strong Stackelberg equilibrium (SSE). The RSE accounts for any possible up-to-δ suboptimal follower responses in Stackelberg games and is adopted to improve the robustness of the leader's strategy. While a few variants of robust Stackelberg equilibrium have been considered in previous literature, the RSE solution concept we consider is importantly different --- in some sense, it relaxes previously studied robust Stackelberg strategies and is applicable to much broader sources of uncertainties.},
booktitle = {Proceedings of the 24th ACM Conference on Economics and Computation},
pages = {735},
numpages = {1},
keywords = {stackelberg game, robust equilibrium},
location = {London, United Kingdom},
series = {EC '23}
}

@article{weierstrass:1885,
  author    = {Weierstrass, Karl},
  title     = {Über die analytische Darstellbarkeit sogenannter willkürlicher Functionen einer reellen Veränderlichen},
  journal   = {Verl. d. Kgl. Akad. d. Wiss},
  year      = {1885},
  %pages     = {633–639},
  volume    = {2},
  % number    = {Issue},
}

@inproceedings{braverman:2018selling-noregret,
  title={Selling to a no-regret buyer},
  author={Braverman, Mark and Mao, Jieming and Schneider, Jon and Weinberg, Matt},
  booktitle={Proceedings of the 2018 ACM Conference on Economics and Computation},
  pages={523--538},
  year={2018}
}

@article{zhang:2013-infoshare,
  title={Coordination of information sharing in a supply chain},
  author={Zhang, Juliang and Chen, Jian},
  journal={International Journal of Production Economics},
  volume={143},
  number={1},
  pages={178--187},
  year={2013},
  publisher={Elsevier}
}

@article{daskalakis:2009complexity,
  title={The complexity of computing a Nash equilibrium},
  author={Daskalakis, Constantinos and Goldberg, Paul W and Papadimitriou, Christos H},
  journal={Communications of the ACM},
  volume={52},
  number={2},
  pages={89--97},
  year={2009},
  publisher={ACM New York, NY, USA}
}

@article{porter:1980techniques_competition,
  title={Techniques for analyzing industries and competitors},
  author={Porter, Michael E and Strategy, Competitive},
  journal={Competitive Strategy. New York: Free},
  year={1980}
}

@book{tirole:1988theory_industrial_org,
  title={The theory of industrial organization},
  author={Tirole, Jean},
  year={1988},
  publisher={MIT press}
}

@article{delapena:2004_self_norm,
  title={Self-normalized processes: exponential inequalities, moment bounds and iterated logarithm laws},
  author={de la Pena, Victor H and Klass, Michael J and Leung Lai, Tze},
  year={2004}
}

@article{bouttier:2020PV,
  title={Regret analysis of the Piyavskii-Shubert algorithm for global Lipschitz optimization},
  author={Bouttier, Cl{\'e}ment and Cesari, Tommaso and Ducoffe, M{\'e}lanie and Gerchinovitz, S{\'e}bastien},
  journal={arXiv preprint arXiv:2002.02390},
  year={2020}
}

\clearpage


\appendix

\section{Economic Theory}

\subsection{Proof of Theorem \ref{thm:optimal-order-b}} \label{prf:optimal-order-b}

\textbf{Uniqueness of Best Response under Perfect Information:} In the perfect information NPG (from Sec. \ref{sec:newsvendor-games}) with a known demand function, under the assumptions listed in Sec. \ref{sec:assumptions}, the optimal order amount $b^*_a$ which maximizes expected profit of the retailer $\mathbbm{E}[\mathcal{G}_B(\cdot)]$ is $\mathfrak{B}(a)$ and is a surjection from $a \xrightarrow[]{} \mathfrak{B}(a)$, and a bijection when $a < \bar{p}$. Furthermore, within the range $a < \bar{p}$, as $a$ monotonically increases, $b_a^*$ is monotonically decreasing. Where $\bar{p}$ denotes the optimistic estimate of the riskless price $p_0$.

\begin{proof}
    Under the assumptions in Section \ref{sec:assumptions}, given any $a$, there exists a unique maximizing solution to Eq. \eqref{eq:newsvendor-max-int}. From \cite{mills:1959uncertainty} we know that the optimal price to set in the price-setting Newsvendor, with linear additive demand, is of the form,
    
    \begin{align}
        p^*(z) &= p_0 - \frac{\int_z^{\infty} (x - z)f_\theta(x) dx  }{2 \theta_1}
    \end{align}
    
    Where $p_0$ is the \textit{riskless price}, the price which maximizes the linear expected demand. And therefore, the optimal price $p^*$, has the relation $p^* \leq p_0$. $f_\theta(x)$ expresses the probability of demand equal to $x$, under demand parameters $\theta$, and price $p$. This defines the unique relation of $z$ to $p(z)$ from \cite{mills:1959uncertainty} and as evidenced by Eq. \eqref{eq:mills-opt-price} is a unique function. Knowing this relation, it is also logical to see that the riskless price of the retailer must be always above the wholesale price of the product which, as introduced in the constraint in Eq. \eqref{eq:p<pa}.
    
    \begin{align}
         a < p^* &\leq p_0 \xrightarrow[]{} a < p_0 \label{eq:p<pa}
    \end{align}

    Given the unique relation from $z$ to $p(z)$, $p^*$ represents the optimal pricing for ordering $z$ quantity above the expected demand from $\Gamma_\theta(p)$. Having this unique relation, we are ultimately faced with the optimization problem in Eq. \eqref{eq:pz_optim}.
    
    \begin{align}
        \mathcal{G}_B^*(a, p) &= \underset{z \in [-\mu, \infty) }{\mathrm{argmax}} \ \mathbbm{E}[\mathcal{G}_B(\Gamma(p) + z, p^*(z))] \label{eq:pz_optim}
    \end{align}

    $\mathcal{G}_B^*(p)$ is unique under certain conditions outlined in \citet{petruzzi:1999newsv} and \citet{mills:1959uncertainty}, and there exists a unique optimal price $p^*$ for each problem defined by $\theta$. Examining the behaviour of the critical fractile $\gamma^* = (p^* - a)/p^*$, and we see that $\gamma^*$ is monotonically decreasing with respect to $a$. 
    
    \begin{align} 
      \mathfrak{B}(a) = F_\theta^{-1} \Big( \frac{p^* - a}{p^*} \Big), \quad & p^* > a \label{eq:cf-babr}
    \end{align} 
    
    We see from Eq. \eqref{eq:cf-babr} that $\mathfrak{B}(a)$ is unique function of $a$, as $a$ is strictly monotonic increasing, as $b_a^*$ is strictly monotonic decreasing. 
    
    
\end{proof}

\subsection{Proof of Lemma \ref{thm:invariance-of-z}} \label{prf:invariance-of-z}

\textbf{Invariance of $\Psi_\theta(b)$:} Assuming symmetric behaviour of $f_{\theta, p}(\cdot)$ with respect to $\Gamma_\theta(p)$ and constant standard deviation $\sigma$, the term representing supply shortage probability defined as $\Psi_\theta(b)$, is invariant of the estimate of $\theta$. 

\begin{proof}
    First we define the properties of the probability density function of demand $f_\theta(\cdot)$ and $f(\cdot)$. Where $f(\cdot)$ is the $\theta$ invariant version of the demand function, and $f_\theta(\cdot)$ is the probabilistic demand function with respect to the realized value of demand, such that Eq. \eqref{eq:f_demand_relation} holds. $x$ is the difference between realized demand and the expected demand $\mu$, and $z$ is the difference between the order amount $b$ expected demand and the expected demand $\mu$.
    
    \begin{align}
        f(x) &= f_\theta(x + \Gamma_\theta(p)) \label{eq:f_demand_relation} \\
        \Gamma_\theta(p) &= \mathbbm{E}[d_\theta(p)] \\
        z &= b - \Gamma_\theta(p)
    \end{align}

    For the \textit{price-setting Newsvendor}, the optimal price is the is the solution to the equation denoting the expected profit, given retail price $a$, $\mathbbm{E}[\mathcal{G}_B(p, b| a)]$, is expressed in Eq. \eqref{eq:newsvendor-max-int}.

    \begin{align}
        p^* &= \underset{ (p, b) \in \mathcal{P} \cross \mathcal{B} } {\mathrm{argmax}} \ \mathbbm{E}[\mathcal{G}_B(p, b| a)]
    \end{align}

    From \citet{mills:1959uncertainty} and \citet{petruzzi:1999newsv} we know that under perfect information, the optimal price for the price setting Newsvendor $p^*$ is always the risk-free price $p_0$ subtracted by a term proportional to $\Psi_\theta(b)$ for expected inventory loss. It follows that $\mathbbm{E}[\mathcal{G}_B(p, b| a)]$ is concave in $p$ given $b$, and conversely $b$ is also concave given $p$. Therefore, there exists a unique solution to $p^*$ given $b$, as illustrated in Eq. \eqref{eq:mills-opt-price}. 
    
    \begin{align}
        p^* &= p_0 - \frac{\Psi_\theta(b)}{2 \theta_1} \\
        \text{where,} \quad p_0 &= \frac{\theta_0 + \theta_1a}{2}
    \end{align}

    Given this relation we can then maximize over a subspace of $\mathcal{P} \cross \mathcal{B}$ more efficiently.

    \begin{align}
        p^* = p(b^*) &= \underset{ b \in  \mathcal{B} } {\mathrm{argmax}} \ \mathbbm{E}[\mathcal{G}_B(p(b), b| a)]
    \end{align}
    
    We wish to show that $\Psi(z) = \Psi_\theta(b)$ remains the same despite variation in $\theta$.

    \begin{align}
        \Psi_\theta(b) &= \Psi(z), \quad \forall \theta \\
        \text{where,} \quad \Psi_\theta(b) &= \int_b^{\infty} (x - b)f_{\theta, p}(x) dx \label{eq:psi_integ_b} \\
        \quad \Psi(z) &= \int_z^{\infty} (x - z)f(x - \Gamma_\theta(p)) dx \label{eq:psi_integ_z}
    \end{align}

    We show the equivalance of $\Psi_\theta(z)$ and $\Psi(z)$ via equivalence of the integral definitions from Eq. \eqref{eq:psi_integ_b} and Eq. \eqref{eq:psi_integ_z} respectively, utilizing the relation from Eq. \eqref{eq:f_demand_relation}. For convience, let $\mu_p = \Gamma_\theta(p)$.
    
    \begin{align}
        \int_z^{\infty} (x - z)f(x) dx &= \int_z^{\infty} (x - z)f_{\theta, p}(x + \mu_p) dx \\
        &= \int_{b - \mu_p}^{\infty} \Big( x - (b - \mu_p) \Big) f_{\theta, p}(x + \mu_p, p) dx \\
        &= \int_{b}^{\infty}\Big((x - \mu_p) - (b - \mu_p) \Big) f_{\theta, p}(x + \mu_p - \mu_p) dx \\
        &= \int_{b}^{\infty} (x - b)f_{\theta, p}(x) dx
    \end{align}

    Because $\theta$ can only vary by updating samples with respect to $t$, it follows that $\Psi(z)$ is also invariant with respect to $t$.

\end{proof}

\subsection{Theorem \ref{thm:fgamma-decreasing}} \label{prf:fgamma-decreasing} \label{thm:fgamma-decreasing}

\textbf{Theorem \ref{thm:fgamma-decreasing}. Strictly decreasing $\gamma^*$:} Given the condition $a < p^* \leq p_0$, $\gamma^*$ is strictly decreasing with respect to $a$. 

\begin{proof}
    Sketch of idea, as $a$ increases, the boundary between $p_0 - \bar{p}$ decreases. 
    
    This boundary $\bar{p} - p_0 = p_0 - a$ due to symmetry. 
    
    \begin{align}
        p_0 &= \frac{\bar{p}+a}{2} \\
        \bar{p} - p_0 &= \frac{2\bar{p}}{2} - \frac{\bar{p}+a}{2} = \frac{\bar{p} - a}{2} = \frac{2p_0 - a - a}{2} \\
        & = p_0 - a
    \end{align}
    
    Therefore given $p^* \in (a, p_0]$, we know that $|(a, p_0]|$ is decreasing w.r.t increasing $a$, holding $\bar{p}$ constant.
    
    Given $p^* \geq p_0$, and $\gamma = (p - a)/p$,
    
    \begin{align}
        p^* \leq p_0 \xrightarrow[]{} (p^* - a)/p^* \leq (p_0 - a)/p_0
    \end{align}

    The behaviour of $\gamma$, where $\gamma^* = (p^* - a)/p^*$ will also be decreasing with increasing $a$.
    
\end{proof}

\subsection{Proof of Theorem \ref{thm:unique-se}} \label{prf:unique-se}

\textbf{Theorem \ref{thm:unique-se}. Unique Stackelberg Equilibrium:} A unique pure strategy Stackelberg Equilibrium exists in the full-information Newsvendor pricing game.

\begin{proof}
    Let us define an arbitrary probability density function $f_\theta(x) >0 $ which operates on the of $x \in [0, \bar{\bar{p}}]$, and is strictly positive. The cumulative probability function function is therefore,

    \begin{align}
        F_\theta(x) = \int_0^x f_\theta(x) dx = \gamma \label{eq:arbitrary-cdf}
    \end{align}
    
    Therefore, from Eq. \eqref{eq:arbitrary-cdf}, its cumulative distribution, defined as the integral of such a probability distribution function $F_\theta(x)$ is strictly monotonically increasing from $x \in [0, \bar{\bar{p}}]$. Therefore the inverse of $F_\theta(x)$, defined as $F_\theta^{-1}$, is an injection from $\gamma \xrightarrow[]{} x$. We also that the critical fractile $\gamma^* = (p^* - a)/p^*$ is strictly decreasing with respect to $a$. Therefore, from the follower's perspective, since $F_\theta^{-1}(\gamma^*)$ is an injection from $\gamma^* \xrightarrow[]{} \mathfrak{B}(a)$, there is always a best response function defined by $\mathfrak{B}(a)$ given $a$ and is an injection. As $\mathfrak{B}(a)$ is unique given $a$, therefore the Eq. \eqref{eq:leader-max} has a unique maximum.
    
    \begin{align}
        \mathcal{G}_A^*(a) \leq  \underset{a \in \mathcal{A} }{\mathrm{max}} \ a F_{\theta}^{-1} \Big( \frac{p^* - a}{p^*} \Big) \label{eq:leader-max}
    \end{align}
    
\end{proof}

\subsection{Proof of Lemma \ref{thm:opt-p-star-decrease}} \label{prf:opt-p-star-decrease}
    \textbf{Lemma \ref{thm:opt-p-star-decrease}. Convergence  of $\bar{p}^*$ to $p^*$:}  Suppose the solution to Eq. \eqref{eq:mills-opt-price} is computable. For any symmetric demand distribution, the optimistic estimate of the optimal market price from Eq. \eqref{eq:mills-opt-price}, denoted as $\bar{p}^*$ which is the solution to $p^*$ under optimistic parameters $\bar{\theta}$, is decreasing with respect to $t$ such that $\bar{p}^*(\theta_{t+1}) \leq \bar{p}^*(\theta_t)$, where the  difference $\bar{p}^*(\theta_t) - p^*$ is bounded by $\mathcal{O}(\log(T)/T)$.

\begin{proof}
    Suppose the solution to Eq. \eqref{eq:mills-opt-price} is computable.
    We examine the equation for the optimal price function, from  Eq. \eqref{eq:mills-opt-price}, and we make the assumption that the term $\Psi(z)$ is computable,
    
    \begin{align}
        p_\theta^*(z) &= p_0(\theta) - \frac{\Psi(z)}{2 \theta_1} \label{eq:optimistic-p*-theta}
    \end{align}

    Due to Lemma \ref{thm:invariance-of-z} we know that $\Psi(z)$ is invariant of $\theta$, for symmetric probability distributions with constant variance. Thus using this advantage, we know from Eq. \eqref{eq:bigO-theta-del-decrease} that the rate of convergence to $\theta^*$ is $\mathcal{O}(\log(t)/t)$. We then look at the behaviour of $\theta_1$, from Eq. \eqref{eq:theta_constr}, we can obtain an optimistic upper bound on $\theta_1$ denoted as $\bar{\theta}_1$ as $t \xrightarrow[]{} \infty$, when given the confidence ball $\mathcal{C}^t$, by setting $\hat{\theta}_0 = \theta_0^*$, this allows for the slack, denoted as, $\kappa \sqrt{\log(t)/t}$ to be completely maximized for maximizing $\bar{\theta}_1^t$. In other words, we assume that the estimate of $\hat{\theta}_0^t$ is exact. Subsequently, we have a 1-dimensional optimization problem.
    
    \begin{align}
        \Delta_{\theta_1} = |\hat{\theta}_1 - \theta_1^*| &\leq \kappa \sqrt{\log(t)/t}
    \end{align}

    Let $\Delta_{\theta_1} = \hat{\theta}_1 - \theta_1^*$, we maximize $\Delta_{\theta_1}$ by setting $\Delta_{\theta_1} = \kappa \sqrt{\log(t)/t})$, and this maximum value is shrinking at a rate of $\mathcal{O}(\log(t)/t$. We examine the asymptotic behaviour of $p^*_\theta(z)$ as $t \xrightarrow[]{} \infty$.

    \begin{align}
        \lim_{t \to \infty} p_\theta^*(z) = \lim_{t \to \infty} p_0(\theta) - \frac{\Psi(z)}{2 (\theta_1^* \pm \Delta_{\theta_1}) } \label{eq:optimistic-p*-theta}
    \end{align}

    Examining Eq. \eqref{eq:optimistic-p*-theta} we know that ultimately, $\bar{p}_0(t) \xrightarrow[]{} p_0$, and $\Delta_{\theta_1} \xrightarrow[]{} 0$ as $t \xrightarrow[]{} \infty$. And thus, $\bar{p}^* \xrightarrow[]{} p^*$ asymptotically as $t \xrightarrow[]{} \infty$.
    






    
\end{proof}


\section{Online Learning Theory}

\subsection{Optimism Under Uncertainty (OFUL) Algorithm} \label{sec:alg-abbasi}


\begin{algorithm}[h!] 
\caption{OFUL Algorithm (from \citet{abbasi:2011ofu}) }\label{algo:oful-abbasi}
\begin{algorithmic}[1]
    \For {$t \in 1 ... T$}:
    \State $(X_t, \hat{\theta}_t) = \underset{ (x,\theta) \in \mathcal{X} \times \mathcal{C}^t_t} {\mathrm{argmax}} \langle x, \theta \rangle$
    \State Agent plays $X^t$ and observes reward $Y^t$
    \State Update $\mathcal{C}^t_t$
    \EndFor
\end{algorithmic}
\end{algorithm}

\subsection{Optimization Under Uncertainty (OFUL)} \label{sec:oful-dicuss}

The OFUL algorithm presented in Appendix \ref{sec:alg-abbasi} is a linear bandit optimization function. It works by constructing a confidence ball $\mathcal{C}^t_t$ which provides a bound on the variation of estimated parameters $\theta$. This confidence bound shrinks as more observations are obtained. When running this algorithm, the error of estimated parameters $\norm{\theta^*_t - \hat{\theta}_t}_{\bar{V}_t}$ is bounded by the \textit{self-normalizing norm}, $\norm{\cdot}_{\bar{V}_t}$, where by definition,

\begin{align}
    \bar{V}_t = \lambda I + \sum_{i=1}^t X_iX_i^T, \quad \norm{S}_{\bar{V}_t} = \sqrt{S^T \bar{V}_t S} \label{eq:vt-defn}
\end{align}

Where $I$ is the identity matrix, for some $\lambda > 0$. With these definitions in Eq. \eqref{eq:vt-defn}, the OFUL algorithm, with respect to the shrinking confidence ball $\mathcal{C}^t_t$ admits a bound of the confidence of the parameters as,

\begin{align}
    \norm{\theta^*_t - \hat{\theta}_t}_{\bar{V}_t} \leq R \sqrt{d \log \Big( \frac{1 + tL^2/\lambda}{\delta} \Big)} + \lambda^{1/2}J \in \mathcal{O}(\sqrt{\log{t}})
\end{align}


Where the random variables $\eta_t$, and filtration $\{F_t\}_{t=0}^\infty$ are $R$-sub Gaussian, for some $R > 0$, such that,

\begin{align}
    \mathbbm{E}[e^{\lambda \eta_t} | F_{t-1}] \leq \exp \Big( \frac{\lambda^2 R^2}{2} \Big), \quad \forall \lambda \in \mathbbm{R}
\end{align}

Where $d$ is the dimension of the parameters. $L$ is a condition for the random variables where $\norm{X_t}_2 \leq L, \forall t \geq 1$. And $J$ is a constraint on the parameters such that $\norm{\theta^*}_2 \leq J$.

\subsection{Proof of Lemma \ref{thm:theta-2-norm}} \label{prf:theta-2-norm}

\textbf{OFUL - Bound on 2-Norm:} With probability $1-\delta$, and $t \geq e$, the constraint $||\theta - \theta^*||_{\bar{V}_t} \leq \mathcal{O}(\sqrt{\log(t)})$ from \citet{abbasi:2011ofu} implies the finite $2$-norm bound $||\theta - \theta^*||_2 \leq \mathcal{O}(\sqrt{\log(t)/ t})$, and also thereby the existence of $\kappa > 0$ such that $||\theta - \theta^*||_2 \leq \kappa \sqrt{\log(t)/ t}$ (Proof in Appendix \ref{prf:theta-2-norm}.)    

\begin{proof} 
    We assume $\bar{V}_t$ is full rank, and that sufficient sufficient exploration has occurred. Let $\mathbf{\Delta}_\theta = \theta^* - \hat{\theta}_t$. From \citet{abbasi:2011ofu}, we define $\bar{V}_t$ as, 

    \begin{align}
        \bar{V}_t = V + \sum_{t=1}^T X^TX \label{eq:v_bar_t_def}
    \end{align}
    
    Let $V$ be a positive definite real-valued matrix defined on $\mathbb{R}^d$, represented as a diagonal matrix with entries $\lambda$, where $\lambda \in \mathbb{R}^+$. Moreover, $X$ is the vector corresponding to the arm selected (specifically in our case, this corresponds to the pricing action of the follower). Furthermore, we define the \textit{self-normalizing norm} of $\mathbf{\Delta}_\theta$ as,

    \begin{align}
       \norm{\mathbf{\Delta}_\theta}_{\bar{V}_t} = \sqrt{\mathbf{\Delta}_\theta^T \bar{V}_t \mathbf{\Delta}_\theta} 
    \end{align}

    Fundamentally, we wish to demonstrate that given $\kappa$ there exists $\kappa_2$ such that,

    \begin{align}
        ||\theta - \theta^*||_{\bar{V}_t} &\leq \kappa \sqrt{\log(t)} \implies ||\theta - \theta^*||_2 \leq \kappa_2 \sqrt{\log(t)/t}\\
        \forall t &\in \{ \mathbbm{Z}| 1 \leq t \leq T \}, \quad \forall \kappa_1 \in \mathbb{R}^+, \quad \forall \kappa_2 \in \mathbb{R}^+
    \end{align}

    We note that $||\theta - \theta^*||_{\bar{V}_t} \leq \kappa \sqrt{\log(t)}$ with non-decreasing eivengalues of $\bar{V}_t$ with respect to $t$, this generally denotes a volume decreasing ellipsoid \cite{lattimore:2020-bandit}. Suppose there exists a scalar $\kappa_2 \in (0, \infty)$, where we impose that the minimum eigenvalue of $\bar{V}_t$ scales as $\kappa_2 t$ for all $t \in [1, T]$. This would result in the inequality,
    
    \begin{align}
        \kappa_2 \sqrt{t} ||\mathbf{\Delta}_\theta||_2 \leq \sqrt{\mathbf{\Delta}_\theta^T \bar{V}_t  \mathbf{\Delta}_\theta} = \norm{\mathbf{\Delta}_\theta}_{\bar{V}_t} \leq \sqrt{\log(t)} \label{eq:eta_ineq}
    \end{align}

    
    The inequality in Eq. \eqref{eq:eta_ineq} holds independent of $t$, as $||\mathbf{\Delta}_\theta||_2$ is constant and $||\mathbf{\Delta}_\theta||_{\bar{V}_t }$ is non-decreasing with respect to time. Therefore, the existence of a scalar constant $\kappa_2$ implies that,
    
    \begin{align}
        \kappa_2\sqrt{t} \norm{\theta_t^* - \hat{\theta}_t}_2  &\leq \norm{\theta_t^* - \hat{\theta}_t}_{\bar{V}_t} \label{eq:kappa-2-norm-bound}
    \end{align}

    Next suppose we select some $\kappa_v$, such that $\kappa_v$ obeys the constraint in Eq. \ref{eq:kv_bound}, we show the existence of $\kappa_v > 0$ next.

    \begin{align}
         \norm{\theta_t^* - \hat{\theta}_t}_{\bar{V}_t} \leq R \sqrt{d \log \Big( \frac{1 + tL^2/\lambda}{\delta} \Big)} + \lambda^{1/2}S \leq  \kappa_v \sqrt{\log(t)} \label{eq:kv_bound}
    \end{align}


    \textbf{Existence of $\kappa_v$:} We wish to show the existence of $\kappa_v$, such that,

    \begin{align}
         R \sqrt{d \log \Big( \frac{1 + tL^2/\lambda}{\delta} \Big)} + \lambda^{1/2}S \leq  \kappa_v \sqrt{\log(t)}
    \end{align}

    Let us define constants,

    \begin{align}
         c_1 \sqrt{\log (c_2 + c_3 t) } + \lambda^{1/2} S &\leq  \kappa_v \sqrt{\log(t)} \\
         \text{where} \quad c_1 = R\sqrt{d}, \, c_2 = 1/\delta, \, c_3 &= \frac{L^2}{\lambda \delta}
    \end{align}
    
    We first look at the relation where,

    \begin{align}
         c_1 \sqrt{\log (c_2 + c_3 t) } &\leq  \kappa_v \sqrt{\log(t)} \label{eq:kv_multiply}
    \end{align}

    We wish to find an value for $\kappa_v$ such that Eq. \eqref{eq:kv_multiply} holds. Let $c' = c_2c_3$, so for $t \geq 1$, we have,

    \begin{align}
         c_1 \sqrt{\log (c_2 + c_3 t)} &\leq c_1 \sqrt{\log (c' t) } \\
         &=  c_1 \sqrt{\log (c') + \log (t) } 
    \end{align}

    Suppose $\kappa_v$ can be arbitrarily large, and we want to infer the existence of a $\kappa_v$ such that Eq. \eqref{eq:kv_multiply} holds.

    \begin{align}
         c_1 \sqrt{\log (c') + \log (t) } &\leq \kappa_v \sqrt{\log(t)} \\ 
         c_1^2 (\log (c') + \log (t)) &\leq \kappa_v^2 \log(t) \\
         c_1 \sqrt{\log (c' - t) + 1} &\leq \kappa_v \label{eq:ineq-kv-1}
    \end{align}

    Therefore we select $\kappa_v$ as,

    \begin{align}
         \kappa_v &= c_1 \sqrt{\log (c') + 1} \\
         &= R\sqrt{d} \sqrt{\log (\frac{L^2}{\lambda \delta^2}) + 1}
    \end{align}

    Such that Eq. \eqref{eq:ineq-kv-1} holds, as long as $t \geq 1$. The second condition is that,

    \begin{align}
         \kappa_v \sqrt{\log(t)} \geq \lambda^{1/2}S \label{eq:ineq-kv-2}
    \end{align}

    For this we can argue that so long as $\sqrt{\log(t)} \geq 1 \xrightarrow[]{} t \geq e$, can choose $\kappa_v = \lambda^{1/2}S$, such that Eq. \eqref{eq:ineq-kv-2} holds. Therefore, we can select $\kappa_v$ as,

    \begin{align}
        \kappa_v = \max \Big\{ \lambda^{1/2}S,\ R\sqrt{d} \sqrt{\log (\frac{L^2}{\lambda \delta^2}) + 1} \Big\} \label{eq:kv-condition}
    \end{align}
    
    Which holds when $t \geq e$, allowing for an upper bound for $\kappa_v$ in Eq. \eqref{eq:kv-condition}. We combine our bound with Eq. \eqref{eq:kappa-2-norm-bound}, where

    \begin{align}
        \kappa_2 \sqrt{t}  \norm{\theta_t^* - \hat{\theta}_t}_2  &\leq \norm{\theta_t^* - \hat{\theta}_t}_{\bar{V}_t} \leq \kappa_v \sqrt{\log(t)} \\
         \norm{\theta_t^* - \hat{\theta}_t}_2 &\leq \kappa \sqrt{\log(t)/t}
    \end{align}
    
    Where $\kappa = \kappa_v / \kappa_2$.

    

    
    

    

    
    
\end{proof}




\subsection{Proof of Theorem \ref{thm:bound-del-ht}} \label{prf:bound-del-ht}

\textbf{Bounding the Optimistic Follower Action:} Let $\Delta_\mathcal{H}(t)$ represent the deviation in pricing action of the follower (retailer) at time $t$. Then $\Delta_\mathcal{H}(t) \leq  \kappa_H \sqrt{ \log(t)/t } $ for some $\kappa_H > 0$ and $t \geq e$.

\begin{proof}
    We connect the bound on estimated parameters $\norm{\hat{\theta} - \theta^*}_2 \leq \mathcal{O}(\sqrt{\log(t)/t})$ with bounds on the optimal expected follower rewards $\mathcal{G}(p_a)^*_a$. Under the optimism principle, we therefore solve the optimization problem to maximize $\mathcal{H}(\theta) = p_0(\theta)$ defined in Eq. \eqref{eq:h-max-theta-app} and \eqref{eq:theta_constr-app}. In this optimization problem, we optimize for the optimistic estimate of $p_0$ by maximizing $\theta_0/\theta_1$. 
    
    By Lemma \ref{thm:theta-2-norm}, we have an exact bound on the inequality Eq. \eqref{eq:theta_constr-app}. Therefore we can solve for the maximization problem for the value $\mathcal{H}(\theta)$ to give the optimistic estimate of the upper bound on follower reward. With probability $1- \delta$,
    
    
    \begin{subequations}
        \begin{alignat}{2}
            &\!\max_{\theta}   &\qquad \qquad \mathcal{H}(\theta) &= \ \frac{\hat{\theta}_0}{\hat{\theta}_1} \label{eq:h-max-theta-app}  \\
            &\text{subject to} &\qquad       \sqrt{ (\hat{\theta}_0 - \theta_0^*)^2 + (\hat{\theta}_1 - \theta_1^*)^2 } &\leq \kappa \sqrt{\log(t)} / \sqrt{t}  \label{eq:theta_constr-app}
        \end{alignat}
    \end{subequations}
    
    Where $\theta_0$ and $\theta_1$ can describe a hyperplane, (a line in the 2D case), and the confidence ball constraint in Eq. \eqref{eq:theta_constr-app} can denote the parameter constraints. The optimization problem can be visualized in 2D space by maximizing the distance a from point to line, and solving for the parameters which maximizes the slope $\theta_0/\theta_1$, as illustrated in Fig. \ref{fig:conf-ball}.
    
    \begin{align}
        \mathcal{H}^* \hat{\theta}_1 &= \hat{\theta}_0  \\
        \frac{| \mathcal{H}^* \theta_1 - \theta_0|}{\sqrt{ {\mathcal{H}^*}^2+1}} &= \kappa\sqrt{\log(t)}/\sqrt{t}  
    \end{align}
    
    Solving for Eq. \eqref{eq:point-to-line} for $\mathcal{H}^*$ gives us a closed form expression for $\mathcal{H}^*(\theta)$ in terms of $\theta$ as expressed in Eq. \eqref{eq:hstar-def}.
    
    \begin{align}
        \mathcal{H}^*(\theta) &= \frac{  \hat{\theta}_0 \hat{\theta}_1 \pm \sqrt{ (\kappa\sqrt{\log(t)}/\sqrt{t})^2 (\hat{\theta}_0^2 + \hat{\theta}_1^2) - (\kappa\sqrt{\log(t)}/\sqrt{t})^4 }  }{ \hat{\theta}_1^2 - (\kappa\sqrt{\log(t)}/\sqrt{t})^2} \label{eq:hstar-def}\\
        &= \frac{  \hat{\theta}_0 \hat{\theta}_1 \pm \sqrt{ (\hat{\theta}_0^2 + \hat{\theta}_1^2) \kappa^2\log(t)/t  - \kappa^4 \log^2(t)/t^2 }  }{\hat{\theta}_1^2 -  \kappa^2\log(t)/t}  \\
        &< \frac{  \hat{\theta}_0 \hat{\theta}_1 + \sqrt{ (\hat{\theta}_0^2 + \hat{\theta}_1^2) \kappa^2\log(t)/t} }{\hat{\theta}_1^2 -  \kappa^2\log(t)/t} \label{eq:hstar-def-bound}
    \end{align}


    
    When we obtain extreme points for $\mathcal{H}(\theta)$ we obtain the bounds for $\mathcal{G}_A(p_a)^* \in [\underline{\mathcal{G}}_A(p_a, a), \ \bar{\mathcal{G}}_A(p_a, a)]$. The extremum of $\mathcal{H}(\theta)$ always occurs at the tangent of the confidence ball constraints, where the slope of the line denoted by $\theta_0/\theta_1$ is maximized or minimized. We note $\mathcal{H}(\theta)$ is convex in both $\theta_0$ and $\theta_1$.
    
    As $\mathcal{H}^*(\theta)$ denotes the optimistic estimate of $p_0$, we see from the bounded inequality expressed in Eq. \eqref{eq:hstar-def-bound} that $\mathcal{H}^*(\theta)$ is a decreasing function as the estimate of $\theta$ improves. Therefore, the critical fractile $\gamma$ is also decreasing, with respect to $t$, where $\bar{p}_0 = \mathcal{H}^*(\theta)/2$, with reference to Eq. \eqref{eq:cf-p-relation}.

    \begin{align}
        \gamma = \frac{p_0 - a}{p_0} &=  1 - \frac{ 2 a }{ \mathcal{H}(\theta) + a} \label{eq:cf-p-relation}
    \end{align}
    
    The learning algorithms converges to a gap, we define this gap as, at large values of $t$,
    
    \begin{align}
        \Delta_\mathcal{H}(t) &= \frac{  \hat{\theta}_0 \hat{\theta}_1 \pm \sqrt{ (\hat{\theta}_0^2 + \hat{\theta}_1^2) \kappa^2\log(t)/t  - \kappa^4 \log^2(t)/t^2 }  }{\hat{\theta}_1^2 - \kappa^2\log(t)/t} - \frac{\hat{\theta}_0}{\hat{\theta}_1}
    \end{align}
    
    As the $t \xrightarrow[]{} T$ for sufficiently large values of $T$, $\Delta_\mathcal{H}(t)$ approaches 0, as $\log(t)/t$ approaches 0.
        
    \begin{align}
        \Delta_\mathcal{H}(t) &= \frac{ \hat{\theta}_1 \Big( \hat{\theta}_0 \hat{\theta}_1 + \sqrt{ (\hat{\theta}_0^2 + \hat{\theta}_1^2 ) \kappa^2\log(t)/t } \Big) - \hat{\theta}_0 (\hat{\theta}_1^2 - \kappa^2\log(t)/t) }{ \hat{\theta}_1 (\hat{\theta}_1^2 - \kappa^2\log(t)/t)} \\
        &= \frac{ \hat{\theta}_1 \hat{\theta}_0 \hat{\theta}_1 + \hat{\theta}_1 \sqrt{ (\hat{\theta}_0^2 + \hat{\theta}_1^2 ) \kappa^2\log(t)/t } - \hat{\theta}_0 \hat{\theta}_1^2  + \hat{\theta}_0 \kappa^2\log(t)/t }{ \hat{\theta}_1 (\hat{\theta}_1^2 - \kappa^2\log(t)/t)} \\
        &= \frac{ \hat{\theta}_1^2 \hat{\theta}_0 - \hat{\theta}_0 \hat{\theta}_1^2 + \hat{\theta}_1 \sqrt{ (\hat{\theta}_0^2 + \hat{\theta}_1^2 ) \kappa^2\log(t)/t } + \hat{\theta}_0 \kappa^2\log(t)/t }{ \hat{\theta}_1 (\hat{\theta}_1^2 - \kappa^2\log(t)/t)} \\
        &= \frac{ \hat{\theta}_1 \sqrt{ (\hat{\theta}_0^2 + \hat{\theta}_1^2 ) \kappa^2\log(t)/t } + \hat{\theta}_0 \kappa^2\log(t)/t }{ \hat{\theta}_1 \hat{\theta}_1^2 - \hat{\theta}_1 \kappa^2\log(t)/t} \\
        &< \frac{ \hat{\theta}_1 \sqrt{ (\hat{\theta}_0^2 + \hat{\theta}_1^2 ) \kappa^2\log(t)/t } + \hat{\theta}_0 \kappa^2\log(t)/t }{ \hat{\theta}_1 \hat{\theta}_1^2} 
    \end{align}

    We have decreasing behaviour, at sufficiently large values of $t$, such that $\sqrt{\log(t)/t} \gg \log(t)/t$, which is not difficult to obtain, as the function $\hat{\theta}_1 \hat{\theta}_1^2 - \hat{\theta}_1 \kappa^2\log(t)/t \leq \hat{\theta}_1 \hat{\theta}_1^2$, is always decreasing for $t > e$, and upper-bounded by $\hat{\theta}_1 \hat{\theta}_1^2$ (See Appendix \ref{sec:show-t-h-bound}). The upper-bound on $\Delta_\mathcal{H}(t)$ is denoted in Eq. \eqref{eq:upper-bound-del-h}.
    
    \begin{align}
       \Delta_\mathcal{H}(t) &\leq \frac{ \hat{\theta}_1 \sqrt{ (\hat{\theta}_0^2 + \hat{\theta}_1^2 ) \kappa^2\log(t)/t } + \hat{\theta}_0 \kappa^2\log(t)/t }{ \hat{\theta}_1 \hat{\theta}_1^2} \label{eq:upper-bound-del-h} 
    \end{align}

    Essentially Eq. \eqref{eq:upper-bound-del-h} is a linear combination of $\sqrt{\log(t)/t}$ and $\log(t)/t$, where $\sqrt{\log(t)/t} \geq \log(t)/t$ for $t \geq 1 $. We denote two constants $v_1$ and $v_2$ such that  $v_1\sqrt{\log(t)/t} + v_2\log(t)/t$. Therefore,

    \begin{align}
        v_1\sqrt{\log(t)/t} + v_2\log(t)/t \leq (v_1 + v_2)\sqrt{\log(t)/t} \label{eq:v12-ineq}
    \end{align}

    Where,

    \begin{align}
        v_1 = \frac{\hat{\theta}_1 \sqrt{ (\hat{\theta}_0^2 + \hat{\theta}_1^2 ) \kappa^2}}{\hat{\theta}_1 \hat{\theta}_1^2}, \quad v_2 = \frac{\hat{\theta}_0 \kappa^2}{\hat{\theta}_1 \hat{\theta}_1^2} 
    \end{align}
    
    We can see that from the inequality Eq. \eqref{eq:v12-ineq}, that for some $\kappa_H = v_1 + v_2$, then,
    
    \begin{align}
       \Delta_\mathcal{H}(t)  &\leq  \kappa_H \sqrt{ \log(t)/t } \label{eq:del_h_raw_logt_t}
    \end{align}
    
    Therefore, we can deduce that the decreasing order of magnitude of $\Delta_\mathcal{H}(t)$ is $\mathcal{O}(\sqrt{\log(t)/t})$ for sufficiently large values of $t \geq e$, where the bound in Eq. \eqref{eq:upper-bound-del-h}. 

    \begin{align}
        \Delta_\mathcal{H}(t) &\in \mathcal{O}(\sqrt{\log(t)/t}), \quad \forall t \geq e \label{eq:bigO-theta-del-decrease}
    \end{align}
    
    $\Delta_\mathcal{H}(t)$ describes the behaviour of the convergence of $\mathcal{H}^*(\theta)$ towards the true $p_0$ with respect to $t$, and this decreases with $\mathcal{O}(\sqrt{\log(t)/t})$. 

\end{proof}

\subsection{Proof of Lemma \ref{thm:cum-del-HT}} \label{prf:cum-del-HT}

\textbf{Bounding the Cumulation of $\Delta_\mathcal{H}(t)$}: Suppose we wish to measure the cumulative value of $\Delta_\mathcal{H}(t)$ from $1$ to $T$, denoted as $\Delta_\mathcal{H}(T)$. Then  $\Delta_\mathcal{H}(T) \leq 2 \kappa_H  \sqrt{T \log(T)} $.

\begin{proof}
    To quantify $\sum^T_{t=\bar{t}} \Delta_\mathcal{H}(t)$ we take the integral of $\Delta_\mathcal{H}(t)$ noting that the integral of $\Delta_\mathcal{H}(t)$ from $1$ to $T$ will always be greater than the discrete sum so longs as $\Delta_\mathcal{H}(t) > 0$.

    \begin{align}
         \Delta_\mathcal{H}(T) &< \sum^T_{t=\bar{t}} \Delta_\mathcal{H}(t)  \\
        &< \kappa_H \int_1^T \sqrt{\frac{\log(t)}{t}} dt \\
        &= \kappa_H \Big( 2 \sqrt{t \log(t)} - \sqrt{2 \pi} \erf( \sqrt{\log(t) / 2}  )  \Big|_1^T \Big) \\
        &= \kappa_H \Big( 2 \sqrt{T \log(T)} - \sqrt{2 \pi} \erf( \sqrt{\log(T) / 2}  - 2 \sqrt{1 \log(1)} - \sqrt{2 \pi} \erf( \sqrt{\log(1) / 2} )  \Big) \\
        &= \kappa_H \Big( 2 \sqrt{T \log(T)} - \sqrt{2 \pi} \erf( \sqrt{\log(T) / 2})  \Big) \\
        &\leq 2 \kappa_H  \sqrt{T \log(T)} \label{eq:h-decrease-rate}
    \end{align}
\end{proof}

From Lemma \ref{thm:cum-del-HT} we can see that the optimistic pricing deviation of the follower is,

\begin{align}
    \Delta_\mathcal{H}(T) &\in \mathcal{O}(\sqrt{T \log(T)})
\end{align}

\subsection{Proof of Lemma \ref{thm:demand-optimistic-reduction}} \label{prf:demand-optimistic-reduction}

\begin{lemma} \label{thm:demand-optimistic-reduction}
    \textbf{Demand estimate reduction:} Given any price $p$, the maximum optimistic demand estimate is $\bar{d}_\theta(p) \in \mathcal{O}(\sqrt{\log(t)/t})$, and $|\theta_0^t - \theta_0^*| \in \mathcal{O}(\sqrt{\log(t)/t})$. (Proof in Appendix \ref{prf:demand-optimistic-reduction}.)
\end{lemma}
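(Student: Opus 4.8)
The plan is to obtain both claims as immediate corollaries of the Euclidean confidence bound established in Lemma \ref{thm:theta-2-norm}, which guarantees that with probability $1-\delta$ and for $t \geq e$ the estimate satisfies $\|\theta^t - \theta^*\|_2 \leq \kappa\sqrt{\log(t)/t}$, and hence that every $\theta \in \mathcal{C}^t$ lies in the Euclidean ball of radius $\kappa\sqrt{\log(t)/t}$ centred at $\theta^*$. All of the argument takes place on the high-probability event on which this inclusion holds.

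First I would dispatch the parameter claim. Since $|\theta_0 - \theta_0^*| = \sqrt{(\theta_0 - \theta_0^*)^2} \leq \sqrt{(\theta_0 - \theta_0^*)^2 + (\theta_1 - \theta_1^*)^2} = \|\theta - \theta^*\|_2$, projecting the confidence ball onto its first coordinate gives $|\theta_0^t - \theta_0^*| \leq \kappa\sqrt{\log(t)/t}$, i.e. $|\theta_0^t - \theta_0^*| \in \mathcal{O}(\sqrt{\log(t)/t})$, with the same constant $\kappa$ as in Lemma \ref{thm:theta-2-norm}.

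Next I would bound the optimistic demand gap. Fixing a price $p$, I read $\bar{d}_\theta(p)$ as the largest amount by which an optimistic estimate of expected demand can overshoot the truth, $\bar{d}_\theta(p) \equiv \sup_{\theta \in \mathcal{C}^t}\big(\Gamma_\theta(p) - \Gamma_{\theta^*}(p)\big)$. Because $x \mapsto \max\{0,x\}$ is $1$-Lipschitz, the truncation in $\Gamma_\theta$ cannot enlarge the gap, so for any $\theta \in \mathcal{C}^t$,
\[
|\Gamma_\theta(p) - \Gamma_{\theta^*}(p)| \leq \big| (\theta_0 - \theta_1 p) - (\theta_0^* - \theta_1^* p) \big| \leq |\theta_0 - \theta_0^*| + p\,|\theta_1 - \theta_1^*| \leq \sqrt{1 + p^2}\,\|\theta - \theta^*\|_2,
\]
the last step by Cauchy--Schwarz. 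Taking the supremum over $\theta \in \mathcal{C}^t$ and invoking Lemma \ref{thm:theta-2-norm} yields $\bar{d}_\theta(p) \leq \sqrt{1+p^2}\,\kappa\sqrt{\log(t)/t} \in \mathcal{O}(\sqrt{\log(t)/t})$; if a single constant uniform over the admissible price range $[a,\bar{p}]$ is wanted, one replaces $p$ by $\bar{p}$ to get $\kappa_d = \sqrt{1+\bar{p}^2}\,\kappa$.

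I do not expect a genuine obstacle: the lemma is essentially a restatement of Lemma \ref{thm:theta-2-norm} after a coordinate projection and a Lipschitz estimate. The only two points needing a sentence of care are (i) observing that the $\max\{0,\cdot\}$ clipping in $\Gamma_\theta$ only shrinks the deviation and may therefore be discarded in the upper bound, and (ii) keeping the price dependence explicit so that the hidden constant in the $\mathcal{O}(\cdot)$ is manifestly finite on the bounded action space; both are one-liners.
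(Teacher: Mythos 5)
Your proof is correct and follows essentially the same route as the paper: both derive the claim by projecting the Euclidean confidence ball of Lemma \ref{thm:theta-2-norm} onto the $\theta_0$ coordinate. If anything, your version is slightly more complete — the paper's proof only treats the $p=0$ case (after fixing $\hat{\theta}_1=\theta_1^*$), whereas your Lipschitz/Cauchy--Schwarz step makes the price-dependent constant $\sqrt{1+p^2}\,\kappa$ explicit for arbitrary $p$.
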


\begin{proof}
    Suppose we are interested in only the confidence on $\theta_0$, then Eq. \eqref{eq:theta_constr} simplifies to Eq. \eqref{eq:theta0-opt-bound}.

    \begin{align}
        (\hat{\theta}_0 - \theta_0^*)^2 &\leq \kappa^2 \log(t) / t \label{eq:theta0-opt-bound} \\
        \hat{\theta}_0 - \theta_0^* &\leq \kappa \sqrt{\log(t) / t} \label{eq:theta0-opt-bound-sq}
    \end{align}

    Suppose in Eq. \eqref{eq:theta0-opt-bound-sq} we are setting $\theta^*_1 = \hat{\theta}_1$, thereby maximizing the possible demand. Effectively this is the maximum possible demand (when $p=0$) and shrinks at a rate of $\kappa \sqrt{\log(t) / t}$.

\end{proof}

\subsection{Notes on the Bound on $\Delta_\mathcal{H}(t)$} \label{sec:show-t-h-bound}

We want to show that $\hat{\theta}_1 \hat{\theta}_1^2 - \hat{\theta}_1 \kappa^2\log(t)/t \leq \hat{\theta}_1 \hat{\theta}_1^2$ for $t \geq e$.

\begin{align*}
    \frac{d}{dt}\left(\frac{\kappa^2\log t}{t}\right) &= \frac{v\frac{du}{dt} - u\frac{dv}{dt}}{v^2} \\
    &= \frac{t \cdot \frac{d}{dt}(\kappa^2\log t) - \kappa^2\log t \cdot \frac{d}{dt}(t)}{t^2} \\
    &= \frac{t \cdot \kappa^2\frac{1}{t} - \kappa^2\log t \cdot 1}{t^2} \\
    &= \frac{\kappa^2 - \kappa^2\log t}{t^2} \\
    &= \frac{\kappa^2(1 - \log t)}{t^2}
\end{align*}

Therefore,

\begin{align}
    \frac{d}{dt}\left(\frac{\kappa^2\log t}{t}\right) = 0 \xrightarrow[]{} t = e
\end{align}


\section{Regret Bounds}

\subsection{Proof for Lemma \ref{thm:shrink_x_eps}} 

\begin{lemma} \label{thm:shrink_x_eps} \textbf{Approximation of Smooth Concave Function:} Suppose any smooth differentiable concave function $f(x): \mathbb{R} \to \mathbb{R}$, and its approximation, $\hat{f}(x): \mathbb{R} \to \mathbb{R}$, which is also smooth and differentiable (but not guaranteed to be concave), if $f(x)$ exhibits an decreasing upper-bound on the approximation error $\lVert \hat{f}(x) - f(x) \rVert \leq \epsilon$, then $\epsilon \to 0$ consequently implies $\norm{\hat{x}^* - x^*} \to 0$ almost surely, where $x^* = \mathrm{argmax} \ f(x)$, $\hat{x}^* = \mathrm{argmax} \ \hat{f}(x)$, and $|| \cdot ||$ denotes the supremum norm.
 
\end{lemma}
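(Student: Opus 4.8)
The plan is to run a standard argmax-consistency (well-separation) argument, the essential input being that a smooth strictly concave $f$ has a well-separated maximiser. First I would restrict attention to the compact domain relevant to every application of the lemma — in our setting the price interval $[a,\bar p]$, on which $\mathcal{U}_B(p\mid a)=(\theta_0-\theta_1 p)(p-a)$ is strictly concave — so that $x^* = \mathrm{argmax}\, f$ exists and is unique, and $\hat x^*$ exists and may be taken to be \emph{any} maximiser of $\hat f$ (continuity on a compact set guarantees attainment). For each $\delta>0$ I would then define the separation gap $\eta(\delta) = f(x^*) - \sup_{\lVert x-x^*\rVert \ge \delta} f(x)$, where the supremum is over the compact set of domain points at distance at least $\delta$ from $x^*$; it is attained, and since $x^*$ is the unique global maximiser we get $\eta(\delta)>0$.

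Next I would chain the uniform bound $\sup_x \lVert \hat f(x) - f(x)\rVert \le \epsilon$ through the two optimality inequalities. On one side, $\hat f(\hat x^*) \ge \hat f(x^*) \ge f(x^*) - \epsilon$. On the other side, for every $x$ with $\lVert x-x^*\rVert \ge \delta$ one has $\hat f(x) \le f(x) + \epsilon \le f(x^*) - \eta(\delta) + \epsilon$. Hence whenever $\epsilon < \eta(\delta)/2$, every point at distance at least $\delta$ from $x^*$ satisfies $\hat f(x) < f(x^*) - \epsilon \le \hat f(\hat x^*)$, which forces $\lVert \hat x^* - x^*\rVert < \delta$. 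Since $\delta>0$ is arbitrary and $\epsilon \to 0$ by hypothesis, this yields $\lVert \hat x^* - x^*\rVert \to 0$; when $\epsilon$ is itself a random bound that vanishes almost surely — as in the confidence-set construction underlying Lemma \ref{thm:theta-2-norm} — the convergence holds almost surely.

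If a rate is required downstream (for the Stackelberg regret analysis, where this lemma is used to control $\Delta_b(\gamma)$), I would additionally Taylor-expand $f$ at $x^*$: since $f$ is smooth and concave with $f'(x^*)=0$ and, under strict concavity, $f''(x^*)<0$, there is a neighbourhood on which $f(x^*) - f(x) \ge \tfrac{c}{2}\lVert x-x^*\rVert^2$ for some $c>0$; inserting this into the same inequality chain in place of $\eta(\delta)$ gives the quantitative bound $\lVert \hat x^* - x^*\rVert \le C\sqrt{\epsilon}$ for a constant $C$ depending only on $c$.

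The main obstacle I anticipate is not analytical depth but regularity bookkeeping: as literally stated the lemma assumes neither \emph{strict} concavity, nor a compact domain, nor attainment of $\sup \hat f$, yet all three are needed for $\eta(\delta)>0$ and for $\hat x^*$ to be well defined (without them a concave $f$ could have a flat plateau of maximisers, or $\hat f$ could fail to attain its supremum). I would resolve this by making explicit that in every use of the lemma these conditions hold automatically — the price variable lies in the compact interval $[a,\bar p]$, the upper-bounding function $\mathcal{U}_B(\cdot\mid a)$ is a strictly concave quadratic there, and its linear-bandit estimate $\hat f$ is continuous on that compact set — and stating them as standing hypotheses of the lemma rather than leaving them implicit.
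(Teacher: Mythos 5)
Your proposal is correct and follows essentially the same route as the paper's proof: both combine the two optimality inequalities $\hat f(\hat x^*)\ge \hat f(x^*)$ and $f(x^*)\ge f(\hat x^*)$ with the uniform error bound to force the value gap $f(x^*)-f(\hat x^*)$ below $2\epsilon$, and then convert value convergence into argument convergence via concavity. Your version is in fact more careful at the final step — the paper simply asserts that concavity yields $\lVert \hat x^* - x^*\rVert \to 0$, whereas your separation gap $\eta(\delta)$ (and your remark that strict concavity, compactness, and attainment are needed to rule out a plateau of maximisers) makes explicit the hypotheses that this last implication actually requires.
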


\begin{proof}
    
    Suppose some controllable function $\epsilon(N)$ is monotonically decreasing with respect to increasing $N$ bounding the supremum norm of the approximation error of $f(x)$, then the following statements hold true for any $x \in \mathbb{R}$.

    \begin{align}
        \norm{f(x^*) - \hat{f}(x^*)} \leq \epsilon(N), \quad \norm{f(\hat{x}^*) - \hat{f}(\hat{x}^*)}  \leq \epsilon(N)
    \end{align}

    Summing these conditions, we have the new relation,

    \begin{align}
        \norm{f(x^*) - \hat{f}(x^*)} + \norm{\hat{f}(\hat{x}^*) - f(\hat{x}^*)} \leq 2\epsilon(N)
    \end{align}

    By triangle inequality,

    \begin{align}
        \norm{f(x^*) - \hat{f}(x^*) + \hat{f}(\hat{x}^*) - f(\hat{x}^*)} \leq \norm{f(x^*) - \hat{f}(x^*)} + \norm{\hat{f}(\hat{x}^*) - f(\hat{x}^*)} \leq 2\epsilon(N) \label{eq:tri_ineq_f_bound}
    \end{align}

    Note: the order of $\hat{f}(\cdot) - f(\cdot)$ versus $f(\cdot) -  \hat{f}(\cdot)$ can be swapped and will yield the same result. The right left hand side of Eq. \eqref{eq:tri_ineq_f_bound}, allows us to rearrange the terms to obtain the equivalent expression rearranging the terms inide the norm on the left hand side,
    
    \begin{align}
       \norm{f(x^*) - f(\hat{x}^*) + \hat{f}(\hat{x}^*) - \hat{f}(x^*) } \leq 2\epsilon(N)
    \end{align}
    


    By definition, for any smooth function, 

    \begin{align}
        f(x^*) - f(\hat{x}^*) \geq 0, \quad \hat{f}(\hat{x}^*) - \hat{f}(x^*) \geq 0
    \end{align}

    Thus we have equality, 

    \begin{align}
        \lVert \underbrace{\hat{f}(\hat{x}^*) - \hat{f}(x^*)}_{ \geq 0} +  \underbrace{f(x^*) - f(\hat{x}^*)}_{ \geq 0} \rVert = \lVert \hat{f}(\hat{x}^*) - \hat{f}(x^*) \rVert + \norm{f(x^*) -  f(\hat{x}^*)}  
    \end{align}

    Thus we have the new inequality,

    \begin{align}
        0 \leq \lVert \hat{f}(\hat{x}^*) - \hat{f}(x^*) \rVert + \norm{f(x^*) -  f(\hat{x}^*)}  \leq 2 \epsilon(N)
    \end{align}

    As $2\epsilon(N) \to 0$, therefore this implies both,

    \begin{align}
        \lim_{{\epsilon \to 0}} \norm{\hat{f}(\hat{x}^*) - \hat{f}(x^*)} = 0 \\
        \lim_{{\epsilon \to 0}} \norm{f(x^*) -  f(\hat{x}^*)} = 0 \label{eq:concave_distance_fx}
    \end{align}

    \textbf{Concavity of $f(x)$:} This is where it is important to note the concavity of $f(x)$ particularly in Eq. \eqref{eq:concave_distance_fx}. Due to the concavity of $f(x)$, as the function approximation error approaches 0, so does $\norm{\hat{x}^* - x^*}$. 

    \begin{align}
        \lim_{{\epsilon \to 0}} \norm{f(x^*) -  f(\hat{x}^*)} = 0 \implies \lim_{{\epsilon \to 0}} \norm{\hat{x}^* - x^*} = 0
    \end{align}

    \end{proof}

\subsection{Notes on Optimistic Best Response} \label{sec:notes-opt-br}

The upper-bound of the optimistic best response, $\bar{b}_a$, can be determined by estimating the optimistic riskless price $\bar{p}_0$, while both sides of Eq. \eqref{eq:max-theta} provide optimistic estimates of expected profit, $\mathbb{E}[\mathcal{G}_B(\theta|a)]$, assuming the optimal price is at the riskless price, $p_0 = \mathcal{H}(\theta)$. These expressions summarize follower pricing and ordering under a greedy strategy, accounting for optimistic parameter estimation. In principle, although the follower may never converge to the optimal $p^*$ given the limitations of the approximation tools, aiming for convergence towards $p_0$ serves as a practical approach. $p_0$ represents not only a simplifying mathematical assumption, but also aligns with the idea of optimization under optimism - as $p^ \leq p_0$. By pricing as the riskless price, the follower aims to price optimistically, serving as a reasonable economic strategy for dynamic pricing.



\begin{align}
    \mathbb{E}[\bar{\mathcal{G}}_B(\theta|a)] &= \underset{\theta \in \mathcal{C}^t} {\mathrm{max}} \ \Big( p^*(\theta) - a \Big)  F_{\theta}^{-1} \Big(\frac{ p^*(\theta) - a}{  p^*(\theta) } \Big) \\
    &\leq  \underset{\theta \in \mathcal{C}^t} {\mathrm{max}} \ \Big( \frac{\mathcal{H}(\theta) + a}{2} \Big)  F_{\theta}^{-1} \Big(\frac{ \mathcal{H}(\theta) - a}{  \mathcal{H}(\theta) } \Big)  \\ 
    &= \underset{\theta \in \mathcal{C}^t} {\mathrm{max}} \ \Big( \frac{\mathcal{H}(\theta) + a}{2} \Big)  F_{\theta}^{-1} \Big(1 - \frac{2a}{\mathcal{H}(\theta) + a} \Big)  \\
    \bar{\theta} &=   \underset{\theta \in \mathcal{C}^t} {\mathrm{argmax}} \ \Big(\frac{\mathcal{H}(\theta) + a}{2} \Big)  F_{\theta}^{-1} \Big(1 - \frac{2a}{\mathcal{H}(\theta) + a} \Big)  \\
    \bar{b}_a &= F^{-1}_{\theta_a} \Big( \frac{p_0 - a}{p_0} \Big), \quad  p_0 = a + \frac{\bar{p} - a}{2} = \frac{\mathcal{H}(\theta) + a}{2}, \quad \mathfrak{B}(a) \in [\underline{b}_a, \bar{b}_a], \ \underline{b}_a \geq 0  \
\end{align}

\subsection{Proof of Theorem \ref{thm:follower-regret}} \label{prf:follower-regret}

\textbf{Follower Regret in Online Learning for NPG:} \textbf{Follower Regret in Online Learning for NPG:} Given pure leader strategy $\pi_A$, the worst-case regret of the follower, $R_B^T(\pi_A)$, as defined in Eq. \eqref{eq:follower-regret} when adopting the LNPG algorithm, is bounded by $R_B^T(\pi_A) \leq \aleph_B + \epsilon_B T - \theta_1 \kappa_H^2 \log^2(T)$, for sufficiently large values of $T$, where $\aleph_B, \epsilon_B, \theta_1, \kappa_H$ are positive real constants in $\mathbb{R}^+$.

\begin{proof}
    Let $\mathcal{U}_B(p|a)$ represent the riskless profit for the follower at follower action (retail price) $p$, give leader action (supplier price) $a$.
    \begin{align}
        \mathcal{U}_B(p|a) &= \theta_0(p-a) - \theta_1(p-a)^2 = -\theta_1 \Big( (p-a) - \frac{\theta_0}{2\theta_1} \Big)^2 + \frac{\theta_0^2}{4 \theta_1} 
    \end{align}

    We argue that given any pure strategy $\pi_A$, the contextual regret $R_B^T(\pi_A)$, is bounded by $R_B^T(\pi_A) \leq \aleph_B + \epsilon_B T - \theta_1 \kappa_H^2 \log^2(T)$, for well-defined positive real value constants $\aleph_B, \epsilon_B, \theta_1, \kappa_H$. For the sake of the argument, we assume that given $a$, $p(z)^*$ which is the solution to Eq. \eqref{eq:pz_optim} is known. Where $z$ represents the quantity above the expected demand from $\Gamma_\theta(p)$ (refer to Appendix \ref{prf:optimal-order-b}). For the mechanics of this proof, this allows us to consider $\mathbbm{E}[\mathcal{G}_B(p^*|a)]$ a constant value.


    \textbf{Riskless Pricing Difference $\Delta_\mathcal{U}(t)$:} First we begin by analyzing the convergence properties of the upper-bounding function $\mathcal{U}_B(p|a)$.  We show that $\Delta_\mathcal{U}(t) \leq \theta_1 \kappa_H^2 \log(t)/t$. The difference in exact profit between pricing at the riskless price $\mathcal{U}_B(p_0)$ and some optimistic riskless price $\mathcal{U}_B(\bar{p}_0)$, denoted as $\Delta_\mathcal{U}(t) = \mathcal{U}_B(p_0) - \mathcal{U}_B(\bar{p}_0)$ is such that $\Delta_\mathcal{U}(t) \leq \theta_1 \kappa_H^2 \log(t)/t$.

    For convenience, in equations where no ambiguity exists, we may omit from explicitly including the variable $a$ in the subsequent equations, denoting them as $\mathcal{G}_B(p|a) \equiv \mathcal{G}_B(p)$, for $\mathcal{U}_B(p|a), \equiv \mathcal{U}_B(p)$ and so forth.
    
    \begin{align}
        \mathcal{U}_B(p) &= (\theta_0 - \theta_1p)(p-a) \\
        &= -\theta_1\left(p^2 - \frac{\theta_1a + \theta_0}{\theta_1}p\right) + \theta_0a \\
        &= -\theta_1\left(p - \frac{\theta_1a + \theta_0}{2\theta_1}\right)^2 + \frac{\theta_1a^2}{4} + \frac{3\theta_0a}{2} + \frac{\theta_0^2 }{4\theta_1} \\
        &= -\theta_1\left(p - \frac{\theta_1a + \theta_0}{2\theta_1}\right)^2 + g(a, \theta_0, \theta_1), \quad \text{where} \quad g(a, \theta_0, \theta_1) \equiv \frac{\theta_1a^2}{4} + \frac{3\theta_0a}{2} + \frac{\theta_0^2 }{4\theta_1} \\
        &= -\theta_1\left(p - \frac{\mathcal{H}(\theta) + a}{2}\right)^2 + g(a, \theta_0, \theta_1)
    \end{align}
    
    The difference in the riskless price, note $p_0 = \frac{\mathcal{H}(\theta) + a}{2}$, combining Theorem \ref{thm:bound-del-ht},
    
    \begin{align}
        \Delta_\mathcal{U}(t) &= \mathcal{U}_B(p_0) - \mathcal{U}_B(\bar{p}_0) \\
        &= g(a, \theta_0, \theta_1) -\theta_1 \Big(p_0 - \frac{\mathcal{H}(\theta) + a}{2} \Big)^2  - g (a, \theta_0, \theta_1 ) + \theta_1 \Big(p_0 + \Delta_\mathcal{H}(t) - \frac{\mathcal{H}(\theta) + a}{2} \Big)^2 \\
        &= \theta_1(\Delta_\mathcal{H}(t) )^2  \\
        &\leq \theta_1 \kappa_H^2 \log(t)/t \label{eq:logt-over-t}
    \end{align}
    
    We integrate the term $\log(t)/t$ to obtain obtain a closed form expression for Eq. \eqref{eq:logt-over-t}.
    
    \begin{align}
        \int_{1}^{T} \frac{\log(t)}{t} \, dt =  \frac{1}{2} \Big( \log^2(T) - \log^2(1) \Big) 
    \end{align}
    
    \textbf{Regret Analysis:} We introduce $a$ back again into the notation, as the expected profit, $\mathbbm{E}[\mathcal{G}_B(p|a)]$ and upper-bounding functions $\mathcal{U}_B(p)$, are conditioned on the leader action $a$. We present the definition of regret $R_B^T(\pi_A)$ conditioned upon the policy of $\pi_A$, which is a pure strategy consisting of a sequence of action. As we assume a rational and greedy follower; he does not attempt to alter the strategy of the leader.

    \begin{align}
        R_B^T(\pi_A) &=  \sum^T_{t = 1} \mathbbm{E}[\mathcal{G}_B(p^* | a)] - \sum^T_{t = 1} \mathbbm{E}[\mathcal{G}_B(p^t| a)] \\
        &= \sum^T_{t = 1} \  \Big( \mathbbm{E}[\mathcal{G}_B(p^* | a)] - \mathbbm{E}[\mathcal{G}_B(\bar{p}_0 | a)] \Big)
    \end{align}
    
     We also know that if we use the riskless price estimate.

    \begin{align}
         \mathbbm{E}[\mathcal{G}_B(\bar{p}_0|a)] &\leq \mathbbm{E}[\mathcal{G}_B(p^*|a)] \leq \mathcal{U}_B(p_0|a) \\
         \text{where,} \quad \mathbbm{E}[\mathcal{G}_B(p|a)] &\leq \mathcal{U}_B(p|a), \quad \forall p, \ \forall a 
    \end{align}

    Where $\bar{p}_0$ is the upper estimate of the riskless price $p_0$. And in the general case,
    
    \begin{align}
         \mathbbm{E}[\mathcal{G}_B(\tilde{p}|a)] &\leq \mathbbm{E}[\mathcal{G}_B(p^*|a)] \leq \mathcal{U}_B(p_0|a) \\
         \text{where,} \quad &\tilde{p} \in \{x \in \mathbb{R}^+ : x \neq p^*\} \\
         \text{where,} \quad \mathbbm{E}[\mathcal{G}_B(p|a)] &\leq \mathcal{G}_B(p^*|a), \quad \forall p, \ \forall a 
    \end{align}

    For any price $p$, we know that the profits under no inventory risk is always higher than when inventory risk exists, $\mathbbm{E}[\mathcal{G}_B(p|a)] \leq \mathcal{U}_B(p|a)$. We make note of the relation $\mathcal{G}_B(p|a) \leq \mathcal{U}_B(p|a), \ \forall p \in \mathbb{R}^+, \ \forall a \in \mathbb{R}^+$. Thus we define $\mathcal{Z}_B(p|a)$ as,
    
    \begin{align}
        \mathcal{Z}_B(p|a) = \mathcal{U}_B(p|a) - \mathbbm{E}[\mathcal{G}_B(p|a)] 
    \end{align}

    As a consequence of this approximation error, $\mathcal{Z}_B(p|a)$, we note there must be a constant term which upper bounds the regret of the follower.

    \begin{align}
         \epsilon_0(a) = \mathbbm{E}[\mathcal{G}_B(p^*|a)] - \mathbbm{E}[\mathcal{G}_B(p_0|a)]
    \end{align}

    Our approach involves analyzing two cases. 
    
    \textbf{Case 1:} $\mathbbm{E}[\mathcal{G}_B(p^*)] \leq \mathcal{U}_B(\bar{p}_0)$, 
    
    \begin{align}
        R_B^T(\pi_A) &\leq \sum^T_{t = 1} \ \mathcal{U}_B(\bar{p}_0) - \mathbbm{E}[\mathcal{G}_B(p_0)]   \\
        &= \sum^T_{t = 1} \Big( \mathcal{U}_B(p_0) - \Delta_\mathcal{U}(t) \Big)  - \Big( \mathcal{U}_B(p_0) - \mathcal{Z}_B(p_0)  \Big) \\
        &= \sum^T_{t = 1} \mathcal{Z}_B(p_0) - \Delta_\mathcal{U}(t)
    \end{align}

    \textbf{Case 2:} $\mathbbm{E}[\mathcal{G}_B(p^*)] > \mathcal{U}_B(\bar{p}_0)$,

    \begin{align}
        R_B^T(\pi_A) &\leq \sum^T_{t = 1} \  \ \mathcal{U}_B(p_0) - \mathbbm{E}[\mathcal{G}_B(\bar{p}_0)] \\
        &\leq \sum^T_{t = 1} \  \ \mathcal{U}_B(p_0) - (\mathcal{U}_B(\bar{p}_0) - \mathcal{Z}_B(\bar{p}_0) ) \\
        &= \sum^T_{t = 1} \  \ \mathcal{U}_B(p_0) - (\mathcal{U}_B(p_0) - \Delta_\mathcal{U}(t) - \mathcal{Z}_B(\bar{p}_0)) \\
        &= \sum^T_{t = 1} \ \mathcal{Z}_B(\bar{p}_0) + \Delta_\mathcal{U}(t) \label{eq:case1_follower_regret_bound_pbar}
    \end{align}

    We define $\Delta_\mathcal{Z}(t)$ as,

    \begin{align}
        \Delta_\mathcal{Z}(t) &= \mathcal{Z}_B(\bar{p}_0) - \mathcal{Z}_B(p_0) \\
        &= \mathcal{U}_B(\bar{p}_0) - \mathbb{E}[\mathcal{G}_B(\bar{p}_0)] - \mathcal{U}_B(p_0) + \mathbb{E}[\mathcal{G}_B(p_0)] \\
        &= \underbrace{\mathcal{U}_B(\bar{p}_0) - \mathcal{U}_B(p_0)}_{\Delta_\mathcal{U}(t)} + \underbrace{\mathbb{E}[\mathcal{G}_B(p_0)] - \mathbb{E}[\mathcal{G}_B(\bar{p}_0)]}_{\Delta_\mathcal{G}(t)} 
    \end{align}

    We take the definition of $\Delta_\mathcal{Z}(t)$ and insert it back to the definition in Eq. \eqref{eq:case1_follower_regret_bound_pbar}.
    
    \begin{align}
        R_B^T(\pi_A) &\leq \sum^T_{t = 1} \ \mathcal{Z}_B(\bar{p}_0) + \Delta_\mathcal{U}(t) \\
        &= \sum^T_{t = 1} \ \mathcal{Z}_B(p_0) + \Delta_\mathcal{Z}(t) + \Delta_\mathcal{U}(t) \\
        &= \sum^T_{t = 1} \ \mathcal{Z}_B(p_0) + 2 \Delta_\mathcal{U}(t) + \Delta_\mathcal{G}(t)
    \end{align}

    We note the peculiarity that $\Delta_\mathcal{G}(t) \in \mathbb{R}$ and $\Delta_\mathcal{U}(t) \geq 0$. Thus in the worst case,

    \begin{align}
        R_B^T(\pi_A) &\leq \sum^T_{t = 1} \ \mathcal{Z}_B(p_0) + 2 \Delta_\mathcal{U}(t) + | \Delta_\mathcal{G}(t)|
    \end{align}
    
    We upper bound $\mathcal{Z}_B(\bar{p}_0 | a)$ by $\tilde{\epsilon}(a)$, where,  

    \begin{align}
        \mathbbm{E}[\mathcal{G}_B(p^*|a)]  - \mathbbm{E}[\mathcal{G}_B(p_0|a)] &\leq \mathcal{U}_B(p_0|a) - \mathbbm{E}[\mathcal{G}_B(p_0|a)] \\
        \tilde{\epsilon}_0 &=\mathcal{U}_B(p_0|a) - \mathbbm{E}[\mathcal{G}_B(p_0|a)] \label{eq:approximation_error_defn_b}
    \end{align}

    We note that $\mathcal{U}_B(p|a)$ can be expressed in linear form, with parameters $\theta$, whereas $\mathbbm{E}[\mathcal{G}_B(p|a)]$ requires the solution to Eq. \eqref{eq:pz_optim}, which may not always have a closed form solution. 

    \begin{align}
        \tilde{\epsilon}(a) &= \underset{p \in \mathbb{R}^+ } {\mathrm{max}} \ \mathcal{Z}_B(p|a) 
    \end{align}
        
    We can see that there are two contributions to the contextual regret. $\tilde{\epsilon}(a)$ is the worst-case error term of the approximation of $\mathbb{E}[\mathcal{G}_B(\cdot)]$ via $\mathcal{U}_B(\cdot)$. $\Delta_\mathcal{U}(t)$ represents the learning error of learning to achieve riskless pricing under imperfect information. Economically, $\tilde{\epsilon}(a)$ represents the upper-bound on the difference between expected profit with and without potential inventory risk. Thus using this approximation, we learn to price at $p_0$, and converge to a constant error term $\epsilon_0(a)$. 
    
    
    
    In Case 2, where $\mathbbm{E}[\mathcal{G}_B(p^*|a)] \geq \mathcal{U}_B(\bar{p}_0|a)$, the follower learns to price at $\bar{p}_0$ due to misspecification. Nevertheless, in terms of asymptotic behaviour, as $t \xrightarrow[]{} \infty$, $\bar{p}_0 \xrightarrow[]{} p_0$ and $\Delta_\mathcal{U}(t) \xrightarrow[]{} 0$ almost surely. Since by definition $\mathcal{U}_B(p_0) \geq \mathbbm{E}[\mathcal{G}_B(p)], \ \forall p \in \mathbb{R}^+$, we know that the system will approach Case 1 as $t \xrightarrow[]{} \infty$. In this case, the learning error $\Delta_\mathcal{U}(t)$ is vanishing, while the riskless inventory approximation error $\epsilon_0(a)$ is persistent, and bounded by $\epsilon_0(a) \leq \tilde{\epsilon}(a)$. We can now we can impose a worst case bound on expected regret, assuming $t$ is sufficiently large, and $\tilde{\epsilon}(a) \geq \Delta_\mathcal{U}(t), \, \forall a \in \mathbb{R}^+$ (which is true in Case 1). We consider the behaviour of $T$ as it approaches infinity. When considering relatively small values of $T$, the regret may exhibit a worst-case regret spanning the entire range of the $\mathcal{U}(\cdot)$, given the absence of constraints on the magnitude of $|\Delta_\mathcal{G}(t)|$. Therefore, it becomes imperative to establish the assumption that the temporal horizon $T$ is sufficiently large.
    
    
    \begin{align}
        R_B^T(\pi_A) &\leq \sum^T_{t = 1} \ \Big( \tilde{\epsilon}(a) - \Delta_\mathcal{U}(t) \Big) \label{eq:inf_reg_anal_b}
    \end{align}

    Therefore, when examining the asymptotic behaviour for Case 2, when letting $\epsilon_B$ be the maximum value of $\tilde{\epsilon}(a)$ for all $a \in \mathbb{R}^+$, we observe that,

    \begin{align}
        R_B^T(\pi_A) &\leq  \epsilon_B T - \theta_1 \kappa_H^2 \log^2(T) \\
        \text{where} \quad \epsilon_B &= \underset{a \in \mathbb{R}^+ } {\mathrm{max}} \ \mathcal{Z}_B(\tilde{\epsilon}(a)|a) \label{eq:epsilon_b_interms_z}
    \end{align}
    

    Where Eq. \eqref{eq:epsilon_b_interms_z} serves as an alternative expression for Eq. \eqref{eq:br_approx_def_eps_b}. As recounted in Eq. \eqref{eq:approximation_error_defn_b} and Eq. \eqref{eq:inf_reg_anal_b}, the system will approach Case 1 almost surely as $t \xrightarrow[]{} \infty$. We note that the constant worst-case linear scaling via $\epsilon_B$ due to the approximate best response of the follower is external to the learning algorithm, and thus is dependent on the error of the approximation of the solution to Eq. \eqref{eq:pz_optim}, via $\mathcal{Z}_B(p_0)$.
    
    \textbf{General Case:} We combine the characteristics of Case 1 and Case 2 and express the instantaneous regret behaviour as,
    
    \begin{align} 
      r^t_B(\pi_A) = 
        \begin{cases} 
          \mathcal{Z}_B(p_0) + 2 \Delta_\mathcal{U}(t) + \Delta_\mathcal{G}(t), \quad & 1 \leq t \leq T'  \\ 
          \mathcal{Z}_B(p_0) - \Delta_\mathcal{U}(t) \quad & T' < t
        \end{cases} \label{eq:inst_reg_b_cases}
    \end{align} 

    To clarify, $T'$ denotes the change point where the system transitions from Case 2, where $\mathbbm{E}[\mathcal{G}_B(p^*|a)] > \mathcal{U}_B(\bar{p}_0|a)$, to Case 1, where $\mathbbm{E}[\mathcal{G}_B(p^*|a)] \leq \mathcal{U}_B(\bar{p}_0|a)$. 
    
    \begin{align} 
      R^T_B(\pi_A) &= \sum_{t=1}^T r^t_B(\pi_A) \\
      &\leq \int_{t=1}^T \, \sum_{t=1}^T r^t_B(\pi_A) \\
      &= \int_{t=1}^{T'} \, \Big( \mathcal{Z}_B(p_0) + 2 \Delta_\mathcal{U}(t) + |\Delta_\mathcal{G}(t)| \Big) dt + \int_{t=T'}^{T} \, \Big( \mathcal{Z}_B(p_0) - \Delta_\mathcal{U}(t) \Big) dt
    \end{align} 

    Suppose an upper-bounding term bounds $|\Delta_\mathcal{G}(t)|$, such that $|\Delta_\mathcal{G}(t)| \leq \bar{\Delta}_\mathcal{G}$. We can then rearrange,

    \begin{align} 
      R^T_B(\pi_A) &\leq \mathcal{Z}_B(p_0)(T'-1) + \bar{\Delta}_\mathcal{G}(T'-1) + 2 \int_{1}^{T'} \, \Delta_\mathcal{U}(t) \, dt + \mathcal{Z}_B(p_0)(T-T') - \int_{T'}^{T} \,  \Delta_\mathcal{U}(t) \, dt \\
      &= C_\mathcal{Z} + T \mathcal{Z}_B(p_0) + 2 \theta_1 \kappa_H^2 \log^2(t) \Big|_{1}^{T'} - \theta_1 \kappa_H^2 \log^2(t) \Big|_{T'}^{T}
    \end{align}
    

    Although $\Delta_\mathcal{G}$ is fundamentally the value of interest, it potentially exhibits arbitrary growth until $T'$. Given the unconstrained nature of this growth, we are particularly concerned about Case 2. Therefore, in a hypothetical scenario where we possess knowledge of $\mathcal{G}_B(p^*|a)$ and $\mathcal{U}(p_0|a)$ (without direct access to $p^*$ and $p_0$), we could theoretically determine $t$, contingent upon the convergence rate of $\theta$ towards $\theta^*$, as outlined in Lemma \ref{thm:theta-2-norm}.
    
    \begin{align}
        C_\mathcal{Z} = \bar{\Delta}_\mathcal{G}(T'-1) - \mathcal{Z}_B(p_0)
    \end{align}

    From which, it holds that,
    \begin{align}
      R^T_B(\pi_A) &\leq C_\mathcal{Z} + T \mathcal{Z}_B(p_0) + \theta_1 \kappa_H^2 \Big( 2 \log^2(T') - \log^2(T) -\log^2(T') \Big) \\
      &= C_\mathcal{Z} + T \mathcal{Z}_B(p_0) + \theta_1 \kappa_H^2 \Big( \log^2(T') - \log^2(T) \Big) \\
      &= \aleph_B + \epsilon_B T  - \theta_1 \kappa_H^2 \log^2(T)
    \end{align} 

    Where $\aleph_B$ absorbs the finite $\theta_1 \kappa_H^2 \log^2(T')$ term, $\aleph_B = C_\mathcal{Z} + \theta_1 \kappa_H^2 \log^2(T')$, and $\epsilon_B = \mathcal{Z}_B(p_0)$ by definition.


    
    
    

\end{proof}

\subsection{Proof for Theorem \ref{thm:leader-regret}} \label{prf:leader-regret}
    \textbf{Stackelberg Regret:} Given the pure strategy best response of the follower defined $\mathfrak{B}(a) \in [\underline{b}_a, \bar{b}_a]$ from Eq. \eqref{eq:ba_lower_bound_true} and \eqref{eq:ba_upper_bound_true} respectively, the worst-case regret, $R_A^T(\pi_A)$, from the leader's perspective, as defined in Eq. \eqref{eq:leader-regret-def} when adopting the LNPG algorithm, is bounded by $\mathcal{O}(\sqrt{T \log(T)})$.

\begin{proof}
    From the perspective of the leader, we are given the bounds on the follower's best response to any action $a$, denoted as $\mathfrak{B}(a) \in [\underline{b}_a, \bar{b}_a]$ from Eq. \eqref{eq:ba_lower_bound_true} and \eqref{eq:ba_upper_bound_true} respectively. And we can express this as a function of $\mathcal{H}^*(\theta)$ (which is the riskless price). Let's take an upper bound on the actions $b$, 

    \begin{align}
        \bar{b}_a &= F^{-1}_{\bar{\theta}_a} \Big( 1 - \frac{ 2a}{ \mathcal{H}^*(\theta) + a} \Big)
    \end{align}
    
    In our algorithm the leader has access to the same parameter estimates of $\theta$ as the follower, and the follower is running OFUL, with logarithmic regret bounds. Given the same confidence ball $\mathcal{C}^t$ which leads to an optimistic parameter estimate $\bar{\theta}$, the leader can also optimistically estimate her rewards, $\bar{\mathcal{G}}_A(\theta)$, by anticipating the optimistic best response of the follower.
    
    \begin{align}
        \bar{\mathcal{G}}_A(\theta) &= \underset{a \in \mathcal{A}} {\mathrm{max}} \ a \bar{b}_a = \underset{a \in \mathcal{A}} {\mathrm{max}} \ a F^{-1}_{\bar{\theta}} \Big( 1 - \frac{ 2a}{ \mathcal{H}^*(\theta) + a} \Big)\label{eq:ga_profit_upper}
    \end{align}
    
    Essentially the leader plays the arm that maximizes $\bar{\mathcal{G}}_A(\theta)$, which is the optimization problem for the leader. We know that the solution to $\mathcal{H}^*(\theta)$ is expressed in Eq. \eqref{eq:hstar-def}. We express leader regret by assuming $\underline{b}$ to be the lower bound of the optimistic best response of the follower.
    

    
    \begin{align}
        R_A^T(\pi_A) &= \sum^T_{t = 1} \mathcal{G}_A(a^*, \mathfrak{B}(a) ) - \sum^T_{t = 1} \mathcal{G}_A(a^t, b^t) \\
        &= \sum^T_{t = 1} a^* \mathfrak{B}(a^*)  - \sum^T_{t = 1}  a^t \mathfrak{B}(a^t) \\
        &= \sum^T_{t = 1} \ \underset{a \in \mathcal{A}} { \mathrm{max}} \ a F^{-1}_{\theta^*} \Big( 1 - \frac{2a}{\mathcal{H}^*(\theta^*) + a } \Big)  - \sum^T_{t = 1} \bar{a} F^{-1}_{\bar{\theta}} \Big( 1 - \frac{ 2 \bar{a} }{\mathcal{H}^*(\theta^*) + \bar{a}} \Big) \\ 
        &\leq \sum^T_{t = 1} \ \underset{a \in \mathcal{A}} { \mathrm{max}} \ a F^{-1}_{\theta^*} \Big( 1 - \frac{2a}{\mathcal{H}^*(\theta^*) + a } \Big) - \sum^T_{t = 1} \bar{a} F^{-1}_{\theta^*} \Big( 1 - \frac{ 2 \bar{a} }{\mathcal{H}^*(\theta^*) + \bar{a}} \Big) \label{leader-regret-bound-expression} \\ 
        a^* &= \underset{a \in \mathcal{A}} { \mathrm{argmax}} \ a \, \mathfrak{B}(a), \quad \bar{a} = \underset{a \in \mathcal{A}} { \mathrm{argmax}} \ a F^{-1}_{\bar{\theta}} \Big( 1 - \frac{2a}{\mathcal{H}^*(\theta^*) + \Delta_{\mathcal{H}} + a} \Big) \label{leader-action-upperbound}
    \end{align}
    
    There exists a confidence ball for both agents $\mathcal{C}^t$, one of which could be the subset of another (i.e. one agent has more information than the other), and a unique maximizing point for $\mathcal{G}_A(a^t)$, for $b(\theta^t)$, which depends on the estimate of $\theta \in \mathcal{C}^t$. From Theorem \ref{thm:optimal-order-b}, we know that $b_a$ is monotonically decreasing w.r.t. $p$, and from Lemma \ref{thm:cum-del-HT}, we know that the optimistic estimate of $\bar{p}_0$ is decreasing. There exists a unique maximizing solution when $\theta^*$ is known. However when we solve for the approximation, there is a suboptimality introduced by $\bar{a}$.

    We would like to quantify and bound the behaviour of $\bar{a}(t)$ from Eq. \eqref{leader-regret-bound-expression}, and thus we can then quantify the behaviour of leader regret, in Eq. \eqref{leader-regret-bound-expression}. First we note an assumption from Section \ref{sec:assumptions}, where the demand distribution is symmetric $\mathcal{N}(\mu^t, \sigma)$, with constant variance $\sigma$, where $\mu = \Gamma_\theta(p)$, a property of additive demand. We characterize the change in $\gamma$ with respect to order amount (or follower response) $b$. We know the relationship, when considering optimistic demand parameters $\bar{\theta}$, that $b = F^{-1}_{\bar{\theta}}(\gamma)$. Suppose we hold $a$ constant, the quantifiable difference of $\gamma(a^t)$ to the optimal $\gamma(a^*)$ is defined in Eq. \eqref{eq:gamma-diff}. 
    \begin{align}
        \gamma &= 1 - \frac{2a}{\mathcal{H}^*(\theta^*) + \Delta_{\mathcal{H}} + a} \\
        \Delta_\gamma(t) &= \gamma(a^t) - \gamma(a^*) \label{eq:gamma-diff} \\
        &= 1 - \frac{2a}{\mathcal{H}^*(\theta^*) + \Delta_{\mathcal{H}} + a } - 1 + \frac{2a^*}{\mathcal{H}^*(\theta^*) + a^*}  \\
        &= \frac{2a^*}{\mathcal{H}^*(\theta^*) + a^*} - \frac{2a}{\mathcal{H}^*(\theta^*) + \Delta_{\mathcal{H}} + a}
    \end{align}
    

    With knowledge that $a^t$ with respect to $a^*$, there could be multiple scenarios, either  $a^t > a^*$, or $a^t \leq a^*$. We can argue is that we start $a^t \geq a^*$ with probability $1 - \delta$ as we are optimistic, and seek to set prices always higher than the optimal $a^*$, therefore the series $a^t$ must be decreasing, and we known by design $a^t \geq 0$, therefore $a^t - a^* \geq 0$.
    
    \begin{align}
        \Delta_\gamma(t) &\leq 2a^t \Big( \frac{1}{\mathcal{H}^*(\theta^*) + a^t} - \frac{1}{\mathcal{H}^*(\theta^*) + \Delta_{\mathcal{H}} + a^t} \Big)  \\
        &= \frac{2a^t}{\mathcal{H}^*(\theta^*) + a^t} \Big( 1 - \frac{\mathcal{H}^*(\theta^*) + a^t}{\mathcal{H}^*(\theta^*) + \Delta_{\mathcal{H}} + a^t} \Big) \\
        &= \frac{2a^t}{\mathcal{H}^*(\theta^*) + a^t} \Big( \frac{\Delta_{\mathcal{H}}}{\mathcal{H}^*(\theta^*) + \Delta_{\mathcal{H}} + a^t} \Big), \quad a^* \leq a^t \label{eq:delta_h_bound_no_T} 
    \end{align}

    
    We know from Lemma \ref{thm:demand-optimistic-reduction} that we have a bound, $\Delta_{\mathcal{H}} \leq 2\kappa_H \sqrt{\log(t)/t}$. So if we substitute this bound into Eq. \eqref{eq:delta_h_bound_no_T} which maintains the inequality, we have,

    \begin{align}
        \Delta_\gamma(t) &\leq \frac{2a}{\mathcal{H}^*(\theta^*) + a} \Big( \frac{2\kappa_H \sqrt{\log(t)/t} }{\mathcal{H}^*(\theta^*) + a + 2\kappa_H \sqrt{\log(t)/t}} \Big) \\
        &= \frac{2a}{\mathcal{H}^*(\theta^*) + a} \Big( \frac{2\kappa_H \log(t)}{2\kappa_H \log(t) + (\mathcal{H}^*(\theta^*) + a) \sqrt{t\log(t)}} \Big) \\
        &< \frac{2a}{\mathcal{H}^*(\theta^*) + a} \Big( \frac{2\kappa_H \log(t)}{(\mathcal{H}^*(\theta^*) + a) \sqrt{t\log(t)}} \Big) \\
        &= \frac{4 a \kappa_H }{(\mathcal{H}^*(\theta^*) + a)^2} \Big( \sqrt{\frac{\log(t)}{t}} \Big) \\
        &\leq \frac{4 \kappa_H }{\mathcal{H}^*(\theta^*)^2} \Big( \sqrt{\frac{\log(t)}{t}} \Big) a
        \label{eq:delta_gamma}
    \end{align}
    
    Suppose $F_{\bar{\theta}}(b)$ is continuous and differentiable in the domain of $b \in [0, \bar{b}]$, and there exists a finite minimum $\mathscr{M} > 0$ lower bounds the derivative of $F_{\bar{\theta}}(b)$ with respect to $b$, 

    \begin{align}
        \mathscr{M} \leq \frac{\partial \gamma}{\partial b} =  \frac{\partial F_{\bar{\theta}}(b) }{\partial b} \label{eq:Mb_bound} 
    \end{align}
    
    We can use the existence of $\mathscr{M}$ to bound the change in $\gamma$ with respect to the change in $b$, denoted as $\Delta_\gamma(t)$ and $\Delta_b(t)$ respectively. 
    
    \begin{align}
        \Delta_b(t) \leq \frac{\Delta_\gamma(t)}{\mathscr{M}}
    \end{align}


    We construct a lower bound on $\mathcal{G}_A(\bar{a}^t, \mathfrak{B}(\bar{a}^t))$, where $\bar{\mathfrak{B}}(a)$ denotes the optimistic best response as a function of leader action $a$. For notation purposes, unless otherwise specified, $\mathcal{G}_A(a)$ denotes that the leader is acting with $a$, and the follower is responding with $\mathfrak{B}(a)$, thus we will use the shorthand definition $\mathcal{G}_A(a)$,

    \begin{align}
          \mathcal{G}_A(a) \equiv \mathcal{G}_A(a, \mathfrak{B}(a)) = a \, \mathfrak{B}(a), \forall a \in \mathcal{A}
    \end{align}
    
    
    Subsequently, we can bound the instantaneous expected leader reward by examining the behaviour of $\Delta_a$ and $\Delta_b$, where if $\Delta_a > 0 \xrightarrow[]{} \Delta_b < 0$.  Any suboptimal action $a = a^* + \Delta_a$ will result in a potentially lower expected reward for the leader, thus $\mathcal{G}_A(a) \leq \mathcal{G}_A(a^*)$. We can quantify this reduction in reward as,
    

    \begin{align}
        \mathcal{G}_A(a^* + \Delta_a)  = (a^* + \Delta_a)(\mathfrak{B}(a) + \Delta_b(a, t)) \leq  \mathcal{G}_A(a^*) %
    \end{align}
    
    Let us characterize $\mathcal{G}_A(\bar{a}^t)$, denoted in Eq. \eqref{eq:GAB_opt}, as the leader's reward, under a best responding follower under perfect information, given an estimate of $a$ computed using an optimistic estimate of $\theta$ with respect to $t$, denoted as $\bar{a}^t$. We define a lower bound to $\mathcal{G}_A(\bar{a}^t)$, in Eq. \eqref{eq:GAB_lower_bound}, to establish that $\mathcal{G}_A(\bar{a}^t) \geq \underline{\mathcal{G}}_A(\bar{a}^t)$.
    
    \begin{align}
          \mathcal{G}_A(\bar{a}^t) &= \bar{a}^t \, \mathfrak{B}(\bar{a}^t) = (a^* + \Delta_a(t)) \mathfrak{B}(a^* + \Delta_a(t)) \label{eq:GAB_opt} \\
          &\geq a^* \mathfrak{B}(a^* + \Delta_a(t)) \\
          &= a^* \mathfrak{B}(\underset{a \in \mathcal{A} } { \mathrm{argmax}} \  a \bar{\mathfrak{B}}(a) ) = a^* \mathfrak{B} \Big( \underset{a \in \mathcal{A} } { \mathrm{argmax}} \  a  F^{-1}_{\bar{\theta}} (\bar{\gamma}(a)) \Big) \\
          &\geq a^* \mathfrak{B} \Big( \underset{a \in \mathcal{A} } { \mathrm{argmax}} \  a  F^{-1}_{\theta} (\bar{\gamma}(a)) \Big)  \equiv \underline{\mathcal{G}}_A(\bar{a}^t) \label{eq:GAB_lower_bound}
    \end{align}
    
    We denote upper bound $\bar{a}$ as the optimistic estimate of $a^*$, and quantify the relation from Eq. \eqref{leader-action-upperbound}, in a closed form expression. The value in Eq. \eqref{eq:GAB_lower_bound} forms a lower bound for the best response under optimism $\mathcal{G}_A(\bar{a}^t)$, and therefore $\underline{\mathcal{G}}(\bar{a}^t)$ can be used to form an upper bound on the leader regret. To be more specific, $\underline{\mathcal{G}}(\bar{a}^t)$ is the hypothetical value, which lower bounds the leader's reward obtained under optimistic misspecification, $\mathcal{G}_A(\bar{a}^t)$.
    

    

    We give the expressions for $a^*$ , the optimal leader action, $\bar{a}$ the optimistic leader action, and $\hat{\bar{a}}$ an estimate of the optimistic leader action, in Eq. \eqref{eq:opt_a} and Eq. \eqref{eq:a_bar_convex_maximizer} respectively.
    
    \begin{align}
        a^* &= \underset{a \in \mathcal{A}} { \mathrm{argmax}} \ a \, \mathfrak{B}(a) \label{eq:opt_a} \\
        \bar{a} &= \underset{a \in \mathcal{A} } { \mathrm{argmax}} \  a \, F^{-1}_{\bar{\theta}} \Big( 1 - \frac{2a}{\mathcal{H}^*(\theta^*) + \Delta_\mathcal{H} + a} \Big) \label{eq:a_bar_convex_maximizer} 
    \end{align}


    Eq. \eqref{eq:a_bar_convex_maximizer} is strictly concave and thus a unique solution to $\bar{a}$ exists. The major obstacle currently is that there is no closed form solution to $F^{-1}_{\bar{\theta}}$, and therefore we need to find an appropriate approximation function such that we can perform the maximization over the concave space with controllable approximation error. Let $\mathcal{P}_N(\cdot)$ denote an approximation of $F^{-1}_{\bar{\theta}}(\cdot)$, where $\mathcal{P}_N(\cdot)$ can be expressed by a closed form solution. We estimate the $\bar{a}$ by setting,

    \begin{align}
        F^{-1}_{\bar{\theta}}\Big( 1 - \frac{2a}{\mathcal{H}^*(\theta^*) + \Delta_\mathcal{H}}\Big) \approx \mathcal{P}_N \Big( 1 - \frac{2a}{\mathcal{H}^*(\theta^*) + \Delta_\mathcal{H}} \Big) \label{eq:afn_approx}
    \end{align}
    
    Where Eq. \eqref{eq:afn_approx} represents an approximation of the inverse CDF $F^{-1}_{\bar{\theta}}(\cdot)$. This approximation will give us a maximum approximation error denoted as,

    \begin{align}
        \widetilde{\Delta}_a = \underset{a \in \mathcal{A} } { \mathrm{argmax}} \ \Big[  a \, \mathcal{P}_N \Big( 1 - \frac{2a}{\mathcal{H}^*(\theta^*) + \Delta_\mathcal{H} + a} \Big) \Big] - \bar{a}
    \end{align}

    Where $\widetilde{\Delta}_a$ is the error in approximation of $\bar{a}$ due to the approximation of the inverse CDF $F_\theta^{-1}(\cdot)$ with $\mathcal{P}_N(\cdot)$.

    \textbf{Bounded Action Space:} We stipulate explicitly that, as a consequence of a bounded action space for the follower $[\underline{b}, \bar{b}]$, the interval of the the critical fractile is also bounded in $[\underline{\gamma}, \bar{\gamma}]$. This implies that the image of $F^{-1}_\theta(\underline{\gamma})$ is lower bounded by 0.
    
    \begin{align}
        F^{-1}_\theta(\underline{\gamma}) = 0, \quad \gamma \in [\underline{\gamma}, \bar{\gamma}] \label{eq:F_underline_gamma0}
    \end{align}

    Since in the economic sense we cannot order negative items, we implement a lower bound on the action space of the follower, and conduct our analysis in the range $[\underline{\gamma}, \bar{\gamma}]$, implying a finite upper bound on $a$, such that $a \in [\underline{a}, \bar{a}]$ for the leader. Within this range lies the optimal solution for $\underline{G}_A(\cdot)$ . 
    

    \textbf{Weierstrass Approximation Theorem:}  For any continuous function \( f(\gamma): [\underline{\gamma}, \bar{\gamma}] \rightarrow \mathbb{R} \), for any $\epsilon(N) \in \mathbb{R}^+$ arbitrarily small, there exists a sequence of polynomials \( \{ P_N \} \) of the $N^{th}$ order such that the supremum norm, denoted as \( \lVert \cdot \rVert \), is bounded by $\epsilon(N)$, as stated in Eq. \eqref{eq:weierstrass_thm}. \cite{weierstrass:1885}

    \begin{align}
        \lVert f(\gamma) - P_N \rVert \leq \epsilon(N) \label{eq:weierstrass_thm}
    \end{align}
    
    As $F^{-1}_{\bar{\theta}}(\cdot)$ is a bounded and smooth function, we can leverage the \textit{Weierstrass approximation theorem} to always obtain an $N^{th}$ order polynomial approximation such that if $N$ is large enough, $\lVert f(\gamma) - P_N \rVert < \epsilon(N)$ for any $\epsilon$ arbitrarily small. Applying a polynomial approximation and the Weirstrass approximation theorem brings us two main crucial benefits. First we can take advantage of having an arbitrarily small error $\epsilon(N)$ that is not respective of the sample size $t$, rather a separate controllable computational term $N$. Secondly, by approximating  $F^{-1}_{\bar{\theta}}(\cdot)$ with $P_N$ we consequently have a closed form expression for an approximation of $F^{-1}_{\bar{\theta}}(\cdot)$ with controllable error. This allows us to perform optimization over a polynomial rather than a some arbirtary concave function derived from a cumulative distribution function.

    \begin{align}
        R_A(T) &\leq \sum^T_{t = 1} a^*\mathfrak{B}(a^*) - \mathcal{G}_A(\bar{a}^t, \mathfrak{B}(\bar{a}^t) ) \label{eq:leader-regret-defn} \\
         &\leq \sum^T_{t = 1} a^*\mathfrak{B}(a^*) - \underline{\mathcal{G}}_A(\bar{a}^t, \mathfrak{B}(\bar{a}^t))
    \end{align}



    We characterize $\underline{\mathcal{G}}_A(\bar{a}^t, \mathfrak{B}(\bar{a}^t))$ as the hypothetical leader profit, derived within a best response framework, where the best response is an additive composite of the term $\Delta_b(\theta)$ under perfect information (where $\Delta_b(\theta) = 0$), and the term $\Delta_b(\gamma)$ under imperfect information (where $\Delta_b(\gamma) \geq 0$). In the context of Fig. \ref{fig:cum-dist-shift}, an optimistic assessment under uncertainty influences two components with respect to the inverse CDF. First, it affects the overall optimism in demand, inducing a systematic shift along the vertical axis in the negative direction. We denote this as the change as $\Delta_b(\theta)$. Second, it results in a descent along the CDF curve itself, signified by the diminishing value of $\gamma(a)$. We denote this as the change as $\Delta_b(\gamma)$. Both aspects additively contribute to the total change in $\mathfrak{B}(a)$. This formulation enables the construction of a lower bound for the best response mechanism, which is instrumental in subsequent analysis.
    
    \textbf{Solution via $N^{th}$ Order Polynomial Approximation:} In theory an infinite Taylor series expansion $P_N(\gamma)$ or $N$ polynomial terms, centred around $\underline{\gamma}$ can represent $F^{-1}_\theta(\gamma)$. 
    
    \begin{align}
        F^{-1}_\theta(\gamma) &= F^{-1}_\theta(\underline{\gamma}) + \frac{1}{1!}  \frac{d F^{-1}_\theta(\underline{\gamma})}{d\gamma} (\gamma - \underline{\gamma}) + \frac{1}{2!}  \frac{d^2 F^{-1}_\theta(\underline{\gamma})}{d\gamma^2} (\gamma - \underline{\gamma})^2 + \frac{1}{3!}  \frac{d^3 F^{-1}_\theta(\underline{\gamma})}{d\gamma^3} (\gamma - \underline{\gamma})^3 + \cdots \\
        &= \lim_{N \to \infty} F^{-1}_\theta(\underline{\gamma}) + \sum_{n=1}^{N} \frac{1}{n!}  \frac{d^n F^{-1}_\theta(\underline{\gamma})}{d\gamma^n} (\gamma - \underline{\gamma})^n
    \end{align}

    As there is no closed form solution for $F_\theta^{-1}(\cdot)$, we can apply a finite Taylor series expansion $\mathcal{P}_N(\gamma)$ to approximate $F_\theta^{-1}(\cdot)$. Let $\mathcal{P}_N(\gamma)$ denote the $N^{th}$ order Taylor expansion centred around $\underline{\gamma}$. Consequently, $F^{-1}_\theta(\gamma)$ is equivalent to the sum of $\mathcal{P}_N(\gamma)$ and the remainder term $R_N(\gamma)$, which decreases with increasing order of the polynomial series, $N$.

    \begin{align}
        \mathcal{P}_N(\gamma) &= F^{-1}_\theta(\underline{\gamma}) + \sum_{n=1}^{N} \frac{1}{n!}  \frac{d^n F^{-1}_\theta(\underline{\gamma})}{d\gamma^n} (\gamma - \underline{\gamma})^n \\
        F^{-1}_\theta(\gamma) &= \mathcal{P}_N(\gamma) + R_N(\gamma)
    \end{align}
    
    Let $\mathcal{V}(a, t)$ represent a polynomial approximation of $\underline{\mathcal{G}}(\bar{a}^t, \mathfrak{B}(\bar{a}^t))$, given $\mathcal{P}_N(\gamma)$. 

    \begin{align}
        \mathcal{V}(a, t) &= a \mathcal{P}_N \Big( \gamma(a) - \underline{\gamma} \Big) \\
        &= a \sum_{i = 0}^{N}  \beta_i \Big(\gamma(a) - \underline{\gamma} \Big)^n \\
        &= a \sum_{i = 0}^N  \beta_i \Big(1 - \frac{2a}{\mathcal{H}^*(\theta^*) + \Delta_\mathcal{H} + a} - \underline{\gamma} \Big)^n \label{eq:n-poly-approx} 
    \end{align}

    The structural form of Eq. \eqref{eq:n-poly-approx} allows us to exploit the Weierstrass approximation theorem to place an arbitrarily small error bound on the approximation of $\underline{\mathcal{G}}(\bar{a}^t, \mathfrak{B}(\bar{a}^t))$ via $\mathcal{V}(a, t)$, which can be expressed as the sum of finite polynomials, with some controllable additive error decreasing with $N$, which we denote as $\epsilon(N)$.
    
    \begin{align}
        \underline{\mathcal{G}}(\bar{a}^t, \mathfrak{B}(\bar{a}^t)) &= \mathcal{V}(a, t) + \epsilon(N)
    \end{align}

    We also proved that as the approximation error of $F^{-1}_\theta(\gamma)$, via $\mathcal{P}_N(\gamma)$ with the approaches $0$, then the distance between the estimated optima value and the true optimal value, denoted as $\widetilde{\Delta}_a$ also approaches 0. Thus by Lemma \ref{thm:shrink_x_eps}, $\epsilon(N) \xrightarrow[]{} 0 \implies \widetilde{\Delta}_a \xrightarrow[]{} 0$. Where $\widetilde{\Delta}_a$ represents the maximum discrepancy between the value of $a$ that maximizes $\mathcal{V}(a, t)$ and the value of $a$ that maximizes $\underline{\mathcal{G}}(\bar{a}^t, \mathfrak{B}(\bar{a}^t))$.
    
    \begin{align}
        \bar{a} &= \underset{a \in \mathcal{A} } { \mathrm{argmax}} \  \mathcal{V}(a, t) + \widetilde{\Delta}_a \label{eq:bound-bar-a-orig} 
    \end{align}

    But the Weierstrass approximation theorem alone is insufficient to complete our derivation to solve for $\bar{a}$, as we have to take the derivative of $\mathcal{P}_N(\gamma)$, and set $\frac{\partial \mathcal{P}_N(\gamma)}{\partial \gamma} = 0$. This has no simple closed form solution that generalizes to $N$ for $\mathcal{P}_N(\gamma)$. The challenge arises when differentiating $\mathcal{P}_N(\gamma)$ with respect to $\gamma$ and equating $\frac{\partial \mathcal{P}_N(\gamma)}{\partial \gamma}$ to $0$ to solve for $\bar{a}$. Identifying a concise closed-form solution for $\bar{a}$ that generalizes to $N$ for $\mathcal{P}_N(\gamma)$ remains a formidable challenge, which motivates the use of the \textit{$\gamma(a)$ trick}.
    

    \textbf{The $\gamma(a)$ trick:} The selection of $\underline{\gamma}$ as the center for the Taylor series expansion was deliberate means of simplification, stemming from the the fact that we impose of $F^{-1}_\theta(\underline{\gamma}) = 0$ in Eq. \eqref{eq:F_underline_gamma0}. To address the complexity of root-finding problem, a strategic approach is employed. We introduce $\mathcal{P}^{(k)}_N(\gamma)$ as the Taylor expansion of $F^{-1}\theta(\gamma)$ centered at $\underline{\gamma}$, while omitting the initial $k$ terms. To be clear,

    \begin{align}
        \mathcal{P}^{(k)}_N(\gamma) &= \sum_{i = k+1}^{N}  \beta_i \Big(\gamma(a) - \underline{\gamma} \Big)^n \\
        &=  \sum_{i = k+1}^{N}  \beta_i \Big(1 - \frac{2a}{\mathcal{H}^*(\theta^*) + \Delta_\mathcal{H} + a} - \underline{\gamma} \Big)^n  
    \end{align}

    Such that, at $k=2$. Also note that $\beta_0 = 0$, from the construct of our Taylor expansion as it was centeres at $\underline{\gamma}$.
    
    \begin{align}
        \mathcal{P}_N(\gamma) &= \beta_0 + \beta_1 \Big(\gamma(a) - \underline{\gamma} \Big) + \mathcal{P}^{(k=2)}_N(\gamma) 
    \end{align}

    We define $\Lambda(\gamma)$ as,
    
    \begin{align}
        \Lambda(\gamma) &\equiv \beta_0 + \beta_1 \Big(\gamma(a) - \underline{\gamma} \Big) \\
        &= \beta_0 + \beta_1 \Big(1 - \frac{2a}{\mathcal{H}^*(\theta^*) + \Delta_\mathcal{H} + a} - \underline{\gamma} \Big)
    \end{align}

    Subsequently, a linear affine transform on our function of interest $F^{-1}_\theta(\gamma)$ can be performed, where we denote this affine linear transform as $\phi(F^{-1}_\theta(\gamma))$. 
    
    \begin{align}
        \phi(F^{-1}_\theta(\gamma)) \equiv F^{-1}_\theta(\gamma) - \Lambda(\gamma)
    \end{align}
    

    We leverage the Weierstrass approximation theorem, as expressed in Eq. \eqref{eq:weierstrass_thm}, for the relation of $\phi(F^{-1}_\theta(\gamma))$ and $\mathcal{P}^{(k)}_N(\gamma)$, resulting in,
    This allows us to express the 
    
    $\Lambda(\gamma)$ is defined as a linear function, allowing us to subsequently substitute it into the Weierstrass approximation theorem, as expressed in Eq. \eqref{eq:weierstrass_thm}, resulting in,
    
    \begin{align}
        \lVert F^{-1}_\theta(\gamma) - \mathcal{P}_N(\gamma) \rVert = \lVert \phi(F^{-1}_\theta(\gamma)) - \mathcal{P}^{(k)}_N(\gamma) \rVert \leq \epsilon(N) \label{eq:weierstrass_thm_adj}
    \end{align}

    Now we take a look at the behaviour of $\mathcal{P}^{(k)}_N(\gamma(a))$ specifically. We wish to find the value of $a$ that would maximize $\mathcal{P}^{(k)}_N(\cdot)$, and by taking partial derivatives and setting the result to 0. Clearly, we can apply the product rule, followed by the chain rule for this.
    
    \begin{align}
        \frac{\partial}{\partial a} \mathcal{V}(a, t) &= \sum_{i=2}^N \beta_i \left(\gamma(a) - \underline{\gamma}\right)^n + \frac{\partial}{\partial a} (\gamma(a) - \underline{\gamma}) \sum_{i = 2}^N  \beta_i n  \Big( 1 - \frac{2a}{\mathcal{H}^*(\theta^*) + \Delta_\mathcal{H} + a} - \underline{\gamma} \Big)^{n-1} \\
        &= \sum_{i = 2}^N  \beta_i \Big(1 - \frac{2a}{\mathcal{H}^*(\theta^*) + \Delta_\mathcal{H} + a} - \underline{\gamma} \Big)^n \nonumber  \\
        &\quad \quad - \Big( \frac{2a(\mathcal{H}^*(\theta^*) + \Delta_\mathcal{H})}{(\mathcal{H}^*(\theta^*) + \Delta_\mathcal{H} +a)^2} \Big) \sum_{i = 2}^N  \beta_i n  \Big( 1 - \frac{2a}{\mathcal{H}^*(\theta^*) + \Delta_\mathcal{H} + a} - \underline{\gamma} \Big)^{n-1} = 0 \label{eq:a_deriv_k2_equality}
    \end{align}


    Here is where the $\gamma(a)$ trick becomes pivotal. We can now solve Eq. \eqref{eq:n-poly-approx} by solving, 

    \begin{align}
        \gamma(a) - \underline{\gamma} = 1 - \frac{2a}{\mathcal{H}^*(\theta^*) + \Delta_\mathcal{H} + a} - \underline{\gamma} &= 0
    \end{align}

    Which gives us the solution, 

    \begin{align}
        a &= \frac{\mathcal{H}^*(\theta^*) + \Delta_\mathcal{H} - \underline{\gamma}(\mathcal{H}^*(\theta^*) + \Delta_\mathcal{H})}{\underline{\gamma} + 1} \\
        &= \tilde{\gamma} \, (\mathcal{H}^*(\theta^*) + \Delta_\mathcal{H}), \quad \tilde{\gamma} = \frac{1 -\underline{\gamma}}{1 + \underline{\gamma}}
        \label{eq:a_opt_2_poly}
    \end{align}

    And by setting $a$ as equal to the value in Eq. \eqref{eq:a_opt_2_poly}, we effectively solve the partial derivative equation from Eq. \eqref{eq:a_deriv_k2_equality}. The result from Eq. \eqref{eq:a_opt_2_poly} allows us to simply set $a = \tilde{\gamma}(\mathcal{H}^*(\theta^*) + \Delta_\mathcal{H})$ to solve for $a$ in Eq. \eqref{eq:bound-bar-a-orig}. Essentially, by maximizing our approximation $\mathcal{V}(a, t)$, w.r.t. $a$, we force the condition such that $\Lambda(\gamma(a)) = 0$. Therefore, the bound $\lVert F^{-1}_\theta(\gamma) - \mathcal{P}_N(\gamma) \rVert$ expressed in Eq. \eqref{eq:weierstrass_thm_adj} holds.

    \begin{align}
        \lVert F^{-1}_\theta(\gamma) - \mathcal{P}_N(\gamma) \rVert = \lVert \phi(F^{-1}_\theta(\gamma)) - \mathcal{P}^{k=2}_N(\gamma) \rVert =\lVert F^{-1}_\theta(\gamma) - \mathcal{P}^{(k=2)}_N(\gamma) \rVert
    \end{align}

    The beauty is that we have an error term that can be arbitrarily small as a result $\epsilon(N)$. So then we can apply the lower bound from from Eq. \eqref{eq:GAB_lower_bound} and $\mathcal{V}(a, t)$ approximation of $\mathcal{G}_A(\bar{a}^t, \mathfrak{B}(\bar{a}^t))$, so obtain a bound on $\Delta_a(t)$. Given $\Delta_a(t) = \bar{a} - a^*$, we now bound $\mathcal{G}_A(\bar{a})$ given the bound on $\Delta_a$ and $\Delta_\gamma$,

    

    \begin{align}
        \mathcal{G}_A(\bar{a}^t, \mathfrak{B}(\bar{a}^t)) &= \Big(a^* + \Delta_a(t)\Big) F^{-1}_{\theta} \Big( \gamma^* - \Delta_\gamma(t) \Big) \\
        &\geq a^* F^{-1}_{\theta} \Big( \gamma^* - \Delta_\gamma(t) \Big) \\
        &= a^* \Big(\mathfrak{B}(a) - \Delta_b(t)\Big) \\
        &\geq a^* \Big( \mathfrak{B}(a) - \frac{1}{\mathscr{M}} \frac{4 \kappa_H }{\mathcal{H}^*(\theta^*)^2} (\sqrt{\log(t)/t}) \, \bar{a}  \Big) \label{eq:lower_bound_G_ab}
    \end{align}

     We aim to express the change in $b(a)$ with a constant scale with respect to a change in $\gamma(a)$. Thus we apply the relations from Eq. \eqref{eq:delta_gamma} and Eq. \eqref{eq:Mb_bound} in combination with Eq. \eqref{eq:lower_bound_G_ab} to obtain a bound on $\mathcal{G}_A(\bar{a}^t, \mathfrak{B}(\bar{a}^t))$ as expressed in Eq. \eqref{eq:bound-on-ga}. 

    \begin{align}
        \mathcal{G}_A(\bar{a}^t, \mathfrak{B}(\bar{a}^t))  &\geq a^* \Big( \mathfrak{B}(a) - \frac{1}{\mathscr{M}} \frac{4 \kappa_H }{\mathcal{H}^*(\theta^*)^2} (\sqrt{\log(t)/t}) \bar{a}  \Big) \\ 
        &\geq a^* \Big( \mathfrak{B}(a) - \frac{1}{\mathscr{M}} \frac{4 \kappa_H }{\mathcal{H}^*(\theta^*)^2} \sqrt{\log(t)/t} \Big( \tilde{\gamma} \, (\mathcal{H}^*(\theta^*) + \Delta_\mathcal{H}) - \widetilde{\Delta}_a \Big) \Big) \\
        &= a^* \Big( \mathfrak{B}(a) - \frac{1}{\mathscr{M}} \frac{4 \kappa_H }{\mathcal{H}^*(\theta^*)^2} \sqrt{\log(t)/t} \Big(  \tilde{\gamma} \, \Big( \mathcal{H}^*(\theta^*) + \kappa_H  \sqrt{\log(t)/t}  \Big) - \widetilde{\Delta}_a \Big) \Big) \label{eq:bound-on-ga-expr-c} \\
        &= a^*b^* + C_1\sqrt{\log(t)/t} + C_2 \log(t)/t + C_3 \widetilde{\Delta}_a \label{eq:bound-on-ga}
    \end{align}

    Effectively we solve the Eq. \eqref{eq:bound-on-ga-expr-c}, for three constants, defined as, given $\Delta_a(t) = \bar{a} - a^*$, we now bound $\mathcal{G}_A(\bar{a})$ given the bound on $\Delta_a$ and $\Delta_\gamma$, where, 
    
    \begin{align}
        C_1 &= - \frac{4 a^* \kappa_H ( \tilde{\gamma} \mathcal{H}^*(\theta^*) - \widetilde{\Delta}_a )}{\mathscr{M} \mathcal{H}^*(\theta^*)^2 } \\
        C_2 &= - \frac{4 a^* \kappa_H^2 \tilde{\gamma} }{\mathscr{M} \mathcal{H}^*(\theta^*)^2 } \\
        C_3 &= \frac{4 a^* \kappa_H \widetilde{\Delta}_a }{\mathscr{M} \mathcal{H}^*(\theta^*)^2 }
    \end{align}

    The novelty in our method is that we can force $\widetilde{\Delta}_a$ to be arbitrarily small as a consequence of Lemma \ref{thm:shrink_x_eps}, by increasing the order of the polynomial approximation such that, $\epsilon(N) < \min \{ |C_1|, |C_2| \}$. Given this we can then eliminate the term introduced by $\tilde{\omega}$ for big $\mathcal{O}$ analysis. Substituting Eq. \eqref{eq:bound-on-ga} into Eq. \eqref{eq:leader-regret-defn}, we obtain a bound on leader regret.

    \begin{align}
        R_A(T) &\leq \sum^T_{t = 1} a^*\mathfrak{B}(a^*) - a^* F^{-1}_{\theta} \Big( \gamma^* - \Delta_\gamma(t) \Big) \\
        &= \sum^T_{t = 1} C_1\sqrt{\log(t)/t} + C_2 \log(t)/t + C_3 \widetilde{\Delta}_a
    \end{align}

    By the same process as Lemma \ref{thm:cum-del-HT} we take the integral of the discrete upper bound and obtain a Big-O bound on leader regret. The constant $T$ term originates from estimation error of the inverse CDF via a linear function, we denote this error, $\tilde{\omega}$, to be sufficiently small given a suitable range $[\underline{b}, \bar{b}]$, and is external to the learning algorithm.
    
    \begin{align}
        R_A(T) \in \mathcal{O}(\sqrt{T \log(T)}) \label{eq:leader-regret}
    \end{align}

    \end{proof}
    
    
    

\subsection{Proof for Corollary \ref{cor:alt_br}} \label{prf:alt_br}

\textbf{Relaxation of the Follower's Objective:} Suppose we extend our analysis to another best-response proxy $\tilde{\mathfrak{B}}(a)$, which gives a closer solution to the \textit{price-setting Newsvendor} (PSN). Let $R_A^T(\pi_A)$ be the Stackelberg regret of the original Algorithm \ref{alg:se-newsv} (as also expressed in Eq. \eqref{eq:leader-regret}). As $a^t \to a^*$, $R_A^T(\pi_A)$ now becomes $\tilde{R}_A^T(\pi_A)$ with new target $\tilde{a}^*$ (where $\tilde{a}^*$ maximizes $a \tilde{\mathfrak{B}}(a)$),

\begin{align}
    R_A^T(\pi_A) = \sum^T_{t = 1} a^* \mathfrak{B}(a^*)  - \sum^T_{t = 1}  a^t \mathfrak{B}(a^t), \xrightarrow{\text{becomes}} \tilde{R}_A^T(\pi_A) = \sum^T_{t = 1} \tilde{a}^* \tilde{\mathfrak{B}}(\tilde{a}^*) - \sum^T_{t = 1}  a^t \tilde{\mathfrak{B}}(a^t)
\end{align}

 The leader is still learning the via the original best-response proxy $\mathfrak{B}(\cdot)$, converging to solution $a^*$, the regret for the new proxy $\tilde{R}_A^T(\pi_A)$ can be expressed as:

 \begin{align}
 \tilde{R}_A^T(\pi_A) = \sum^T_{t = 1} \Big( a^* \mathfrak{B}(a^*) - \Delta_G \Big) - \sum^T_{t = 1}  \Big( a^t \mathfrak{B}(a^t) - \Delta_G(t) \Big)
 \end{align}

 Where:

 \begin{align}
    \Delta_G = a^* \mathfrak{B}(a^*) - \tilde{a}^* \tilde{\mathfrak{B}}(\tilde{a}^*), \quad \Delta_G(t) = a^t \mathfrak{B}(a^t) - a^t \tilde{\mathfrak{B}}(a^t)
 \end{align}
 
 We can express the new regret with the original expression as:

\begin{align}
    \tilde{R}_A^T(\pi_A) &= \underbrace{\sum^T_{t = 1} a^* \mathfrak{B}(a^*)  - \sum^T_{t = 1}  a^t \mathfrak{B}(a^t)}_{R^T(\pi_A)} + \sum^T_{t = 1} \Delta_G(t) - \sum^T_{t = 1} \Delta_G  =  R_A^T(\pi_A) + \sum^T_{t = 1} \Big(\Delta_G(t) - \Delta_G \Big) \\
    &= R_A^T(\pi_A) + \sum^T_{t = 1} \Big( a^t \mathfrak{B}(a^t) - a^t \tilde{\mathfrak{B}}(a^t) - a^* \mathfrak{B}(a^*) + \tilde{a}^* \tilde{\mathfrak{B}}(\tilde{a}^*) \Big)\\
    &= R_A^T(\pi_A) + \underbrace{\sum^T_{t = 1} \Big( a^t \mathfrak{B}(a^t)  - a^* \mathfrak{B}(a^*) \Big)}_{c(T) \leq 0} + \underbrace{\sum^T_{t = 1} \Big( \tilde{a}^* \tilde{\mathfrak{B}}(\tilde{a}^*) - a^t \tilde{\mathfrak{B}}(a^t)}_{f(T) \geq 0} \Big) 
\end{align}

  \textbf{Notes on Asymptotic Performance:} In short, by rearranging the expression for $\tilde{R}_A^T(\pi_A)$, for sufficiently large sample size $T$, we can express: 

 \begin{align}
     \tilde{R}_A^T(\pi_A) \leq R_A^T(\pi_A) + c(T) + f(T) 
 \end{align}

 We note, that as as $T \to \infty$, $|c(T)| \leq C \sqrt{T \log(T)} $, for some constant $C$, by extension of Thm. 4. Furthermore, $\tilde{a}^* \tilde{\mathfrak{B}}(\tilde{a}^*) - a^t \tilde{\mathfrak{B}}(a^t) \to \hat{\epsilon}$, where $\hat{\epsilon} = \tilde{a}^* \tilde{\mathfrak{B}}(\tilde{a}^*) - a^* \tilde{\mathfrak{B}}(a^*)$, by divergence of the proxy objective. Additionally, $c(T) \leq 0$ is by definition since $a^*$ is the optimal solution to $a^* \mathfrak{B}(a^*)$ under the original risk-free pricing proxy, and $a^t \mathfrak{B}(a^t) \leq a^* \mathfrak{B}(a^*)$. Likewise $\hat{\epsilon} \geq 0$, as $\tilde{a}^*$ optimizes for $\tilde{a}^* \tilde{\mathfrak{B}}(\tilde{a}^*)$ and $\tilde{a}^* \tilde{\mathfrak{B}}(\tilde{a}^*) \geq a^t \tilde{\mathfrak{B}}(a^t)$. Therefore, for sufficiently large values of $T$:

 \begin{align}
     \tilde{R}_A^T(\pi_A) \in \mathcal{O}(\sqrt{T \log(T)} + \hat{\epsilon}T)
 \end{align}

 Because the PSN has no closed-form solution, the follower must always approximate their best response with some proxy $\tilde{\mathfrak{B}}(a)$, which will induce a suboptimality term $\hat{\epsilon}$ for the leader's Stackelberg regret. Depending on the fixed value of $\hat{\epsilon}$, w.r.t. the terms expressed in Eq. (C.159) of Appendix C.6 in the paper, $\hat{\epsilon}$ could be negligible. Evidently, the Stackelberg regret always relies on assumptions regarding the follower's approximation to the PSN problem, where $\hat{\epsilon}$ is computable given the $\tilde{\mathfrak{B}}(\cdot)$ and $\mathfrak{B}(\cdot)$. In other cases, $\hat{\epsilon}$ could be computed explicitly and added to the regret term, but it entirely depends on the approximation method. Fundamentally, we can see that this Stackelberg game always relies on assumptions regarding the follower's approximation to the PSN problem, and the current regret bounds can transfer to a new proxy, given the approximation error.

\subsection{Non-Linear Demand Function under Exponential Link Function} \label{sec:non_lin_demand_exp}

The current framework can accommodate certain non-linear demand functions by employing Generalized Linear Models (GLMs). We apply a link function $\Phi(\cdot)$ with a well-defined functional form in order to establish a linear relationship: $\Phi(\Gamma(p)) = \theta_0 - \theta_1 p$. Consider the link function $\Phi(x) = \log(x)$, which implies $\Gamma(p) = \exp(\theta_0 - \theta_1 p)$. This choice simplifies the analysis for Theorem \ref{thm:bound-del-ht}. Specifically, the maximization operator from Eq. \eqref{eq:h-max-theta} becomes $\mathcal{H}(\theta) = \hat{\theta}_0$, where $\hat{\theta}_1$ can be disregarded as it does not influence the optimum of the new expected revenue function $p \exp(\theta_0 - \theta_1 p)$. This can be verified through calculus or graphical methods by evaluating the solution of the optimum of $p \exp(\theta_0 - \theta_1 p)$.

Thus, our updated objective is to maximize $\theta_0$ alone, subject to the constraints given by:

\begin{align}
    \text{Maximize: } & \theta_0, \\
    \text{Subject to: } & \sqrt{(\hat{\theta}_0 - \theta_0^*)^2} \leq \frac{\kappa \sqrt{\log(t)}}{\sqrt{t}}.
\end{align}

This adjustment simplifies the analysis from Appendix Eq. \eqref{eq:hstar-def} to Eq. \eqref{eq:del_h_raw_logt_t}, focusing solely on the estimated variable $\hat{\theta}_0$. It is straightforward to demonstrate that the bound in Appendix Eq. \eqref{eq:del_h_raw_logt_t} remains valid under this new objective function. Therefore, the non-linear link function $\Phi(x) = \log(x)$ preserves the bounds on Stackelberg Regret established in Theorem 4. It follows that, the rest of the theoretical results regarding Stackelberg regret continue to hold. For any hypothetical non-linear demand function, the applicability of our theoretical results will depend on the specific functional form of the GLM link function. However, we posit that for any demand function of the logarithmic or exponential class, the theoretical contributions of this paper remain valid.

\section{Equilibrium Proofs}

\subsection{Proof of Theorem \ref{thm:se-convergence}}\label{prf:se-convergence}
    
    \textbf{Convergence to Equilibrium (Algorithm \ref{alg:se-newsv}):} The LNPG algorithm (Algorithm \ref{alg:se-newsv}) converges to an approximate Stackelberg equilibrium, as defined in Section \ref{sec:stack-eq-defn}.
\begin{proof}

    \textbf{Follower Best Response:} By Lemma \ref{thm:opt-p-star-decrease}, so long as  $\bar{p}_0$ converges to $p_0$, then so does $ \bar{p}^*(t) \xrightarrow[]{} p^*$. We also know from Theorem \ref{thm:bound-del-ht} that $\bar{p}_0$ will converge to $p_0$ in the asymptotic infinite time horizon. Where,
    
    \begin{align}
        p_0 &\geq p^* \xrightarrow[]{} F_{\theta}^{-1} \Big(1 - \frac{2a}{ \mathcal{H}^*(\theta^*) + a} \Big) \geq F_{\theta}^{-1} \Big(1 - \frac{a}{ p^* } \Big) 
    \end{align}
    
    We can solve $p^* - p_0$ from Eq. \eqref{eq:mills-opt-price}, and we can also compute a solution the the CDF, $F_{\theta}^{-1} \Big(1 - \frac{a}{ p^* } \Big)$. We define,
    \begin{align}
        \mathbbm{E}[\mathcal{G}_B(p|a)] &= \mathbbm{E}[\mathcal{M}_\theta(p|a)]  - \mathbbm{E}[a F_{\theta}^{-1} (1 - a/ p)] \\
        \text{where,} \quad \mathbbm{E}[\mathcal{M}_\theta(p|a)] &\equiv \mathbbm{E}[p \min (d_\theta(p), F_{\theta}^{-1} (1 - a/ p)]
    \end{align}

    $\mathbbm{E}[\mathcal{M}_\theta(p|a)]$ is a controllable variable which consequently maximizes $\mathbbm{E}[\mathcal{G}_B(p|a)]$. As $F_{\theta}^{-1}(\cdot)$ is a deterministic function, we note that $\mathbbm{E}[a F_{\theta}^{-1} (1 - a/ p)] = a F_{\theta}^{-1} (1 - a/ p)$ Therefore,

    \begin{align}
        \mathbbm{E}[\mathcal{G}_B(p^*|a)] &= \mathbbm{E}[\mathcal{M}_\theta(p^*|a)] - a F_{\theta}^{-1} (1 - a/ p^*) \\
        \mathbbm{E}[\mathcal{G}_B(p_0|a)] &=  \mathbbm{E}[\mathcal{M}_\theta(p_0|a)] - a F_{\theta}^{-1} (1 - a/p_0) \\
        &= \mathbbm{E} \Big[ \mathcal{M}_\theta \Big( \frac{\mathcal{H}^*(\theta^*) + a}{2}\Big) \Big] - a F_{\theta}^{-1} \Big(1 - \frac{2a}{\mathcal{H}^*(\theta^*) + a} \Big)
    \end{align}
    
    From Theorem \ref{thm:follower-regret}, the follower will always have some suboptimality upper bounded by $\epsilon_B > 0$, given an upper-bounding approximation function for $\mathcal{G}_B(\cdot)$, where $\epsilon_B$ denotes the gap to the optimal reward under the best response $p^*$ versus $p_0$ given $a$.

    \begin{align}
        \epsilon_B(a) = \mathbbm{E}[\mathcal{G}_B(p^*|a)] - \mathbbm{E}[\mathcal{G}_B(p_0|a)]
    \end{align}

    The optimistic estimate of $\bar{p}_0$ approaches $p_0$, with a worst case error of $\mathcal{O}(\sqrt{\log(t)/t})$ by Theorem \ref{thm:bound-del-ht}. Specifically we can express, $\mathbbm{E}[\mathcal{G}_B(\bar{p}_0|a)]$ as,

    \begin{align}
         \mathbbm{E}[\mathcal{G}_B(\bar{p}_0|a)] &= \mathbbm{E}\Big[ \mathcal{M}_\theta \Big( \frac{\mathcal{H}^*(\theta^*) + \Delta_{\mathcal{H}}(t) + a}{2}\Big) \Big] - a F_{\bar{\theta}}^{-1} \Big(1 - \frac{2a}{\mathcal{H}^*(\theta^*) + \Delta_{\mathcal{H}}(t) + a} \Big)
    \end{align}

    Let $\Delta_B(t)$ denote the suboptimality gap over time.

    \begin{align}
         \Delta_B(t|a) &= \mathbbm{E}[\mathcal{G}_B(p^*|a)] - \mathbbm{E}[\mathcal{G}_B(\bar{p}_0|a)] \\
         &= \mathbbm{E}[\mathcal{G}_B(p^*|a)] - \mathbbm{E}[\mathcal{G}_B(p_0 + \Delta_{\mathcal{H}}(t) |a)] \\
         &= \mathbbm{E}[\mathcal{G}_B(p^*|a)] - \mathbbm{E}[\mathcal{M}_\theta(\bar{p}_0)] + a F_{\theta}^{-1} (1 - a/\bar{p}_0) \\
        &= \mathbbm{E}[\mathcal{G}_B(p^*|a)] - \mathbbm{E}\Big[ \mathcal{M}_\theta \Big( \frac{\mathcal{H}^*(\theta^*) + \Delta_{\mathcal{H}}(t) + a}{2}\Big) \Big] + a F_{\bar{\theta}}^{-1} \Big(1 - \frac{2a}{\mathcal{H}^*(\theta^*) + \Delta_{\mathcal{H}}(t) + a} \Big)   
    \end{align}

    As $t \xrightarrow[]{} \infty$, $\bar{\theta} \xrightarrow[]{} \theta^*$, and $\Delta_{\mathcal{H}}(t) \xrightarrow[]{} 0$. Therefore,

    \begin{align}
        \underset{a \in \mathcal{A}}{\mathrm{max}} \ \Big(\lim_{t \to \infty} \ \Delta_B(t|a) \Big) = \epsilon_B
    \end{align}
    
    \textbf{Follower's Perspective:}  As the follower will always be subject to a suboptimality gap of at most $\epsilon_B$, the learned policy constitutes an $\epsilon$-approximate Stackelberg equilibrium, with $\epsilon_B$ denoting the maximum difference in the follower's expected reward due to the approximate best response (as defined in Eq. \eqref{eq:br_approx_def}).
    
    \textbf{Leader's Perspective:} In this problem setting, the follower does not attempt to alter the strategy of the leader via its own strategy, and only the optimal order amount computed by $b = F_{\theta}^{-1}(\gamma)$ affects the reward of the leader. Assuming the follower is abiding by an optimistic riskless pricing $\bar{p}_0$, we suppose that there exists some uncertainty with respect to the level of optimism for the follower response, in which $\mathfrak{B}(a) \in \mathfrak{B}_\varepsilon(a) \equiv [\underline{b_a}, \bar{b}_a]$. 

    
    
    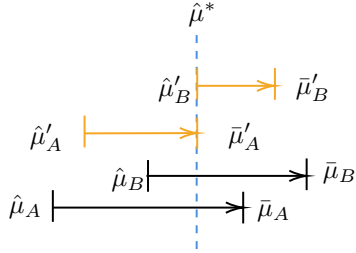
\begin{figure}[H] 
        \centering
        \tikzset{every picture/.style={line width=0.75pt}} 
            \begin{tikzpicture}[x=0.60pt,y=0.60pt,yscale=-1,xscale=1]
                \draw    (81.33,157.33) -- (199.33,157.33) ;
                \draw [shift={(201.33,157.33)}, rotate = 180] [color={rgb, 255:red, 0; green, 0; blue, 0 }  ][line width=0.75]    (10.93,-3.29) .. controls (6.95,-1.4) and (3.31,-0.3) .. (0,0) .. controls (3.31,0.3) and (6.95,1.4) .. (10.93,3.29)   ;
                \draw [dashed, color={rgb, 255:red, 74; green, 144; blue, 226 }  ,draw opacity=1 ]   (172,48.5) -- (172,187.5) ;
                \draw    (201.33,147.33) -- (201.33,167.33) ;
                \draw    (140.33,137.33) -- (239.33,137.33) ;
                \draw [shift={(241.33,137.33)}, rotate = 180] [color={rgb, 255:red, 0; green, 0; blue, 0 }  ][line width=0.75]    (10.93,-3.29) .. controls (6.95,-1.4) and (3.31,-0.3) .. (0,0) .. controls (3.31,0.3) and (6.95,1.4) .. (10.93,3.29)   ;
                \draw    (241.33,126.33) -- (241.33,146.33) ;
                \draw    (81.33,146.33) -- (81.33,166.33) ;
                \draw    (141.33,127.33) -- (141.33,147.33) ;
                \draw [color={rgb, 255:red, 245; green, 166; blue, 35 }  ,draw opacity=1 ]   (101,110.5) -- (170.33,110.34) ;
                \draw [shift={(172.33,110.33)}, rotate = 179.87] [color={rgb, 255:red, 245; green, 166; blue, 35 }  ,draw opacity=1 ][line width=0.75]    (10.93,-3.29) .. controls (6.95,-1.4) and (3.31,-0.3) .. (0,0) .. controls (3.31,0.3) and (6.95,1.4) .. (10.93,3.29)   ;
                \draw [color={rgb, 255:red, 245; green, 166; blue, 35 }  ,draw opacity=1 ]   (172.33,100.33) -- (172.33,120.33) ;
                \draw [color={rgb, 255:red, 245; green, 166; blue, 35 }  ,draw opacity=1 ]   (101.33,99.33) -- (101.33,119.33) ;
                \draw [color={rgb, 255:red, 245; green, 166; blue, 35 }  ,draw opacity=1 ]   (172,80.5) -- (219,80.5) ;
                \draw [shift={(221,80.5)}, rotate = 180] [color={rgb, 255:red, 245; green, 166; blue, 35 }  ,draw opacity=1 ][line width=0.75]    (10.93,-3.29) .. controls (6.95,-1.4) and (3.31,-0.3) .. (0,0) .. controls (3.31,0.3) and (6.95,1.4) .. (10.93,3.29)   ;
                \draw [color={rgb, 255:red, 245; green, 166; blue, 35 }  ,draw opacity=1 ]   (172.33,69.33) -- (172.33,89.33) ;
                \draw [color={rgb, 255:red, 245; green, 166; blue, 35 }  ,draw opacity=1 ]   (221.33,69.33) -- (221.33,89.33) ;
                
                \draw (208.25,151.58) node [anchor=north west][inner sep=0.75pt]   [align=left] {$\bar{\mu}_A$};
                \draw (51.8,147.51) node [anchor=north west][inner sep=0.75pt]   [align=left] {$\hat{\mu}_A$};
                \draw (116.44,130.43) node [anchor=north west][inner sep=0.75pt]   [align=left] {$\hat{\mu}_B$};
                \draw (249.69,127.16) node [anchor=north west][inner sep=0.75pt]   [align=left] {$\bar{\mu}_B$};
                \draw (165,25.96) node [anchor=north west][inner sep=0.75pt]   [align=left] {$\hat{\mu}^*$};
                \draw (65,102.51) node [anchor=north west][inner sep=0.75pt]   [align=left] {$\hat{\mu}_A'$};
                \draw (145.8,72.51) node [anchor=north west][inner sep=0.75pt]   [align=left] {$\hat{\mu}_B'$};
                \draw (190,102.16) node [anchor=north west][inner sep=0.75pt]   [align=left] {$\bar{\mu}_A'$};
                \draw (232.69,71.16) node [anchor=north west][inner sep=0.75pt]   [align=left] {$\bar{\mu}_B'$};
            \end{tikzpicture}
          \caption{Visualizing the optimistic $\bar{\mu} = \bar{\theta}_0$. $\mu^*$ represents the true $\theta_0^*$. Suppose A and B have different estimates of $\theta_0$, where $\mu^*$ lines between $[\hat{\mu}, \bar{\mu}]$ with probability $1 - \delta$. The orange lines for $[\hat{\mu}_A', \bar{\mu}_A']$ and $[\hat{\mu}_B', \bar{\mu}_B']$ denote the worst case scenario for $\Delta_{\mu} = \bar{\mu}_B' - \hat{\mu}_A'$ due to misspecification.} 
                    
          \label{fig:optimism-illus}
    \end{figure} 

    We define $\theta$ as a 2-element vector $[\theta_0, \theta_1]^T$, where $\hat{\theta} \equiv [\hat{\theta}_0, \hat{\theta}_1]^\intercal$, and $\theta^* \equiv [\theta_0^*, \theta_1^*]^\intercal$. We let $\hat{\theta}_A$ and $\hat{\theta}_B$ denote the estimates of $\theta^*$ for players A and B respectively. Leveraging Lemma \ref{thm:theta-2-norm}, we bound the confidence ranges for the estimates of $\theta^*$, as expressed in Eq. \eqref{eq:u_a_defn} to \eqref{eq:u_b_defn}. Thus we construct the following uncertainty ranges denoted as $\mathbf{u}_A$ and $\mathbf{u}_B$,

    \begin{align}
        \mathbf{u}_A &=  \left\{ \theta : \norm{\theta^* - \hat{\theta}}_2 \leq \kappa \sqrt{\log(t_A)/t_A} \right\} \label{eq:u_a_defn} \\
        \mathbf{u}_B &=  \left\{ \theta : \norm{\theta^* - \hat{\theta}}_2 \leq \kappa \sqrt{\log(t_B)/t_B} \right\} \label{eq:u_b_defn} 
    \end{align}

     We see that, due to potential differences in sample sizes $t_A$ and $t_B$, the confidence regions for $\mathbf{u}_A$ and $\mathbf{u}_B$, can exhibit differences in Lebesgue measures, highlighting the effect of varying sample sizes on parameter estimation in the respective domains (for example A may have a different estimate of $\theta^*$ than B, due to her being more confident of her market demand estimate as a consequence of being in the market for a longer time and obtaining more observations than B). Consequently, this enforces, $t_A > t_B$. For the purpose of this analysis, without loss of generality, we suppose A is always more confident than B (the opposite can also be argued without altering the mechanics of the proof).
    
    It follows that, with probability $1- \delta^2$, $\theta^*$ must lie within the union of the two uncertainty ranges, which overlap each other.

    \begin{align}
        \theta^* \in \mathbf{u}_A \cup \mathbf{u}_B \label{eq:u_union} \\
        \mathbf{u}_A \cap \mathbf{u}_B \neq \emptyset \label{eq:u_intersect}
    \end{align}
    
    Eq. \eqref{eq:u_union} and Eq. \eqref{eq:u_intersect} represent the union and intersection of the two confidence regions, and due to Eq. \eqref{eq:u_intersect}, the two confidence regions overlap. While the existence of $\theta^*$ outside of the union of the two confidence regions is less than probability $\delta$. The conditions imply the existence of a non-empty subset common to both sets, ensuring the non-disjointness of the confidence regions. 

    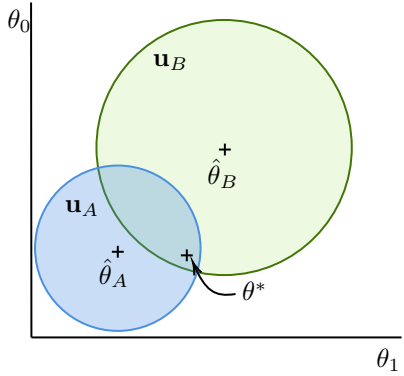
\begin{figure}[H] 
        \centering
        \tikzset{every picture/.style={line width=0.75pt}} 
        \begin{tikzpicture}[x=0.45pt,y=0.45pt,yscale=-1,xscale=1]
            
            \draw    (131.6,63.28) -- (131.88,345.05) ;
            \draw    (131.88,345.05) -- (443,345.05) ;
            \draw  [color={rgb, 255:red, 65; green, 117; blue, 5 }  ,draw opacity=1 ][fill={rgb, 255:red, 184; green, 233; blue, 134 }  ,fill opacity=0.25 ] (186.56,186.27) .. controls (186.03,127.01) and (233.63,78.53) .. (292.89,77.99) .. controls (352.15,77.46) and (400.63,125.06) .. (401.17,184.32) .. controls (401.7,243.58) and (354.1,292.06) .. (294.84,292.6) .. controls (235.58,293.13) and (187.1,245.53) .. (186.56,186.27) -- cycle ;
            \draw  [color={rgb, 255:red, 74; green, 144; blue, 226 }  ,draw opacity=1 ][fill={rgb, 255:red, 74; green, 144; blue, 226 }  ,fill opacity=0.3 ] (134.95,270.63) .. controls (134.61,232.22) and (165.46,200.8) .. (203.87,200.45) .. controls (242.28,200.11) and (273.7,230.96) .. (274.05,269.37) .. controls (274.39,307.78) and (243.54,339.2) .. (205.13,339.55) .. controls (166.72,339.89) and (135.3,309.04) .. (134.95,270.63) -- cycle ;
            \draw    (204.5,268) -- (204.5,278) ;
            \draw    (199.5,273) -- (209.5,273) ;
            \draw    (294.5,182) -- (294.5,192) ;
            \draw    (289.5,187) -- (299.5,187) ;
            \draw    (262.5,271) -- (262.5,281) ;
            \draw    (257.5,276) -- (267.5,276) ;
            \draw    (303.5,308.5) .. controls (285.95,311.43) and (277.91,305.79) .. (268.25,283.27) ;
            \draw [shift={(267.5,281.5)}, rotate = 67.38] [color={rgb, 255:red, 0; green, 0; blue, 0 }  ][line width=0.75]    (10.93,-3.29) .. controls (6.95,-1.4) and (3.31,-0.3) .. (0,0) .. controls (3.31,0.3) and (6.95,1.4) .. (10.93,3.29)   ;
            
            \draw (109.18,66.91) node [anchor=north west][inner sep=0.75pt]   [align=left] {$\theta_0$};
            \draw (419.32,351.63) node [anchor=north west][inner sep=0.75pt]   [align=left] {$\theta_1$};
            \draw (186.18,275.91) node [anchor=north west][inner sep=0.75pt]   [align=left] {$\hat{\theta}_A$};
            \draw (277.5,192) node [anchor=north west][inner sep=0.75pt]   [align=left] {$\hat{\theta}_B$};
            \draw (306.5,296) node [anchor=north west][inner sep=0.75pt]   [align=left] {$\theta^*$};
            \draw (231.5,104) node [anchor=north west][inner sep=0.75pt]   [align=left] {$\mathbf{u}_B$};
            \draw (157.5,227) node [anchor=north west][inner sep=0.75pt]   [align=left] {$\mathbf{u}_A$};
            
            \end{tikzpicture}
            \caption{Visualizing the confidence intervals $\mathbf{u}_A$ and $\mathbf{u}_B$ as two ellipsoid regions, where $\theta^*$ must fall within the intersection of the two regions. We can see that as the volume of both ellipsoids shrink with time $t$, the uncertainty in $\theta^*$ is also decreasing.} 
                            
                  \label{fig:optimism-illus-v2}
            \end{figure} 
    

    As both $t_A$ and $t_B$ approach infinity, any optimistic estimate of $\theta^*$, denoted as $\bar{\theta}^*$, for either A or B, would approach $\hat{\theta}_A$ and $\hat{\theta}_B$ respectively. We see that from Eq. \eqref{eq:u_a_defn} and Eq. \eqref{eq:u_b_defn}, both $\hat{\theta}_A$ and $\hat{\theta}_B$ approach the true parameters $\theta^*$. By the squeezing of both of the overlapping confidence regions as $\min\{t_A, t_B \} \to \infty$, any action the leader takes leads to a tighter possibility of the best responses of the follower under optimism. Thus, with probability $1-\delta$, we have convergence ta unique approximate best response as the best response range $\widetilde{\mathfrak{B}}(a) \in \mathfrak{B}_\varepsilon(a) \equiv [\underline{b_a}, \bar{b}_a]$ is shrinking, and $\bar{b}_a - \underline{b_a} \xrightarrow[]{} 0$ as $\min\{t_A, t_B \} \to \infty$. Consequently, the system converges to a unique best response with increasing sample size, $\mathfrak{B}_\varepsilon(a) \xrightarrow[]{} \{\widetilde{\mathfrak{B}}(a) \}$. The convergence of the algorithm to $\epsilon_B$ completes the definition for the $\epsilon$-approximate Stackelberg equilibrium, as defined in Section \ref{sec:stack-eq-defn},

    \end{proof}


\section{Experimental Results} \label{sec:experiment-results}

We present additional experiments, experimenting with varying environmental and LNPG algorithm parameters. Empirical analysis across multiple settings consistently demonstrates LNPG's convergence to stable equilibrium and its superiority over the baseline algorithm (UCB), particularly evident when minimizing Stackelberg regret.


\begin{figure}[H]
\minipage{0.5\textwidth}
  \includegraphics[width=\linewidth]{figures/reward_323-r-v2.png}
\endminipage\hfill
\minipage{0.5\textwidth}
  \includegraphics[width=\linewidth]{figures/regret_323-r-v2.png}
\endminipage\hfill
\caption*{Parameters: $\kappa=1.0, \theta_0 = 73, \theta_1 = 7, \sigma=3.2$.}
\minipage{0.5\textwidth}
  \includegraphics[width=\linewidth]{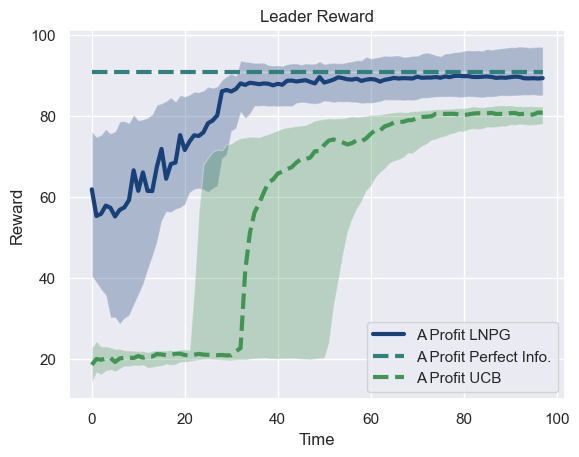}
\endminipage\hfill
\minipage{0.5\textwidth}
  \includegraphics[width=\linewidth]{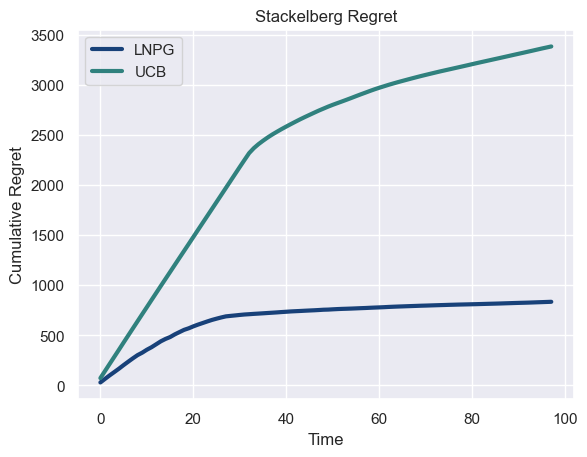}
\endminipage\hfill
\caption*{Parameters: $\kappa=1.0, \theta_0 = 80, \theta_1 = 11, \sigma=3.2$.}
\minipage{0.5\textwidth}
  \includegraphics[width=\linewidth]{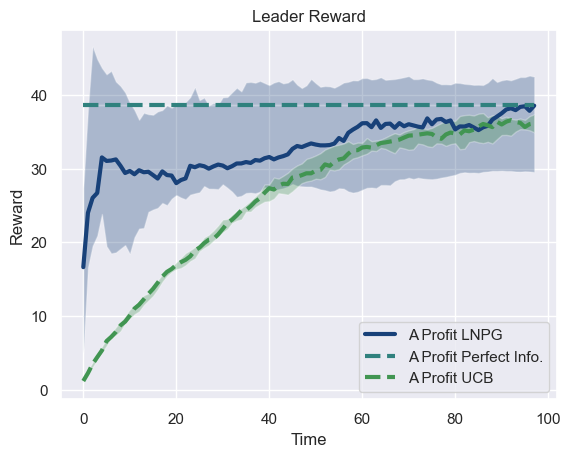}
\endminipage\hfill
\minipage{0.5\textwidth}
  \includegraphics[width=\linewidth]{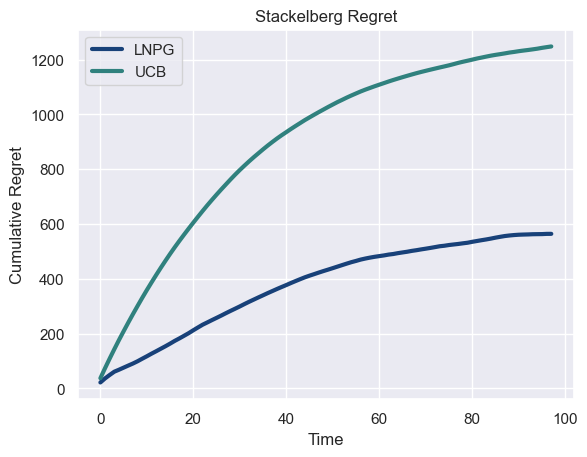}
\endminipage\hfill
\caption*{Parameters: $\kappa=0.05, \theta_0 = 40, \theta_1 = 5, \sigma=5.8$.}
\caption{Regret performance of various experiments. Mean values are computed over 20,000 trials. All shaded areas, denoting confidence intervals, are within a quarter quantile. } \label{fig:loss_plot}
\end{figure}


\begin{table}[H]
    \centering
    \begin{tabular}{||c c||}
     \hline
     Parameter & Value \\ [0.5ex] 
     \hline\hline
     Number of Arms (UCB) & 50 \\
     Pricing Range & $[0, 50]$ \\
     Number of Trials & 20,000 \\
     \hline
    \end{tabular}
    \caption{Global hyperparameters. }
    \label{table:hyperparam}
\end{table}

\end{document}